\keywords{Cartesian-tree matching, extended Burrows--Wheeler transform, construction algorithm, generalized pattern matching} \category{} \title{Extending the Burrows--Wheeler Transform for Cartesian Tree Matching and Constructing It}
	\author{Eric M. Osterkamp}{University of M\"{u}nster, Germany}{osterkamp@uni-muenster.de}{https://orcid.org/0009-0007-0368-5008}{}
	\author{Dominik K\"{o}ppl}{University of Yamanashi, Japan \and \url{https://dkppl.de/}}{dkppl@yamanashi.ac.jp}{https://orcid.org/0000-0002-8721-4444}{JSPS KAKENHI Grant Number 23H04378 and Yamanashi Wakate Grant Number 2291}
\authorrunning{E.\ M.\ Osterkamp and D.\ K\"{o}ppl}
\definecolor{seabornBlue}{RGB}{76,114,176}
\definecolor{seabornGreen}{RGB}{85,168,104}
\definecolor{seabornRed}{RGB}{196,78,82}
\newcommand\Infty{\hstretch{0.5}{\infty}}
\newcommand{\ctmatch}{\approx}
\newcommand{\cteq}{=_{\omega}}
\newcommand{\ctprec}{\prec_{\omega}}
\newcommand{\ctpeq}{\preceq_{\omega}}
\newcommand{\mpdenc}[1]{\langle #1 \rangle}
\newcommand{\rpdenc}[1]{\langle #1 \rangle_\textup{r}}
\newcommand{\rtsenc}[1]{\llbracket #1 \rrbracket_{}}
\newcommand{\lcpinfty}[1]{\textup{LCP}_{#1}^\infty}
\newcommand{\ftarr}[1]{\textup{F}_{#1}}
\newcommand{\ltarr}[1]{\textup{L}_{#1}}
\newcommand{\carr}[1]{\textup{CA}_{#1}}
\newcommand{\icarr}[1]{\textup{CA}_{#1}^{-1}}
\newcommand{\alltexts}[2]{#1_{1},...,#1_{#2}}
\newcommand{\extsigma}{\Sigma_{\$}}
\newcommand{\rank}[3]{\textup{rank}_{#1}(#2,#3)}
\newcommand{\select}[3]{\textup{select}_{#1}(#2,#3)}
\newcommand{\rangecount}[5]{\textup{rnkcnt}_{#1}(#2,#3,#4,#5)}
\newcommand{\maxi}[3]{\textup{MI}_{#1}(#2,#3)}
\newcommand{\lcplength}[2]{\textup{lcp}(#1,#2)}
\newcommand{\lcpcount}[2]{\textup{lcp}^\infty(#1,#2)}
\newcommand{\inst}[3]{\textup{insert}_{#1}(#2,#3)}
\newcommand{\dlt}[2]{\textup{delete}_{#1}(#2)}
\newcommand{\flarr}[1]{\textup{FL}_{#1}}
\newcommand{\flmap}[2]{\textup{FL}_{#1}[#2]}
\newcommand{\lfarr}[1]{\textup{LF}_{#1}}
\newcommand{\lfmap}[2]{\textup{LF}_{#1}[#2]}
\newcommand{\ctree}[1]{\textup{ct}(#1)}
\newcommand{\crange}[2]{\textup{CR}_{#1}(#2)}
\newcommand{\rnv}[4]{\textup{RNV}_{#1}(#2,#3,#4)}
\newcommand{\rta}[2]{\textup{Rot}(#1,#2)}
\newcommand{\conj}[2]{C_{#1}({#2})}
\newcommand{\wurz}[1]{\textup{root}(#1)}
\newcommand{\epo}[1]{\textup{exp}(#1)}
\newcommand{\prev}[2]{\textup{prev}_{#1}(#2)}
\newcommand{\TT}{\mathcal{T}}
\newcommand{\ST}{\mathcal{S}}
\newcommand{\RT}{\mathcal{R}}
\newcommand{\cnt}[2]{\textup{cnt}_{#1}(#2)}
\newcommand{\plcp}[2]{\textup{plcp}^\infty_{#1}(#2)}
\newcommand{\slcp}[2]{\textup{slcp}^\infty_{#1}(#2)}
\newcommand*{\Count}{\textsc{Count}}
\newtheorem{thm}{Theorem}[section]
\newtheorem{problem}{Problem}
\newtheorem{cor}[thm]{Corollary}
\newtheorem{lem}[thm]{Lemma}
\theoremstyle{remark}
\newtheorem{rem}[thm]{Remark}
\theoremstyle{definition}
\newenvironment{rdir}{\noindent\textbf{($\Rightarrow$).~}}{}
\newenvironment{ldir}{\noindent\textbf{($\Leftarrow$).~}}{}
\newenvironment{cs1}{\noindent\textbf{Case 1.~}}{}
\newenvironment{cs2}{\noindent\textbf{Case 2.~}}{}
\newenvironment{cs21}{\noindent\textbf{Case 2.1.~}}{}
\newenvironment{cs22}{\noindent\textbf{Case 2.2.~}}{}
\newenvironment{cs23}{\noindent\textbf{Case 2.3.~}}{}
\newenvironment{cs3}{\noindent\textbf{Case 3.~}}{}
\newcommand*{\Block}[1]{\noindent \textbf{#1}\ }
\begin{document}
	\maketitle
	\begin{abstract}
		Cartesian tree matching is a form of generalized pattern matching where a substring of the text matches with the pattern if they share the same Cartesian tree.
		This form of matching finds application for time series of stock prices and can be of interest for melody matching between musical scores.
		For the indexing problem, the state-of-the-art data structure is a Burrows--Wheeler transform based solution due to [Kim and Cho, CPM'21], which uses nearly succinct space and can 
		count the number of substrings that Cartesian tree match with a pattern in time linear in the pattern length.
		The authors address the construction of their data structure with a straight-forward solution that, however, requires pointer-based data structures,
		which asymptotically need more space than compact solutions [Kim and Cho, CPM'21, Section~A.4].
		We address this bottleneck by a construction that requires compact space and has a time complexity linear in the product of the text length with some logarithmic terms.
		Additionally, we can extend this index for indexing multiple circular texts in the spirit of the extended Burrows--Wheeler transform without sacrificing the time and space complexities.
		We present this index in a dynamic variant, where we pay a logarithmic slowdown and need compact space for the extra functionality that we can incrementally add texts.
		Our extended setting is of interest for finding repetitive motifs common in the aforementioned applications, independent of offsets and scaling.
	\end{abstract}

	\section{Introduction}\label{section:intro}

String matching is ubiquitous in computer science, and its variations are custom-made to solve a wide variety of problems.
We here focus on a special kind of variation called \emph{substring consistent equivalence relation} (\emph{SCER})~\cite{matsuoka16generalized}.
Two strings $X$ and $Y$ are said to SCER-match if they have the same length and all substrings of equal length starting at the same text position SCER-match, i.e.,
$X[i..j]$ SCER-matches with $Y[i..j]$ for all $1 \le i \le j \le |X| = |Y|$.
A specific instance of SCER-matching is \emph{order-preserving matching}~\cite{kubica13orderpreserving,kim14orderpreserving}, which has been studied for the analysis of numerical time series.
The aim of order-preserving matching is to match two strings if the relative order of the symbols in both strings is the same.
Order-preserving matching therefore can find matches independently of value offsets and scaling.

Since order-preserving matching takes the global order of the symbols in a string into account, it may be too strict in applications that 
primarily consider the local ordering of ranges partitioned by the peaks.
In fact, time series of stock prices are such a case, where a common pattern called the \emph{head-and-shoulder}~\cite{fu07stock} involves one absolute peak (head) and neighboring local peaks (shoulders),
where each of these neighboring peaks can be treated individually to match similar head-and-shoulder patterns with slightly changed local peaks.
For such a reason, Park et al.~\cite{park20patternsperiods} proposed \emph{Cartesian tree matching}.
Cartesian tree matching relaxes the notion of order-preserving matching in the sense that an order-preserving match is always a Cartesian tree match, but not necessarily the other way around.
For instance, the two strings of digits $1647253$ and $2537164$ fit into the head-and-shoulder pattern, do not order-preserving match, but Cartesian tree match.
For that, Cartesian tree matching compares the Cartesian trees of two strings to decide whether they match.
The \emph{Cartesian tree}~\cite{vuillemin80unifying} is a binary tree built upon an array of numbers.
If all numbers are distinct, it is the min-heap whose in-order traversal retrieves the original array.
Since its inception, Cartesian tree matching and variations thereof have attracted interest from researchers,
whose studies include multiple pattern matching~\cite{park20patternsperiods,song21fastmatchingalg}, approximate matching~\cite{auvray23approximateswaps,kim24approximate}, substring matching~\cite{faro22ctmsubstring}, subsequence matching~\cite{oizumi22subsequence}, indeterminate matching~\cite{gawrychowski20indeterminate}, cover matching~\cite{kikuchi20covers}, and palindromic matching~\cite{funakoshi24computing}.

In addition to stock prices, applications of generalized pattern matching can also be found in melody matching between musical scores. 
Pattern matching music scores such as differences, directions, or magnitudes have been studied~\cite{hiraga97structural}.
However, the detection of repetitions in a piece of music has also been considered of importance~\cite{foster12method}.
The question therefore is whether we can find repetitive substrings that Cartesian tree match with a repetition of a set of melodic motifs (i.e., input texts).
For that to be efficient, we want to index these motifs.

An index for Cartesian tree matching of a text string $T$ is a data structure built upon $T$ that can report the number of substrings of $T$ that Cartesian tree match a given pattern.
Given $T$ has $n$ symbols drawn from an integer alphabet of size~$\sigma$,
Park et al.~\cite{park20patternsperiods} and Nishimoto et al.~\cite{nishimoto21heaps} proposed indexes for Cartesian tree matching of $T$ that both occupy $O(n \lg n)$ bits of space and are constructed with $O(n \lg n)$ and $O(n \sigma \lg n)$ additional bits of space, respectively. 
Park et al.~\cite[Section~5.1]{park20patternsperiods} achieve $O(m \lg \sigma)$ time and
Nishimoto et al.~\cite[Section~5.1]{nishimoto21heaps} $O(m (\sigma + \lg m) + occ)$ time for answering count queries, where $occ$ is the number of occurrences of the pattern of length~$m$.

Adapting Ferragina and Mantaci's FM-index for exact string matching~\cite{ferragina00fmindex}, Kim and Cho~\cite{kim21compact} proposed an index that occupies $3n + o(n)$ bits of space,
and answers a count query in $O(m)$ time for a pattern of length~$m$.
For construction, they proposed a straight-forward solution that takes as input the Cartesian suffix tree of Park et al.~\cite{park20patternsperiods},
which, however, requires $O(n \lg n)$ additional bits of space.

We address two goals in this paper. The first is a construction algorithm for Kim and Cho's index that takes compact working space.
Second, while all aforementioned indexes can partially address the problem by indexing multiple texts, 
it is hard to detect whether the pattern is a repetition of one of the input texts, which is of interest in case of indexing melodic motifs that can repeat.
Here, our goal is an index that can find such matches, even if they start with different offsets of the same repetition.
In concrete words, our aim is to index multiple texts for Cartesian pattern matching.
The search space for a pattern are the texts that are considered to be infinite concatenations with themselves. 

In this paper, we propose an extension of the index of Kim and Cho~\cite{kim21compact}
for Cartesian tree matching with techniques of the extended Burrows--Wheeler transform~\cite{mantaci07ebwt},
and call the resulting data structure the cBWT index.
We show that we can compute the cBWT index in $O(n \frac{\lg \sigma \lg n}{\lg \lg n})$ time and $O(n \lg \sigma)$ bits of space, where $n$ is the total length of all texts to index.
We can also compute the original index of Kim and Cho within the same complexities.
Our ideas stem from Hashimoto et al.'s~\cite{hashimoto22computing} and Iseri et al.'s~\cite{iseri24breaking} construction algorithms of indexes for parameterized matching, 
which we recently extended for multiple circular texts~\cite{osterkamp24extending}.
	During construction, the cBWT index supports the backward search and count queries for pattern strings in $O(m \frac{\lg \sigma \lg n}{\lg\lg n})$ time, where $m$ is the pattern length.

	\section{Preliminaries}\label{section:preliminaries}
	
	Let $\lg = \log_2$ denote the logarithm to base two.
	We assume a random access model with word size $\Omega(\lg n)$, where $n$ denotes the input size.
	An interval $\{i,i+1,..,j-1,j\}$ of integers is denoted by $[i..j]$, where $[i..j] = \emptyset$ if $j < i$.
	
\Block{Strings.}
	Let $\Sigma$ denote an \emph{alphabet}.
	We call elements of $\Sigma$ \emph{symbols}, a sequence of symbols from $\Sigma$ a \emph{string over $\Sigma$}, and denote the set of strings over $\Sigma$ by $\Sigma^*$.
	Let $U,V,W\in \Sigma^*$.
	The \emph{concatenation} of $U$ and $V$ is denoted by $U\cdot V$ or $UV$.
	We write $U^k$ if we concatenate $k$ instances of $U$ for a non-negative integer $k$, and $U^\omega$ for the string obtained by infinitely concatenating $U$.
	We call $U$ \emph{primitive} if $U = X^k$ implies $U = X$ and $k = 1$.
	It is known that for every $U \in \Sigma^*$ there exists a unique primitive $X \in \Sigma^*$ and a unique integer $k$ such that $U = X^k$, denoted by $\wurz{U}$ and $\epo{U}$, respectively.
	If $X = UVW$, then $U$ is called a \emph{prefix}, $W$ a \emph{suffix}, and $U,V,W$ \emph{substrings} of $X$.
	The \emph{length} $\left|U\right|$ of $U$ is the number of symbols in $U$, $\lcplength{U}{V}$ reports the length of the longest common prefix of $U$ and $V$, $\varepsilon$ denotes the unique string of length $0$, and we define $\Sigma^+ = \Sigma^*-\{\varepsilon\}$.
	Let $X \in \Sigma^+$ and $i,j \in [1..\left|X\right|]$.
	Then $X[i]$ denotes the $i$-th symbol in $X$, and $X[i..j] = X[i]\cdots X[j]$, where $X[i..j] = \varepsilon$ if $j < i$, $X[..i] = X[1..i]$, and $X[i..] = X[i..\left|V\right|]$.
	For notational convenience, $X[..k] = \varepsilon$ if $k \leq 0$, and $V[k..] = \varepsilon$ if $k \geq \left|V\right| + 1$.
	Let $\rta{X}{0} = X$ and $\rta{X}{k+1} = \rta{X}{k}[2..]\cdot\rta{X}{k}[1]$ for each non-negative integer $k$, i.e., $\rta{X}{k}$ denotes the $k$-th \emph{left rotation} of $X$.
	The left rotations of $X$ for $k \in [0..\left|X\right|-1]$ are the \emph{conjugates} of $X$.
	We write $U < V$ if and only if (a) $U = V[..\left|U\right|]$ and $\left|U\right| < \left|V\right|$ or (b) $U[\lcplength{U}{V} +1] < V[\lcplength{U}{V} +1]$.

\Block{Queries on Strings.}
	Let $V \in \Sigma^+$ be a (dynamic) string, $c \leq d \in \Sigma$, $i \in [1..\left|V\right|+1]$, $j \in [0..\left|V\right|]$ and $k \in [1..\left|V\right|]$.
	Then $\inst{V}{i}{c}$ inserts the symbol $c$ at position $i$ of $V$, $\dlt{V}{k}$ deletes the $k$-th entry of $V$, $\rank{c}{V}{j}$ returns the number of occurrences of $c$ in $V[..j]$, $\rangecount{V}{i}{j}{c}{d}$ returns $\left|\{x \in [i..j] \mid c\leq V[x] \leq d\}\right|$, $\select{c}{V}{i}$ returns the index of the $i$-th occurrence of $c$ in $V$ if $i \leq \rank{c}{V}{\left|V\right|}$, $\rnv{V}{i}{j}{c}$ returns the smallest value in $V[i..j]$ larger than $c$ if it exists, and $\maxi{V}{j}{c}$ returns the maximal interval $[\ell..r]$ such that $0 \leq \ell \leq j \leq r \leq \left|V\right|$ and $V[x] \geq c$ for every $x \in [\ell+1..r]$.
\begin{figure}[tbp]
		\centering
		\begin{tabular}{|c|c|c|c|c|c|c|c|c|}
			\hline
			$i$&1&2&3&4&5&6&7&8\\
			\hline
			$V[i]$&$\mathtt{7}$&$\mathtt{14}$&$\mathtt{5}$&$\mathtt{1}$&$\mathtt{11}$&$\mathtt{27}$&$\mathtt{11}$&$\mathtt{7}$\\
			\hline
		\end{tabular}
		\caption{Example integer string to illustrate queries. Here, $\rank{11}{V}{5}= 1$, $\rangecount{V}{2}{7}{7}{14} = 4$, $\select{11}{V}{2}= 7$, $\rnv{V}{1}{5}{8} = 11$, and $\maxi{V}{5}{6} = [4..8]$.}
		\label{fig:queries}
	\end{figure}
	See Figure~\ref{fig:queries} for examples.
	We represent strings by the following dynamic data structure, which supports the aforementioned operations and queries.
	
	\begin{lem}[{\cite[Lemma~4]{iseri24breaking}}]\label{lemma:dynamicstringrmq}
		A dynamic string of length $n$ over $[0..\sigma]$ with $\sigma \leq n^{O(1)}$ can be stored in a data structure occupying $O(n \lg \sigma)$ bits of space, supporting insertion, deletion and the queries \textup{access}, \textup{rank}, \textup{rnkcnt}, \textup{select}, \textup{RNV} and \textup{MI} in $O(\frac{\lg \sigma \lg n}{\lg \lg n})$ time.
	\end{lem}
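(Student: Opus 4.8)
The plan is to store the string in a dynamic wavelet tree over the alphabet $[0..\sigma]$ whose nodes are dynamic bit vectors, so that each of the six operations reduces to $O(\lg\sigma)$ operations on those bit vectors --- essentially one per level of the tree. The base ingredient is a dynamic bit vector of length at most $n$ occupying $n + o(n)$ bits that supports \textup{access}, \textup{rank}, \textup{select}, insertion and deletion in $O(\frac{\lg n}{\lg\lg n})$ time, which I expect to be the main obstacle and which I would use as a known building block from the literature on optimal dynamic rank/select. Its architecture is a leaf-oriented balanced tree over the bit positions with branching factor $f = \Theta(\lg^{\varepsilon} n)$, hence of height $O(\log_f n) = O(\frac{\lg n}{\lg\lg n})$: each leaf is a block of $\Theta(\lg n)$ bits on which \textup{rank} is a \textup{popcount} and \textup{select} a lookup into a universal table of size $n^{o(1)}$, both in $O(1)$ time; and each internal node carries a small searchable-partial-sums gadget over the subtree sizes and $1$-counts of its $f$ children that performs the per-level navigation of \textup{rank} and \textup{select} in $O(1)$ time and absorbs the $O(1)$-amortized rebalancing work per split or merge. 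Since the length changes by one per update, a periodic global rebuild whenever it crosses a power of two keeps every $\lg n$ factor valid, and $\sigma \leq n^{O(1)}$ makes the $\Omega(\lg n)$-bit machine word wide enough for all counters and lets the tables be built in $O(n)$ time and space.

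On top of this I would build the wavelet tree: a balanced binary tree of height $\lceil \lg(\sigma+1)\rceil = O(\lg\sigma)$ over the alphabet, where the node responsible for an alphabet interval $I$ stores, for each text position whose symbol lies in $I$, one bit indicating which half of $I$ that symbol falls into. Each position contributes one bit per level, so the structure occupies $O(n\lg\sigma)$ bits. Then \textup{access}, \textup{rank} and \textup{select} are the textbook wavelet-tree descents and ascents, touching one bit vector per level; inserting or deleting a symbol updates the $O(\lg\sigma)$ bit vectors along that symbol's root-to-leaf path; and $\rangecount{V}{i}{j}{c}{d}$ is the standard range-counting traversal --- carry the image of the position range $[i..j]$ downward, prune a node whose alphabet interval is disjoint from $[c..d]$, add the current range length when the interval is contained in $[c..d]$, and otherwise recurse into both children --- so that only the $O(\lg\sigma)$ nodes on the root-to-leaf paths of $c$ and of $d$ are visited, each with $O(1)$ \textup{rank} calls. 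For $\rnv{V}{i}{j}{c}$ I would invoke the known $O(\lg\sigma)$-node range-next-value traversal of a wavelet tree, which here has to report only the value reached and hence needs no remapping back up the tree. All of these run in $O(\frac{\lg\sigma\lg n}{\lg\lg n})$ time.

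For $\maxi{V}{j}{c}$ I would observe that the answer is $[\ell..r]$ where $\ell$ is the last position in $V[..j]$ carrying a symbol $< c$ (or $0$ if none) and $r$ is one less than the first position in $V[j+1..]$ carrying a symbol $< c$ (or $\left|V\right|$ if none). I would answer both of these ``nearest position carrying a symbol below the threshold'' queries with an additional dynamic balanced tree over the positions $1..\left|V\right|$ in which every node stores the minimum symbol of its subtree --- updatable in $O(\frac{\lg n}{\lg\lg n})$ time per insertion or deletion, in $O(n\lg\sigma)$ bits --- by walking up from the leaf at position $j$ until a sibling subtree with stored minimum below $c$ appears and then descending into the first such child at every node, identifying the correct child in $O(1)$ time per node just as in the bit vector. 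Hence every listed operation runs within $O(\frac{\lg\sigma\lg n}{\lg\lg n})$ time while the whole structure occupies $O(n\lg\sigma)$ bits. The delicate part is the worst-case-optimal dynamic bit vector; everything above it is a composition of well-known wavelet-tree traversals together with the auxiliary minimum tree for \textup{MI}.
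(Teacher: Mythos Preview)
The paper does not prove this lemma; it is imported verbatim as \cite[Lemma~4]{iseri24breaking} and used as a black box throughout. There is therefore no in-paper argument to compare your sketch against. Your outline --- a dynamic wavelet tree layered over optimal $O(\lg n/\lg\lg n)$-time dynamic bit vectors, with the standard root-to-leaf reductions for \textup{access}, \textup{rank}, \textup{select}, \textup{rnkcnt}, and \textup{RNV}, plus an auxiliary subtree-minimum tree for \textup{MI} --- is a reasonable reconstruction and is in the spirit of how such results are obtained in the cited source.

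The one place where your write-up is looser than the rest is the per-node $O(1)$ child selection in the auxiliary min-tree for \textup{MI}: at branching factor $f=\Theta(\lg^{\varepsilon} n)$ you must find, among $f$ stored $\Theta(\lg\sigma)$-bit minima, the first one below a given threshold. That is not literally the same primitive as the binary rank/select inside the bit-vector nodes, so ``just as in the bit vector'' is too quick; it needs a word-parallel comparison (broadcast the threshold, subtract, extract sign bits, find the first set bit) or a fusion-node argument. This is doable within the word-RAM assumptions since $\sigma\le n^{O(1)}$, but it deserves an explicit sentence. With that filled in, your sketch would stand on its own; the paper itself simply defers to the reference.
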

	
	\Block{Alphabet.}
	Throughout, we will work with the integer alphabet $\Sigma = [0..\sigma]$, where $\sigma \leq n^{O(1)}$, and a special symbol $\$ \not\in \Sigma$ stipulated to be smaller than any symbol from $\Sigma$.
	The special symbol is motivated by the construction algorithm, and as a delimiter when a string should not be considered circular, as in the index of Kim and Cho~\cite{kim21compact}.
	Let $\extsigma = \Sigma \cup \{\$\}$.
	
	\Block{Cartesian Tree Matching.}
	The \emph{Cartesian tree} $\ctree{V}$ of a string $V \in \extsigma^*$ is a binary tree defined as follows. 
	If $V = \varepsilon$, then $\ctree{V}$ is the empty tree. 
	If $V \neq \varepsilon$, let $i$ denote the position of the smallest symbol in $V$, where ties are broken with respect to text position. 
	Then $\ctree{V}$ has $V[i]$ as its root, $\ctree{V[..i-1]}$ as its left subtree and $\ctree{V[i+1..]}$ as its right subtree.
	We say that two strings $U,V \in \extsigma^*$ \emph{Cartesian tree match (ct-match)} if and only if $\ctree{U} = \ctree{V}$, and write $U \ctmatch V$.
	For instance, in \cref{fig:cartesian_tree} the substring $T[3..6] = \mathtt{5178}$ of $T = \mathtt{625178265}$ ct-matches $P = \mathtt{7347}$ while $T[4..7] =\mathtt{1782}$ does not.

	\begin{figure}
		\centering
		\begin{subfigure}{.35\textwidth}
			\begin{forest}
				[1
				[2
				[6
				[,phantom]
				[,phantom]
				]
				[5
				[,phantom]
				[,phantom]
				]
				]
				[2
				[7
				[,phantom]
				[8]
				]
				[5
				[6]
				[,phantom]
				]
				]
				]
			\end{forest}
		\caption{$\ctree{T}$.
		}
		\end{subfigure}\begin{subfigure}{.2\textwidth}
			\begin{forest}
				[3
				[7
				[,phantom]
				[,phantom]
				]
				[4
				[,phantom]
				[7]
				]
				]
			\end{forest}
			\caption{$\ctree{P}$.
			}
		\end{subfigure}\begin{subfigure}{.2\textwidth}
			\begin{forest}
				[1
				[5
				[,phantom]
				[,phantom]
				]
				[7
				[,phantom]
				[8]
				]
				]
			\end{forest}
		\caption{\ctree{$T[3..6]$}.
		}
		\end{subfigure}\begin{subfigure}{.2\textwidth}
			\begin{forest}
				[1
				[,phantom
				[,phantom]
				[,phantom
				[,phantom]
				[,phantom]
				]
				]
				[2
				[7
				[,phantom]
				[8]
				]
				[,phantom
				[,phantom]
				[,phantom]
				]
				]
				]
			\end{forest}
		\caption{\ctree{$T[4..7]$}.
		}
		\end{subfigure}\caption{Cartesian trees of $T = \mathtt{625178265}$, $P = \mathtt{7347}$, $T[3..6] = \mathtt{5178}$, and $T[4..7] = \mathtt{1782}$.
		Here, $P \ctmatch T[3..6]$ and $P\not\ctmatch T[4..7]$.
	}
	\label{fig:cartesian_tree}
	\end{figure}
	\Block{Parent Distance Encoding.}
	We give a variant of Park et al.'s~\cite{park20patternsperiods} encoding scheme for representing Cartesian trees, which reduces the computation of a ct-match to checking if the encoded strings exactly match.
	Let $\Infty \not\in \extsigma$ denote a symbol larger than any integer.
	The \emph{parent distance encoding} $\mpdenc{V}$ of $V \in \extsigma$ is a string of length $\left|V\right|$ over $\extsigma \cup \{\Infty\}$ such that 
	\[
	\mpdenc{V}[i] =
	\begin{cases*}
		\Infty & \text{if $\$ \neq V[i] < \min\{V[j] \mid j \in [..i-1]\}$,} \\
		V[i] & \text{if $V[i] = \$$, }\\
		i - \max \{j \in [.. i-1] \mid V[j] < V[i]\} & \text{otherwise,}
	\end{cases*}
	\]
	for each $i \in [1..\left|V\right|]$.
	Note that $\mpdenc{V}[..i] = \mpdenc{V[..i]}$ for each $V\in\extsigma^+$ and $i \in [1..\left|V\right|]$.
	For example, $\mpdenc{\mathtt{41327\$3}} = \mathtt{\Infty \Infty 121\$1}$.
	
	\begin{lem}[{\cite[Theorem~1]{park20patternsperiods}}]\label{lem:encctmatch}
		Let $U,V \in \Sigma^*$. 
		Then $U \ctmatch V \Leftrightarrow \mpdenc{U} = \mpdenc{V}$. 
	\end{lem}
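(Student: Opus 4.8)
The plan is to prove both implications simultaneously by induction on the common length of the two strings. First I dispose of the case $|U| \neq |V|$: then $\mpdenc{U}$ and $\mpdenc{V}$ have different lengths, while $\ctree{U}$ and $\ctree{V}$ have different numbers of nodes, so both sides of the claimed equivalence are false. Hence assume $|U| = |V| = n$; the base case $n = 0$ is immediate, as $\mpdenc{\varepsilon} = \varepsilon$ and $\ctree{\varepsilon}$ is the empty tree. For the inductive step, let $W \in \{U,V\}$ and let $m$ be the position of the leftmost smallest symbol of $W$, so that $\ctree{W}$ has its root at in-order rank $m$ with left subtree $\ctree{W[..m-1]}$ and right subtree $\ctree{W[m+1..]}$. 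The observation driving the proof is that $m$ is recoverable from $\mpdenc{W}$ alone: the entries of $\mpdenc{W}$ equal to $\Infty$ are exactly the strict prefix-minima of $W$, whose symbols strictly decrease from left to right, so $m$ is the position of the last $\Infty$ entry. Since $m$ is also recoverable from $\ctree{W}$ as (one plus) the size of the left subtree of the root, both $\mpdenc{U} = \mpdenc{V}$ and $\ctree{U} = \ctree{V}$ force one and the same split position $m$ for $U$ and $V$.

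It then remains to express $\mpdenc{W}$ and $\ctree{W}$ in terms of the two halves $W[..m-1]$ and $W[m+1..]$. For the left half this is free: $\mpdenc{W}[..m-1] = \mpdenc{W[..m-1]}$ is precisely the prefix property noted after the definition of the parent distance encoding, and $\ctree{W[..m-1]}$ is the left subtree of $\ctree{W}$ by definition. The substantive step is the right half, for which I would prove the following correspondence: for every $j \in [m+1..n]$ the value $\mpdenc{W}[j]$ lies in $[1..j-m]$ — it is never $\Infty$ because $W[m]$ is a smallest symbol, and it is at most $j-m$ because $W[m] \le W[j]$ exhibits a preceding position no farther left than $m$ whose symbol is $\le W[j]$ — and moreover
\[
\mpdenc{W[m+1..]}[\, j - m \,] \;=\;
\begin{cases*}
\mpdenc{W}[j] & if $\mpdenc{W}[j] < j - m$,\\
\Infty & if $\mpdenc{W}[j] = j - m$.
\end{cases*}
\]
The reason is that the nearest preceding position of $j$ in $W$ whose symbol is $\le W[j]$ either already lies inside $W[m+1..]$, where it remains the nearest such position and the distance is unchanged, or it is $m$ itself, in which case every symbol of $W[m+1..\,j-1]$ exceeds $W[j]$ and so $W[j]$ is a strict prefix-minimum of $W[m+1..]$. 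As the map sending $x$ to $x$ if $x < j-m$ and to $\Infty$ if $x = j-m$ is a bijection $[1..j-m] \to [1..j-m-1] \cup \{\Infty\}$, the entries $\mpdenc{W}[j]$ and $\mpdenc{W[m+1..]}[j-m]$ determine each other once $j$ and $m$ are fixed. Combining this with the left-half identity shows that, with $m$ already common to $U$ and $V$, the equality $\mpdenc{U} = \mpdenc{V}$ is equivalent to the conjunction of $\mpdenc{U[..m-1]} = \mpdenc{V[..m-1]}$ and $\mpdenc{U[m+1..]} = \mpdenc{V[m+1..]}$, and likewise $\ctree{U} = \ctree{V}$ is equivalent to the conjunction of $\ctree{U[..m-1]} = \ctree{V[..m-1]}$ and $\ctree{U[m+1..]} = \ctree{V[m+1..]}$. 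The induction hypothesis applied to the shorter pairs $U[..m-1], V[..m-1]$ and $U[m+1..], V[m+1..]$ then matches the two characterisations and finishes the proof.

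I expect the right-half correspondence to be the one place requiring care: one has to account precisely for how the global minimum at position $m$ absorbs exactly those positions $j > m$ whose nearest preceding $\le$-neighbour would otherwise be undefined within $W[m+1..]$, and to keep the inequalities straight — strict for detecting a (new) root and hence an $\Infty$ entry, non-strict for measuring the parent distance — so that they agree with the left-to-right tie-breaking used in the definition of $\ctree{\cdot}$. The remaining pieces — reading $m$ off the encoding, the prefix property, and reassembling the two halves — are routine.
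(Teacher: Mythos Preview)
The paper does not prove this lemma itself; it is imported from Park et al.\ \cite[Theorem~1]{park20patternsperiods} and stated without argument, so there is no in-paper proof to compare against. Your inductive argument is correct and self-contained: recovering the root position $m$ as the last $\Infty$ of the encoding, splitting both $\mpdenc{W}$ and $\ctree{W}$ at $m$, and showing that $\mpdenc{W}[j]$ and $\mpdenc{W[m+1..]}[j-m]$ determine each other for $j>m$ is exactly the right decomposition, and the induction closes cleanly.

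One remark on the definition you are working against. The paper's third clause for $\mpdenc{V}[i]$ is written with a strict inequality $V[j] < V[i]$, but the paper's own examples force the non-strict reading $V[j] \le V[i]$ that you use: for instance $\mpdenc{\mathtt{4478}} = \Infty 111$ in \cref{fig:cbwt} has $\mpdenc{\mathtt{4478}}[2] = 1$ even though no earlier symbol is strictly smaller than $\mathtt{4}$, and $\mpdenc{T_1^\omega[..12]}[5] = 3$ in \cref{fig:conjugatearray} points back to the earlier equal occurrence of $\mathtt{1}$. Under the strict reading the encoding would be undefined at tied positions and the lemma itself would fail (e.g.\ $\mathtt{11} \ctmatch \mathtt{12}$ but their encodings would differ). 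So your choice of $\le$ is the intended one, and with it your right-half correspondence and the whole induction go through as written.
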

	\Block{Problem Statement.}
	We are interested in a solution to the following problem.
	\begin{problem}[\Count{}]\label{problem:count}
		Given $\emptyset \neq \TT \subset \Sigma^+$ and $P \in \Sigma^*$, count each of the conjugates of the strings in $\TT$ whose infinite iteration has a prefix ct-matching $P$.
	\end{problem}
	Throughout, let $\emptyset \neq \TT = \{\alltexts{T}{d}\} \subset \Sigma^+$. 
	Our running example consists of the strings $T_1 = \mathtt{512}$, $ T_2 = \mathtt{5363}$ and $T_3 = \mathtt{4478}$ over $\Sigma = [0..8]$.
	Given our running example, the solution to \Count{} for $P_1 = \mathtt{643}$ and $P_2 = \mathtt{5634}$ is $0$ and $2$, respectively.
	Here, $P_2 \ctmatch \rta{T_3}{2}^\omega[..4] \ctmatch \rta{T_1}{2}^\omega[..4]$ since $\mpdenc{P_2} = \mathtt{\Infty1\Infty1} = \mpdenc{\rta{T_3}{2}^\omega[..4]} = \mpdenc{\rta{T_1}{2}^\omega[..4]}$.

	Due to space limitations, the proofs of some technical claims in the following sections are deferred to the appendix.
	
	\section{Compact Index}\label{section:fmindex}	
	
	Let $n = \left|T_1\cdots T_d\right|$, $n_k = \left|T_k\right|$ for each $k \in [1..d]$, and $\conj{\TT}{i} = \rta{T_j}{i - 1 - \sum_{k = 1}^{j-1}n_k}$ for each $i \in [1..n]$, where $j = \min\{h \in [1..d] \mid \sum_{k = 1}^{h}n_k \geq i\}$, i.e., we identify each conjugate of each text $\alltexts{T}{d}$ with its start position inside the concatenation $T_1\cdots T_d$, such that we give them ranks from $1$ to $n$.
In what follows, we put these ranks in a specific order by a permutation of $[1..n]$
	such that the permuted ranks of all conjugates with prefixes of their infinite concatenation ct-matching a pattern form an interval $[\ell..r] \subseteq [1..n]$.
\subsection{Conjugate Array}\label{subsection:conjugatearray}
	We express the permutation of the ranks of all conjugates by the so-called conjugate array, which we will subsequently define.
	To achieve this, we extend the ideas of Mantaci et al.~\cite{mantaci07ebwt} to Cartesian tree matching, and introduce a preorder on $\extsigma^+$.
	For notational convenience, we define the \emph{rotational parent distance encoding}
	$\rpdenc{V}$ of $V \in \extsigma^+$ by $\rpdenc{V} = \mpdenc{V^2}[\left|V\right|+1..]$.

\begin{restatable}{lem}{rotaparent}\label{lem:rotaencodingmatch}
		Let $V, U \in \extsigma^+$. 
		Then $\rpdenc{V} = \rpdenc{U}$ if and only if $V^2 \ctmatch U^2$.
	\end{restatable}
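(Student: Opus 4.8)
The plan is to pass to parent distance encodings by means of \cref{lem:encctmatch}. The starting point is the identity $\mpdenc{V^2} = \mpdenc{V}\,\rpdenc{V}$, which holds because $\mpdenc{V^2}[..|V|] = \mpdenc{V^2[..|V|]} = \mpdenc{V}$ by the prefix property of $\mpdenc{\cdot}$ recorded above, while $\mpdenc{V^2}[|V|+1..] = \rpdenc{V}$ by definition. By \cref{lem:encctmatch}, $V^2 \ctmatch U^2$ is therefore equivalent to $\mpdenc{V}\,\rpdenc{V} = \mpdenc{U}\,\rpdenc{U}$, and the direction ``$V^2 \ctmatch U^2 \Rightarrow \rpdenc{V} = \rpdenc{U}$'' is then immediate: equality of the two concatenations forces $|V| = |U|$, and $\rpdenc{V}$ and $\rpdenc{U}$ are the equally long suffixes of equal strings. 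The whole weight of the lemma thus rests on the converse, for which it suffices to prove the reconstruction statement that $\rpdenc{V} = \rpdenc{U}$ implies $\mpdenc{V} = \mpdenc{U}$ whenever $|V| = |U|$.

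I would establish this by showing that $\mpdenc{V}[i]$ is a fixed function of $\rpdenc{V}[i]$, of $i$, and of $m := |V|$ alone: namely $\mpdenc{V}[i] = \$$ if $\rpdenc{V}[i] = \$$; $\mpdenc{V}[i] = \rpdenc{V}[i]$ if $\rpdenc{V}[i]$ is an integer smaller than $i$; and $\mpdenc{V}[i] = \Infty$ otherwise. To verify this, inspect position $m+i$ of $V^2$, which carries $V[i]$ and has $V \cdot V[..i-1]$ to its left. If $V[i] = \$$ both encodings read $\$$ at position $i$, so assume $V[i] \neq \$$. Because the first copy of $V$ already contains a symbol no larger than $V[i]$ (for instance its minimum), the symbol at position $m+i$ is never strictly smaller than everything to its left, so $\rpdenc{V}[i] = \mpdenc{V^2}[m+i]$ is just $m+i$ minus the largest index less than $m+i$ that carries a symbol no larger than $V[i]$; in particular $\rpdenc{V}[i] \in [1..m+i-1]$. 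If some $j \in [1..i-1]$ satisfies $V[j] \le V[i]$, then that largest index lies in the second copy of $V$, at $m + \max\{\,j \in [1..i-1] \mid V[j] \le V[i]\,\}$, so $\rpdenc{V}[i] = i - \max\{\,j \in [1..i-1] \mid V[j] \le V[i]\,\} = \mpdenc{V}[i]$, a value in $[1..i-1]$. Otherwise every symbol of $V[1..i-1]$ is strictly larger than $V[i]$, so $\mpdenc{V}[i] = \Infty$, whereas the relevant largest index now lies in the first copy, giving $\rpdenc{V}[i] = (m+i) - \max\{\,j \in [1..m] \mid V[j] \le V[i]\,\} \ge i$. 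This proves the case distinction and, along the way, that its three cases are mutually exclusive.

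Granting the reconstruction rule, $\rpdenc{V} = \rpdenc{U}$ first forces $|V| = |U| = m$ and then, applied position by position, $\mpdenc{V} = \mpdenc{U}$; hence $\mpdenc{V^2} = \mpdenc{V}\,\rpdenc{V} = \mpdenc{U}\,\rpdenc{U} = \mpdenc{U^2}$, and so $V^2 \ctmatch U^2$ by \cref{lem:encctmatch}.

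I expect the crux to be precisely the positional analysis in the second paragraph. What makes it go through is that the nearest preceding symbol no larger than $V[i]$, viewed from position $m+i$ of $V^2$, either lies within the second copy of $V$, where the distance to it coincides with $\mpdenc{V}[i]$ and is at most $i-1$, or lies in the first copy, where that distance is necessarily at least $i$; the two regimes do not overlap, so the numerical value of $\rpdenc{V}[i]$ alone reveals which one occurs, and hence whether $\mpdenc{V}[i]$ is an integer or equals $\Infty$. A second, more clerical point is that \cref{lem:encctmatch} is stated for $\Sigma^*$ while we apply it to $V^2, U^2 \in \extsigma^+$; I would either note that it carries over verbatim to $\extsigma^*$, since $\$$ behaves as a minimum symbol and is reproduced unchanged in $\mpdenc{\cdot}$, or avoid it altogether by invoking only that $\mpdenc{\cdot}$ is an injective encoding of the Cartesian tree on strings of any fixed length.
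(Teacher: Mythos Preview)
Your proof is correct and follows essentially the same route as the paper's. Both arguments hinge on the same positional observation: when $\mpdenc{V}[i] \neq \Infty$ one has $\rpdenc{V}[i] = \mpdenc{V}[i] \in \{\$\} \cup [1..i-1]$, whereas when $\mpdenc{V}[i] = \Infty$ the nearest small predecessor of $V^2[m+i]$ sits in the first copy of $V$, forcing $\rpdenc{V}[i] \ge i$; hence $\rpdenc{V}[i]$ determines $\mpdenc{V}[i]$. The paper packages this as a proof by contradiction (assume $\mpdenc{V}[i] \neq \mpdenc{U}[i]$ and derive $\rpdenc{V}[i] \neq \rpdenc{U}[i]$), while you phrase it constructively as a reconstruction rule; the underlying case analysis is identical. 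Your write-up is in fact a bit more explicit about why the two regimes $\rpdenc{V}[i] < i$ and $\rpdenc{V}[i] \ge i$ are disjoint, and you rightly flag that \cref{lem:encctmatch} is stated only for $\Sigma^*$ but is being applied over $\extsigma^*$---the paper's own proof silently makes the same extension.
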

	
	For any $V, U \in \extsigma^+$, let $V \ctpeq U$ if and only if there exists some natural number $i$ such that $\mpdenc{V^\omega[..i]} < \mpdenc{U^\omega[..i]}$ or
	$\wurz{\rpdenc{V}} = \wurz{\rpdenc{U}}$ holds.
	For the defined relation~$\ctpeq$,
	the following result is a consequence of Lemma~\ref{lem:encctmatch} and Lemma~\ref{lem:rotaencodingmatch}.
	
	\begin{cor}\label{cor:preceqtotalorder}
		The relation $\ctpeq$ defines a total preorder on $\extsigma^+$, i.e., the relation $\ctpeq$ is binary, reflexive, transitive and connected.
	\end{cor}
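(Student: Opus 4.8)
My plan is to realize $\ctpeq$ as the preimage of a genuine total order under a single map, which makes all four required properties automatic. For $V\in\extsigma^{+}$, the noted identity $\mpdenc{V}[..i]=\mpdenc{V[..i]}$ makes $\mpdenc{V^{\omega}[..1]},\mpdenc{V^{\omega}[..2]},\dots$ nested prefixes, so they define a unique infinite string $\mathcal{E}(V)$ over $\extsigma\cup\{\Infty\}$ whose length-$i$ prefix is $\mpdenc{V^{\omega}[..i]}$. Ordering $\extsigma\cup\{\Infty\}$ by $\$<0<1<\dots<\Infty$, let $\preceq_{\mathrm{lex}}$ be the induced lexicographic order on infinite strings, which is a total order. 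I will prove the characterization
\[
V\ctpeq U\iff\mathcal{E}(V)\preceq_{\mathrm{lex}}\mathcal{E}(U).
\]
Granting it, the corollary is immediate: $\ctpeq$ is binary by definition, and reflexive, transitive and connected because $\preceq_{\mathrm{lex}}$ is, pulled back along $V\mapsto\mathcal{E}(V)$.

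\textbf{Reduction to a key lemma.}
Since $\mpdenc{V^{\omega}[..i]}$ and $\mpdenc{U^{\omega}[..i]}$ have equal length, only case~(b) of the string order $<$ can apply, so the first disjunct ``$\exists i:\mpdenc{V^{\omega}[..i]}<\mpdenc{U^{\omega}[..i]}$'' is equivalent to ``$\mathcal{E}(V)\neq\mathcal{E}(U)$, and at their first differing position the symbol of $\mathcal{E}(V)$ is smaller'', i.e.\ to $\mathcal{E}(V)\prec_{\mathrm{lex}}\mathcal{E}(U)$. Hence the characterization reduces to the \emph{key lemma} $\wurz{\rpdenc{V}}=\wurz{\rpdenc{U}}\iff\mathcal{E}(V)=\mathcal{E}(U)$: if $\mathcal{E}(V)=\mathcal{E}(U)$ both sides of the characterization hold (the left by the second disjunct together with the key lemma); if $\mathcal{E}(V)\prec_{\mathrm{lex}}\mathcal{E}(U)$ both hold (the left by the first disjunct); and if $\mathcal{E}(U)\prec_{\mathrm{lex}}\mathcal{E}(V)$ both fail (the first disjunct fails, and the second by the key lemma).

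\textbf{Proof of the key lemma.}
I would first establish the \emph{periodicity fact} $\mathcal{E}(V)=\mpdenc{V}\cdot\rpdenc{V}^{\omega}$ --- equivalently $\mpdenc{V^{k}}=\mpdenc{V}\cdot\rpdenc{V}^{k-1}$ for all $k\ge1$, whence also $\rpdenc{V^{k}}=\rpdenc{V}^{k}$ --- by induction on $k$: by the prefix identity $\mpdenc{V^{k+1}}$ begins with $\mpdenc{V^{k}}$, and since the prefix $(V^{k+1})[..k|V|]$ contains an occurrence of every symbol of $V$, none of the last $|V|$ positions of $\mpdenc{V^{k+1}}$ is a strict new prefix-minimum; such a position is either $\$$ (exactly when $V[t]=\$$) or has its parent-distance lookup running over the window $(V^{k+1})[(k-1)|V|+1..k|V|+t]=V\cdot V[..t]=(V^{2})[..|V|+t]$, which already contains a copy of $V[t]$, so in both cases these $|V|$ symbols reproduce $\mpdenc{V^{2}}[|V|+1..]=\rpdenc{V}$. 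For ($\Rightarrow$): writing $\rpdenc{V}=X^{a}$ and $\rpdenc{U}=X^{b}$ with $X:=\wurz{\rpdenc{V}}=\wurz{\rpdenc{U}}$, we get $\rpdenc{V^{b}}=X^{ab}=\rpdenc{U^{a}}$, hence $V^{2b}\ctmatch U^{2a}$ by Lemma~\ref{lem:rotaencodingmatch} and $\mpdenc{V^{2b}}=\mpdenc{U^{2a}}$ by Lemma~\ref{lem:encctmatch} (over $\extsigma^{*}$, where it still holds since $\mpdenc{\cdot}$ exposes the $\$$-positions); since $\mathcal{E}(V)=\mpdenc{V^{2b}}\cdot\rpdenc{V}^{\omega}$, $\mathcal{E}(U)=\mpdenc{U^{2a}}\cdot\rpdenc{U}^{\omega}$ and $\rpdenc{V}^{\omega}=X^{\omega}=\rpdenc{U}^{\omega}$, this gives $\mathcal{E}(V)=\mathcal{E}(U)$. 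For ($\Leftarrow$): assume $\mathcal{E}(V)=\mathcal{E}(U)$ and, w.l.o.g., $|V|\le|U|$; reading off the suffixes from position $|U|+1$ via the periodicity fact gives $\rpdenc{U}^{\omega}=\mathcal{E}(U)[|U|+1..]=\mathcal{E}(V)[|U|+1..]=\rpdenc{V}^{\omega}[|U|-|V|+1..]$, i.e.\ $\wurz{\rpdenc{U}}^{\omega}=\bigl(\rta{\wurz{\rpdenc{V}}}{r}\bigr)^{\omega}$ with $r=(|U|-|V|)\bmod|\wurz{\rpdenc{V}}|$; as $\rta{\wurz{\rpdenc{V}}}{r}$ and $\wurz{\rpdenc{U}}$ are primitive, uniqueness of the minimal period forces $|\wurz{\rpdenc{V}}|=|\wurz{\rpdenc{U}}|$ and $\wurz{\rpdenc{U}}=\rta{\wurz{\rpdenc{V}}}{r}$, and since $|\wurz{\rpdenc{V}}|$ divides both $|V|=|\rpdenc{V}|$ and $|U|=|\rpdenc{U}|$ we obtain $r=0$, so $\wurz{\rpdenc{U}}=\wurz{\rpdenc{V}}$.

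\textbf{Main obstacle.}
The crux is the key lemma, and within it the periodicity fact $\mathcal{E}(V)=\mpdenc{V}\cdot\rpdenc{V}^{\omega}$: once that is in hand, ($\Rightarrow$) is a two-step invocation of Lemmas~\ref{lem:rotaencodingmatch} and~\ref{lem:encctmatch}, while ($\Leftarrow$) is Fine--Wilf/primitivity bookkeeping where one must make sure that the divisibilities $|\wurz{\rpdenc{V}}|\mid|V|$ and $|\wurz{\rpdenc{V}}|=|\wurz{\rpdenc{U}}|\mid|U|$ together annihilate the rotation offset $r$. Should the periodicity fact already be available elsewhere in the paper, this argument collapses to a few lines.
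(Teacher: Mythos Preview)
Your overall strategy is correct and is essentially what the paper intends: the paper gives no detailed proof, merely stating that the corollary ``is a consequence of Lemma~\ref{lem:encctmatch} and Lemma~\ref{lem:rotaencodingmatch}'', and your characterization $V\ctpeq U\iff\mathcal{E}(V)\preceq_{\mathrm{lex}}\mathcal{E}(U)$ together with the key lemma is precisely the content one has to supply. The reduction to the key lemma, the periodicity fact $\mathcal{E}(V)=\mpdenc{V}\cdot\rpdenc{V}^{\omega}$, and the $(\Leftarrow)$ direction via primitivity are all fine.

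There is, however, one genuine slip in the $(\Rightarrow)$ direction of the key lemma. You invoke Lemma~\ref{lem:encctmatch} over $\extsigma^{*}$ in the direction $V^{2b}\ctmatch U^{2a}\Rightarrow\mpdenc{V^{2b}}=\mpdenc{U^{2a}}$, asserting this ``still holds since $\mpdenc{\cdot}$ exposes the $\$$-positions''. That implication fails over $\extsigma^{*}$ \emph{precisely because} $\mpdenc{\cdot}$ records $\$$-positions while the Cartesian tree shape does not: for instance $\mathtt{\$1}\ctmatch\mathtt{01}$ (same tree shape) but $\mpdenc{\mathtt{\$1}}=\mathtt{\$1}\neq\mathtt{\Infty1}=\mpdenc{\mathtt{01}}$. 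The fix is to bypass ct-matching entirely and observe that $\rpdenc{W}$ determines $\mpdenc{W}$ for fixed length: for $t\in[1..|W|]$ one has $\mpdenc{W}[t]=\$$ iff $\rpdenc{W}[t]=\$$; $\mpdenc{W}[t]=\Infty$ iff $\rpdenc{W}[t]\neq\$$ and $\rpdenc{W}[t]\geq t$; and $\mpdenc{W}[t]=\rpdenc{W}[t]$ otherwise. Hence $\rpdenc{V^{b}}=X^{ab}=\rpdenc{U^{a}}$ together with $|V^{b}|=ab|X|=|U^{a}|$ already yields $\mpdenc{V^{b}}=\mpdenc{U^{a}}$, and then $\mathcal{E}(V)=\mpdenc{V^{b}}\cdot X^{\omega}=\mpdenc{U^{a}}\cdot X^{\omega}=\mathcal{E}(U)$ by your periodicity fact. (This observation is in fact what the appendix proof of Lemma~\ref{lem:rotaencodingmatch} establishes along the way.)
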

	
	We call this preorder the \emph{$\omega$-preorder}.
	We write $V \cteq U$ if and only if $V \ctpeq U \land U \ctpeq V$, and $V \ctprec U$ if and only if $V \ctpeq U \land V \not\cteq U$.
	For instance, $T_3 = \mathtt{4478} \ctprec \mathtt{125} = \rta{T_1}{1}$ and $T_2 = \mathtt{5363} \cteq \mathtt{6353} = \rta{T_2}{2}$.
	The following results are due to a periodicity argument, and give us a convenient way to compute the $\omega$-preorder of two given strings.
	
	\begin{restatable}{lem}{equality}\label{lem:equality3z}
		Let $V, U \in \extsigma^+$ and $z = \max\{\left|V\right|, \left|U\right|\}$. 
		Then $V \cteq U$ if and only if $\mpdenc{V^{\omega}[..3z]}  = \mpdenc{U^{\omega}[..3z]}$.
	\end{restatable}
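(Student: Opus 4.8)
The plan is to prove both implications by exploiting periodicity. The key observation is that $V^\omega$ and $U^\omega$ are ultimately periodic with periods $|V|$ and $|U|$, so their parent distance encodings are also ultimately periodic, and once the prefix length exceeds a small multiple of $z = \max\{|V|,|U|\}$, no new information can be gained.

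\medskip
\noindent\textbf{The easy direction ($\Leftarrow$).} Suppose $\mpdenc{V^\omega[..3z]} = \mpdenc{U^\omega[..3z]}$. Since $\mpdenc{W}[..i] = \mpdenc{W[..i]}$, we immediately get $\mpdenc{V^\omega[..i]} = \mpdenc{U^\omega[..i]}$ for all $i \le 3z$, so neither $V \ctprec U$ nor $U \ctprec V$ can be witnessed at any length $i \le 3z$; I still need to rule out a witness at some length $i > 3z$, but this will follow from the periodicity argument below, which shows the encodings agree on \emph{all} prefixes. Alternatively, and more cleanly, I would show $\mpdenc{V^\omega[..3z]} = \mpdenc{U^\omega[..3z]}$ forces $\wurz{\rpdenc{V}} = \wurz{\rpdenc{U}}$: since $3z \ge 2|V| + |V|$, agreement on the first $3z$ symbols pins down $\mpdenc{V^2}$ (hence $\rpdenc{V} = \mpdenc{V^2}[|V|+1..]$) together with enough of the next period to see its primitive root, and likewise for $U$; matching primitive roots gives $V \cteq U$ by definition of $\cteq$ via Lemma~\ref{lem:rotaencodingmatch}.

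\medskip
\noindent\textbf{The hard direction ($\Rightarrow$).} Suppose $V \cteq U$, i.e.\ $V \ctpeq U$ and $U \ctpeq V$. By definition of $\ctpeq$, for each of the two relations either there is a strict inequality of encoded prefixes at some length, or $\wurz{\rpdenc{V}} = \wurz{\rpdenc{U}}$. If a strict inequality held in one direction, say $\mpdenc{V^\omega[..i]} < \mpdenc{U^\omega[..i]}$, then by the trichotomy of $<$ on fixed-length strings we could not also have $\mpdenc{U^\omega[..i]} < \mpdenc{V^\omega[..i]}$ at the \emph{same} length, but could at a different length — so I first need to argue that the witnessing lengths can be aligned, or simply that $V \cteq U$ forces $\wurz{\rpdenc{V}} = \wurz{\rpdenc{U}}$ outright. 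The cleanest route: show that if $\wurz{\rpdenc{V}} \neq \wurz{\rpdenc{U}}$ then one of $V \ctprec U$, $U \ctprec V$ holds, so $V \cteq U$ implies equal primitive roots. Granting $\wurz{\rpdenc{V}} = \wurz{\rpdenc{U}}$, Lemma~\ref{lem:rotaencodingmatch} gives $\rpdenc{V} = \rpdenc{U}$ (powers of a common root), hence $V^2 \ctmatch U^2$, hence by Lemma~\ref{lem:encctmatch} $\mpdenc{V^2} = \mpdenc{U^2}$. Now I would prove the periodicity lemma: writing $p = |V|$, $q = |U|$, the encoding $\mpdenc{V^\omega}$ is eventually periodic with period $p$ from position $p+1$ onward (because for $i > p$, the previous smaller value defining $\mpdenc{V^\omega}[i]$ lies within distance $p$, so the encoding only depends on the last $p$ symbols, which repeat); similarly for $U$ with period $q$. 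From $\mpdenc{V^2} = \mpdenc{U^2}$ one bootstraps: the two periodic patterns agree on an overlap of length $\ge p + q$ (since $3z \ge 2p+q$ and $\ge 2q+p$), and by the Fine--Wilf theorem the common extension is periodic with period $\gcd(p,q)$, forcing $\mpdenc{V^\omega} = \mpdenc{U^\omega}$ on every prefix. In particular equality holds on the prefix of length $3z$, which is what we want.

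\medskip
\noindent\textbf{Main obstacle.} The delicate point is the reduction from "$V \cteq U$'' to "$\wurz{\rpdenc{V}} = \wurz{\rpdenc{U}}$'': a priori $V \ctpeq U$ could be witnessed by a strict prefix inequality and $U \ctpeq V$ by the root condition, which is not obviously contradictory as stated. Resolving this requires showing that the root condition is \emph{incompatible} with a strict prefix inequality in the reverse direction — essentially that once the primitive roots differ, the encoded $\omega$-prefixes must eventually diverge strictly, and the direction of first divergence is forced, so $\ctpeq$ is genuinely a preorder whose equivalence classes are exactly the classes of equal primitive roots of $\rpdenc{\cdot}$. I expect this to be where most of the work (and most of the appeal to Fine--Wilf / periodicity of parent-distance encodings) concentrates; the final "bounded witness length $3z$'' claim is then a routine consequence.
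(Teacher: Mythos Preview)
Your proposal has the right ingredients (eventual periodicity of $\mpdenc{V^\omega}$, Fine--Wilf) but inverts the difficulty of the two directions and contains a concrete error.

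\textbf{The ``main obstacle'' is not one.} The root condition $\wurz{\rpdenc{V}} = \wurz{\rpdenc{U}}$ is symmetric, so if $U \ctpeq V$ holds via it you are already done. The only remaining scenario for $V \cteq U$ without equal roots is that \emph{both} $V \ctpeq U$ and $U \ctpeq V$ are witnessed by strict prefix inequalities. But if the two infinite encodings first differ at position $k$, then one is strictly smaller than the other at every length $\ge k$ and they agree at every length $< k$; opposite strict witnesses therefore cannot coexist. Hence $V \cteq U \Leftrightarrow \wurz{\rpdenc{V}} = \wurz{\rpdenc{U}}$ is essentially a one-line observation, and the paper accordingly calls this direction ``immediate''.

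\textbf{A false step.} In your $(\Rightarrow)$ argument you write ``$\wurz{\rpdenc{V}} = \wurz{\rpdenc{U}}$, Lemma~\ref{lem:rotaencodingmatch} gives $\rpdenc{V} = \rpdenc{U}$, hence $V^2 \ctmatch U^2$''. This is false when $|V| \ne |U|$: the two rotational encodings then have different lengths, and $V^2$, $U^2$ cannot ct-match. The correct move is to pass to equal-length powers (say $V^{|U|}$ and $U^{|V|}$) before invoking Lemma~\ref{lem:rotaencodingmatch}; once you do that, the equality of encoded prefixes of any length follows directly, and Fine--Wilf is not needed here at all.

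\textbf{The real work is in $(\Leftarrow)$, which you gloss over.} Your sentence ``agreement on the first $3z$ symbols pins down $\rpdenc{V}$ together with enough of the next period to see its primitive root, and likewise for $U$; matching primitive roots gives $V \cteq U$'' does not explain why the two primitive roots \emph{coincide}; knowing $\rpdenc{V}$ and $\rpdenc{U}$ separately is not sufficient. The paper's proof is devoted entirely to this direction: from the hypothesis it extracts that $\rpdenc{U^2}$ (of length $2z$) has both $p := |\wurz{\rpdenc{V}}|$ and $z = |U|$ as periods, applies the Weak Periodicity Lemma to get that $\gcd(p,z)$ is a period, hence $p \ge q := |\wurz{\rpdenc{U}}|$; then a symmetric application on $\rpdenc{V^2}$ yields $q \ge p$, so $p = q$ and the roots agree. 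This double Fine--Wilf argument on the \emph{rotational} encodings is the heart of the lemma, and it is precisely what your sketch omits.
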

	
	\begin{cor}\label{cor:prec3z}
		Let $V, U \in \extsigma^+$ and $z = \max\{\left|V\right|, \left|U\right|\}$.
		Then $V \ctprec U$ if and only if $\mpdenc{V^{\omega}[..3z]}  < \mpdenc{U^{\omega}[..3z]}$.
	\end{cor}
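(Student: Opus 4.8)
The plan is to deduce the corollary purely from the definition of $\ctprec$ and Lemma~\ref{lem:equality3z}, exploiting that $\mpdenc{V^{\omega}[..3z]}$ and $\mpdenc{U^{\omega}[..3z]}$ are strings of the same length $3z$. First I would record the elementary fact that on strings of equal length the relation $<$ from the preliminaries is a \emph{strict total order}: it is irreflexive (neither clause (a) nor clause (b) of the definition can hold with the two arguments equal), and for two distinct equal-length strings the first mismatching position witnesses, via clause (b), that exactly one of the two strict inequalities holds (a trichotomy). I would also unfold the definitions once: $V \ctprec U$ means $V \ctpeq U$ and $V \not\cteq U$, while $V \ctpeq U$ holds in particular as soon as $\mpdenc{V^{\omega}[..i]} < \mpdenc{U^{\omega}[..i]}$ for some natural number $i$.

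For the ($\Leftarrow$) direction, assume $\mpdenc{V^{\omega}[..3z]} < \mpdenc{U^{\omega}[..3z]}$. Instantiating $i=3z$ in the definition of $\ctpeq$ gives $V \ctpeq U$ immediately. Since $<$ is irreflexive on equal-length strings, the two encodings are distinct, so Lemma~\ref{lem:equality3z} yields $V \not\cteq U$. Hence $V \ctprec U$.

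For the ($\Rightarrow$) direction, assume $V \ctprec U$, so $V \ctpeq U$ and $V \not\cteq U$. By Lemma~\ref{lem:equality3z}, $V \not\cteq U$ is equivalent to $\mpdenc{V^{\omega}[..3z]} \ne \mpdenc{U^{\omega}[..3z]}$, and by the trichotomy above either $\mpdenc{V^{\omega}[..3z]} < \mpdenc{U^{\omega}[..3z]}$ (the desired conclusion) or $\mpdenc{U^{\omega}[..3z]} < \mpdenc{V^{\omega}[..3z]}$. In the latter case I would apply the already-proven ($\Leftarrow$) direction with the roles of $U$ and $V$ swapped (legitimate since $z=\max\{|V|,|U|\}$ is symmetric) to obtain $U \ctprec V$, i.e.\ $U \ctpeq V$ and $U \not\cteq V$; the latter together with $U \ctpeq V$ forces $\neg(V \ctpeq U)$, contradicting $V \ctpeq U$. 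Therefore $\mpdenc{V^{\omega}[..3z]} < \mpdenc{U^{\omega}[..3z]}$.

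I do not expect a genuine obstacle here: once Lemma~\ref{lem:equality3z} is available, the statement is essentially a matter of unwinding the definition of the strict part $\ctprec$ of the $\omega$-preorder (no periodicity or Cartesian-tree argument is reused beyond that lemma). The only point needing a moment's care is the strict-total-order behaviour of $<$ on the two equal-length encodings, which is exactly what lets me upgrade the inequality $\ne$ supplied by Lemma~\ref{lem:equality3z} to a one-sided strict comparison and then rule out the wrong side by invoking the ($\Leftarrow$) direction; the underlying consistency of these manipulations is guaranteed by Corollary~\ref{cor:preceqtotalorder}.
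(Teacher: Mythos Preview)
Your proposal is correct and follows the same approach the paper implicitly intends: the corollary is stated without proof as an immediate consequence of Lemma~\ref{lem:equality3z} and the definition of $\ctprec$, and your argument is precisely the natural unwinding of those definitions together with the trichotomy of $<$ on equal-length strings. The contradiction step in the $(\Rightarrow)$ direction is cleanly justified (from $U \not\cteq V$ and $U \ctpeq V$ one indeed gets $\neg(V \ctpeq U)$ by the definition of $\cteq$), so there is nothing to add.
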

	
	Similarly to Boucher et al.~\cite{boucher21rindexing}, we define the \emph{conjugate array} $\carr{\TT}$ of $\TT$ as the string of length $n$ over $[1..n]$ such that $\carr{T}[i] = j$ if and only if 
	\[
	i - 1 = |\{ k \in [1..n] \mid
	\conj{\TT}{k} \ctprec \conj{\TT}{j}  \text{ or } \conj{\TT}{k} \cteq \conj{\TT}{j} \land k < j\}|,
	\]
	i.e., $i - 1$ is the number of all conjugates smaller than $\conj{\TT}{j}$ according to $\omega$-preorder, where we break ties first with respect to text index, and then with respect to text position.
	By resolving all ties this way, we ensure that $\carr{\TT}$ is well-defined.
	Since $\carr{\TT}$ is a permutation, its inverse, which we denote by $\icarr{\TT}$, is also well-defined. 
	See Figure~\ref{fig:conjugatearray} for our running example's conjugate array.
	\begin{figure}[tbp]
		\centering
		\begin{tabular}{|r||l|l|c||c|c|l|} 
			\hline
			$i$ & $\conj{\TT}{i}$ & $\mpdenc{\conj{\TT}{i}^\omega[..12]}$ & $\icarr{\TT}[i]$ & $\carr{\TT}[i]$ & $\mpdenc{\conj{\TT}{\carr{\TT}[i]}^\omega[..12]}$ & $\conj{\TT}{\carr{\TT}[i]}$\\
			\hline
			1 & $\mathtt{512}$ & $\mathtt{\Infty\Infty1131131131}$ & $\mathtt{9}$ & $\mathtt{8}$ & $\mathtt{\Infty11131113111}$ & $\mathtt{4478}$\\
			2 & $\mathtt{125}$ & $\mathtt{\Infty11311311311}$ & $\mathtt{3}$ & $\mathtt{9}$ & $\mathtt{\Infty11311131113}$ & $\mathtt{4784}$\\
			3 & $\mathtt{251}$ & $\mathtt{\Infty1\Infty113113113}$ & $\mathtt{7}$ & $\mathtt{2}$ & $\mathtt{\Infty11311311311}$ & $\mathtt{125}$ \\
			4 & $\mathtt{5363}$ & $\mathtt{\Infty\Infty1212121212}$ & $\mathtt{10}$ & $\mathtt{5}$ & $\mathtt{\Infty12121212121}$ & $\mathtt{3635}$ \\
			5 & $\mathtt{3635}$ & $\mathtt{\Infty12121212121}$ & $\mathtt{4}$ & $\mathtt{7}$ & $\mathtt{\Infty12121212121}$ & $\mathtt{3536}$ \\
			6 & $\mathtt{6353}$ & $\mathtt{\Infty\Infty1212121212}$ & $\mathtt{11}$ & $\mathtt{10}$ & $\mathtt{\Infty1\Infty111311131}$ & $\mathtt{7844}$ \\
			7 & $\mathtt{3536}$ & $\mathtt{\Infty12121212121}$ & $\mathtt{5}$ & $\mathtt{3}$ & $\mathtt{\Infty1\Infty113113113}$ & $\mathtt{251}$ \\
			8 & $\mathtt{4478}$ & $\mathtt{\Infty11131113111}$ & $\mathtt{1}$ & $\mathtt{11}$ & $\mathtt{\Infty\Infty1113111311}$ & $\mathtt{8447}$ \\
			9 & $\mathtt{4784}$ & $\mathtt{\Infty11311131113}$ & $\mathtt{2}$ & $\mathtt{1}$ & $\mathtt{\Infty\Infty1131131131}$ & $\mathtt{512}$ \\
			10 & $\mathtt{7844}$ & $\mathtt{\Infty1\Infty111311131}$ & $\mathtt{6}$ & $\mathtt{4}$ & $\mathtt{\Infty\Infty1212121212}$ & $\mathtt{5363}$ \\
			11 & $\mathtt{8447}$ & $\mathtt{\Infty\Infty1113111311}$ & $\mathtt{8}$ & $\mathtt{6}$ & $\mathtt{\Infty\Infty1212121212}$ & $\mathtt{6353}$ \\
			\hline
		\end{tabular}
		\caption{The conjugate array $\carr{\TT}$ of our running example $\TT = \{\mathtt{512},\mathtt{5363},\mathtt{4478}\}$.}
		\label{fig:conjugatearray}
	\end{figure}
	We define the \emph{conjugate range} $\crange{\TT}{P}$ of a pattern $P \in \Sigma^*$ of length $m$ in $\TT$ as a maximal interval $[\ell..r] \subseteq [1..n]$ such that $P \ctmatch \conj{\TT}{\carr{\TT}[i]}^\omega[..m]$ for every $i \in [\ell..r]$.
	Leveraging Lemma~\ref{lem:encctmatch}, we find that the conjugate range is well-defined.
	\begin{cor}\label{cor:uniquecrange}
		Let $\emptyset \neq \TT = \{\alltexts{T}{d}\}\subset \extsigma^+$, $n = \left|T_1\cdots T_d\right|$, $P \in \Sigma^*$, and $m = \left|P\right|$. 
		Then $P \ctmatch \conj{\TT}{\carr{\TT}[i]}^\omega[..m]$ if and only if $i \in \crange{\TT}{P}$.
	\end{cor}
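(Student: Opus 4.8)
The plan is to show that the set $S = \{\, i \in [1..n] \mid P \ctmatch \conj{\TT}{\carr{\TT}[i]}^\omega[..m] \,\}$ is an interval of $[1..n]$. Once this is done, the inclusion-maximal interval contained in $S$ --- which is by definition $\crange{\TT}{P}$, and is thereby well-defined --- equals $S$ itself (in the degenerate case $S = \emptyset$ we simply get $\crange{\TT}{P} = \emptyset$ and there is nothing to prove), so the claimed equivalence $i \in \crange{\TT}{P} \Leftrightarrow P \ctmatch \conj{\TT}{\carr{\TT}[i]}^\omega[..m]$ is immediate. By Lemma~\ref{lem:encctmatch}, membership $i \in S$ is equivalent to the exact-string equality $\mpdenc{P} = \mpdenc{\conj{\TT}{\carr{\TT}[i]}^\omega[..m]}$, so the task reduces to excluding indices $i_1 < i_2 < i_3$ with $i_1, i_3 \in S$ but $i_2 \notin S$.

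The key ingredient I would isolate first is a monotonicity property of the $\omega$-preorder: for all $V, W \in \extsigma^+$ with $V \ctpeq W$ and every length $k \geq 1$, either $\mpdenc{V^\omega[..k]} = \mpdenc{W^\omega[..k]}$ or $\mpdenc{V^\omega[..k]} < \mpdenc{W^\omega[..k]}$. To establish it I would use the periodicity fact that $\mpdenc{V^\omega}$, viewed as an infinite string over $\extsigma \cup \{\Infty\}$, satisfies $\mpdenc{V^\omega}[j|V| + t] = \rpdenc{V}[t]$ for all $j \geq 1$ and $t \in [1..|V|]$ --- equivalently, $\mpdenc{V^\omega} = \mpdenc{V}\cdot\rpdenc{V}^\omega$, so that $\mpdenc{V^\omega}$ is already determined by its length-$2|V|$ prefix. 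This is the same periodicity phenomenon underlying Lemma~\ref{lem:rotaencodingmatch} (it holds because $V^\omega$ has period $|V|$). Consequently, writing $z = \max\{|V|,|W|\}$, the infinite strings $\mpdenc{V^\omega}$ and $\mpdenc{W^\omega}$ are equal if and only if they agree on their length-$3z$ prefixes, which by Lemma~\ref{lem:equality3z} holds if and only if $V \cteq W$; and if they are not equal, then their first mismatch already lies within the first $2z \leq 3z$ positions, so by Corollary~\ref{cor:prec3z} the assertion that $\mpdenc{V^\omega}$ carries the smaller symbol at that first mismatch is equivalent to $V \ctprec W$. Hence $V \ctpeq W$ holds precisely when $\mpdenc{V^\omega}$ either equals $\mpdenc{W^\omega}$ or is lexicographically smaller than it, and passing to length-$k$ prefixes yields the monotonicity property.

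To conclude, let $i_1 < i_2 < i_3$ with $i_1, i_3 \in S$ and set $V = \conj{\TT}{\carr{\TT}[i_1]}$, $W = \conj{\TT}{\carr{\TT}[i_2]}$, $U = \conj{\TT}{\carr{\TT}[i_3]}$. Since $\carr{\TT}$ lists the conjugates in $\omega$-preorder, breaking ties first by text index and then by text position, we have $V \ctpeq W \ctpeq U$. From $i_1, i_3 \in S$ and Lemma~\ref{lem:encctmatch} we get $\mpdenc{V^\omega[..m]} = \mpdenc{P} = \mpdenc{U^\omega[..m]}$, and applying the monotonicity property to $V \ctpeq W$ and to $W \ctpeq U$ sandwiches the length-$m$ string $\mpdenc{W^\omega[..m]}$ between two equal strings; hence $\mpdenc{W^\omega[..m]} = \mpdenc{P}$, i.e.\ $i_2 \in S$ by Lemma~\ref{lem:encctmatch}. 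Thus $S$ is an interval, as needed.

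I expect the real work to sit in the second paragraph: making the periodicity identity $\mpdenc{V^\omega} = \mpdenc{V}\cdot\rpdenc{V}^\omega$ precise and using it to reduce a comparison of $\mpdenc{V^\omega}$ and $\mpdenc{W^\omega}$ at an arbitrary prefix length to the fixed length $3z$ handled by Lemma~\ref{lem:equality3z} and Corollary~\ref{cor:prec3z}. If that identity is already available from the proof of Lemma~\ref{lem:rotaencodingmatch}, the remainder is routine bookkeeping. A minor point to dispatch in passing is the behaviour of the special symbol \$ inside parent-distance encodings --- a conjugate containing \$ cannot belong to $S$ once $m$ reaches its first \$-position, since $\mpdenc{P}$ contains no \$ --- but this never affects the structure of the argument.
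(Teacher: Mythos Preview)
Your proof is correct and spells out what the paper leaves implicit: the paper offers no argument beyond the remark ``Leveraging Lemma~\ref{lem:encctmatch}, we find that the conjugate range is well-defined'' immediately preceding the corollary. Your route --- reduce to showing that $S$ is an interval, establish that $V\ctpeq W$ implies $\mpdenc{V^\omega[..k]}\le\mpdenc{W^\omega[..k]}$ for every $k$ via the periodicity identity $\mpdenc{V^\omega}=\mpdenc{V}\cdot\rpdenc{V}^\omega$ combined with Lemma~\ref{lem:equality3z} and Corollary~\ref{cor:prec3z}, then sandwich --- is precisely how one makes that remark rigorous, and the periodicity identity you isolate is indeed the mechanism behind Lemma~\ref{lem:rotaencodingmatch}.

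One small correction to your closing comment on~$\$$: your reduction of membership in $S$ to the encoding equality $\mpdenc{P}=\mpdenc{\conj{\TT}{\carr{\TT}[i]}^\omega[..m]}$ appeals to Lemma~\ref{lem:encctmatch}, which is stated only for strings over $\Sigma$. Under the shape-based reading of ct-matching used in the paper's examples, a string containing $\$$ \emph{can} ct-match a $\$$-free pattern even though their parent-distance encodings differ (e.g.\ $\$\,3$ versus $1\,2$), so your assertion that such a conjugate ``cannot belong to $S$'' does not follow from the reason you give. This is really an ambiguity in the paper's own formulation --- the corollary is asserted for $\TT\subset\extsigma^+$ but is only cleanly justified for $\TT\subset\Sigma^+$ --- rather than a defect in your argument; for $\TT\subset\Sigma^+$, which covers every downstream use in the paper, your proof is complete.
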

	Note that $\crange{\TT}{\varepsilon} = [1..n]$.
	We call the computation of $\crange{\TT}{P}$ for some pattern $P\in\Sigma^*$ the \emph{backward search} for $P$ in $\TT$, and the length of $\crange{\TT}{P}$ is the solution to \Count{} by Corollary~\ref{cor:uniquecrange}.
	For our running example, $\crange{\TT}{\mathtt{643}} = \emptyset$ and $\crange{\TT}{\mathtt{5634}} = [6..7]$.
	Below we define the necessary tools to allow for an efficient backward search. 
	
	\subsection{LF-mapping}\label{subsection:lfmapping}
	
	We want to define a map that allows us to cycle backwards through each of the texts to be indexed by mapping to the position of a conjugate in the conjugate array from the position of its left rotation.
	However, if we want to represent this mapping space-efficiently, we have to be careful since we used tie-breaks within texts when we sorted the conjugates by a preorder.
	Our idea is to relax the requirements and define a map that allows us to cycle backwards through a text that is $\omega$-equal to the original to dodge any issues arising from our tie-breaks.
	For that, we want to cycle backwards in text order inside the roots of each $\rpdenc{..}$-encoded text. 
	Whenever we want to move backwards at the starting position of a root, we jump to its end position.
We express this backwards movement with the following permutation.
	For every $i \in [1..n]$ with $j = \min\{h \in [1..d] \mid \sum_{k = 1}^{h}n_k \geq i\}$, let
	\[
	\prev{\TT}{i} =
	\begin{cases*}
		i-1 + \wurz{\rpdenc{T_j}} & \text{if $\conj{\TT}{i} \cteq T_j$,}\\
		i- 1& \text{otherwise.}
	\end{cases*}
	\]
	For our running example, $\textup{prev}_\TT = (1\;3\;2)(4\;5)(6\;7)(8\;11\;10\;9)$.
	Here, we observe that we have two cycles corresponding to $T_2[1..2] = \mathtt{53}$ and $T_2[3..4] = \mathtt{63}$, 
	which are $\omega$-equal to the original text $T_2 = \mathtt{5363}$.
	Now we are ready to express the LF-mapping in terms of the function~$\textup{prev}_\TT$.
	The \emph{LF-mapping} $\lfarr{\TT}$ of $\TT$ is a string of length $n$ over $[1..n]$ such that $\lfmap{\TT}{i} = \icarr{\TT}[\prev{\TT}{\carr{\TT}[i]}]$ for each $i \in [1..n]$.
	Since the LF-mapping is a permutation, it has an inverse $\lfarr{\TT}^{-1}$, which we call the \emph{FL-mapping} of $\TT$ and denote by $\flarr{\TT}$.
	See Figure~\ref{fig:cbwt} for an example.
	The LF-mapping is at the core of the backward search.
	However, storing LF-mapping and FL-mapping in their plain form creates the need for two integer arrays of length $n$ and entries of $\lg n$ bits, which motivates the following encoding scheme.
	Let the \emph{rotational Cartesian tree signature encoding} $\rtsenc{V}$ of $V \in \extsigma^+$ denote a string of length $\left|V\right|$ over $\extsigma$ such that
	\[
	\rtsenc{V}[i] =
	\begin{cases*}
		V[i] & \text{if $V[i] = \$$,}\\
		\rank{\Infty}{\mpdenc{\rta{V}{i}}}{\left|V\right|} - \rank{\Infty}{\mpdenc{V[i]\cdot\rta{V}{i}}[2..]}{\left|V\right|} & \text{otherwise,}\\
	\end{cases*}
	\]
	for each $i \in [1..\left|V\right|]$, i.e., if $V[i] \neq \$$, then $\rtsenc{V}[i]$ reports the number of positions $j \in [1..\left|V\right|]$ such that $\rta{V}{i}[j] \geq V[i]$ and $\mpdenc{\rta{V}{i}}[j] = \Infty$.
	See Figure~\ref{fig:rotencoding} for an example. 
	Note that $\rtsenc{V}[i..\select{\$}{V}{1}] = \rtsenc{V[i..\select{\$}{V}{1}]}$ for each $V \in \extsigma^+$ satisfying $\rank{\$}{V}{\left|V\right|} \geq 1$ and $i \in [1..\select{\$}{V}{1}]$.
	\begin{lem}\label{lem:sumrtsenc}
		Let $V\in\extsigma^+$. 
		Then $\sum_{i = 1}^{j} \rtsenc{V}[i] \leq \left|V\right|$, where $j = \select{\$}{V}{1} - 1$ if $\rank{\$}{V}{\left|V\right|} \geq 1$, and $j = \left|V\right|$ otherwise.
	\end{lem}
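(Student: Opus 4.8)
The plan is to unfold the definition of $\rtsenc{V}[i]$ into a purely positional statement about $V$, and then bound the resulting sum of set-cardinalities by exhibiting an injection into $[1..\left|V\right|]$.

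First I would fix the interpretation of $\rtsenc{V}[i]$ for the relevant indices. By the description following the definition, if $V[i]\neq\$$ then $\rtsenc{V}[i]$ equals the number of $j\in[1..\left|V\right|]$ with $\rta{V}{i}[j]\geq V[i]$ and $\mpdenc{\rta{V}{i}}[j]=\Infty$, where $\mpdenc{\rta{V}{i}}[j]=\Infty$ means that $\rta{V}{i}[j]\neq\$$ and $\rta{V}{i}[j]$ is strictly smaller than every earlier entry of $\rta{V}{i}$. Re-indexing by the position $p$ of $V$ that $\rta{V}{i}[j]$ refers to (reading $V$ cyclically starting just after position $i$, position $p$ is reached after $j$ steps), this becomes $\rtsenc{V}[i]=\left|S_i\right|$ with
\[
S_i=\{\,p\in[1..\left|V\right|] \mid V[p]\neq\$,\ V[p]\geq V[i],\ \text{and } V[p]<V[q] \text{ for every } q \text{ strictly between } i \text{ and } p \text{ in the forward cyclic direction}\,\}.
\]
Every $i$ occurring in the sum satisfies $V[i]\neq\$$---it lies before the first $\$$ when $V$ contains one, and $V$ has no $\$$ at all otherwise---so $\sum_{i=1}^{j}\rtsenc{V}[i]=\bigl|\{(i,p)\mid 1\le i\le j,\ p\in S_i\}\bigr|$.

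The heart of the argument is the claim that $(i,p)\mapsto p$ is injective on this set, which at once gives the bound $\left|V\right|$. I would argue by contradiction: suppose $(i_1,p)$ and $(i_2,p)$ both lie in the set with $i_1\neq i_2$, so $V[i_1],V[i_2]\leq V[p]$. If $p\notin\{i_1,i_2\}$, look at the cyclic order of the three distinct positions: one of $i_1,i_2$---say $i_2$, the other case being symmetric---lies strictly between the other and $p$ in the forward direction, and then $p\in S_{i_1}$ forces $V[i_2]>V[p]$, contradicting $V[i_2]\leq V[p]$. If instead $p=i_1$ (say), then $p\in S_p$ says $V[p]<V[q]$ for every $q\neq p$; hence $V$ contains no $\$$ and $V[p]$ is the unique global minimum of $V$, so $V[i_2]\leq V[p]$ with $i_2\neq p$ is impossible. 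Either way $i_1=i_2$.

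Putting this together, $\sum_{i=1}^{j}\rtsenc{V}[i]=\bigl|\{(i,p)\mid 1\le i\le j,\ p\in S_i\}\bigr|\leq\left|[1..\left|V\right|]\right|=\left|V\right|$. I expect the main obstacle to be pinning down the positional description of $\rtsenc{V}[i]$ precisely---translating the two $\textup{rank}_{\Infty}$ terms and the rotation $\rta{V}{i}$ into the clean ``forward prefix-minimum'' condition of $S_i$---together with handling the cyclic wrap-around and the degenerate case $p=i$ in the injectivity claim; restricting the sum to the positions before the first $\$$ is exactly what makes each summand a genuine nonnegative integer (rather than the literal symbol $\$$) and also spares us from having to reason about index positions carrying $\$$.
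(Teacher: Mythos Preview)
The paper states this lemma without proof---it is not among the claims whose proofs appear in the appendix---so there is no argument to compare against.

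Your proof is correct. The translation of $\rtsenc{V}[i]$ into the set $S_i$ matches the paper's own gloss following the definition (``the number of positions $j\in[1..|V|]$ such that $\rta{V}{i}[j]\geq V[i]$ and $\mpdenc{\rta{V}{i}}[j]=\Infty$''), once one re-indexes by the original position in $V$. The injectivity of $(i,p)\mapsto p$ is the right device, and both cases of your contradiction go through: when $p\notin\{i_1,i_2\}$, the two forward arcs to $p$ are nested, so one of $i_1,i_2$ lies strictly inside the other's arc and is forced to carry a value exceeding $V[p]$; when $p=i_1$, membership $p\in S_p$ forces $V[p]$ to be the unique strict minimum of $V$ (in particular $V$ is $\$$-free), which is incompatible with $V[i_2]\leq V[p]$ for $i_2\neq p$. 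Your closing remark about why the sum is restricted to positions before the first $\$$ is also on point: it is precisely what guarantees every summand is an integer rather than the symbol~$\$$.
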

	\begin{figure}[t]
		\centering
		\begin{tabular}{|c||c|c|c|c|c|} 
			\hline
			$i$ & $T_3[i]$ & $\rta{T_3}{i}$& $\mpdenc{\rta{T_3}{i}}$ & $\mpdenc{T_3[i]\cdot\rta{T_3}{i}}$ & $\rtsenc{T_3}[i]$  \\
			\hline
			1&$\mathtt{4}$&$\mathtt{4784}$&$\mathtt{\Infty113}$&$\mathtt{\Infty1113}$&$\mathtt{1}$\\
			2&$\mathtt{4}$&$\mathtt{7844}$&$\mathtt{\Infty1\Infty1}$&$\mathtt{\Infty1131}$&$\mathtt{2}$\\
			3&$\mathtt{7}$&$\mathtt{8447}$&$\mathtt{\Infty\Infty11}$&$\mathtt{\Infty1\Infty11}$&$\mathtt{1}$\\
			4&$\mathtt{8}$&$\mathtt{4478}$&$\mathtt{\Infty111}$&$\mathtt{\Infty\Infty111}$&$\mathtt{0}$\\
			\hline
		\end{tabular}
		\caption{Rotational Cartesian tree signature encoding $\rtsenc{T_3}$ of $T_3 = \mathtt{4478}$.}
		\label{fig:rotencoding}
	\end{figure}
	Taking advantage of both encodings, we investigate how the $\omega$-preorder of two strings changes if we rotate them.
	For notational convenience, let $\pi(V) = \rtsenc{V}[1]$ for every $V\in\extsigma^+$, and $\lcpcount{U}{W} = \rank{\Infty}{\mpdenc{U}}{\lcplength{\mpdenc{U}}{\mpdenc{W}}}$ for each $U, W \in \extsigma^*$.
	
	\begin{restatable}[{\cite[Lemma~3]{kim21compact}}]{lem}{rotaomega}\label{lem:rotationomegaorder}
		Let $V, U \in \extsigma^+$ such that $\rta{V}{1} \ctprec \rta{U}{1}$. 
		Then $V \ctprec U$ if and only if (a) $\pi(V) = \$$ or (b) $\pi(V) \geq \min\{e, \pi(U)\}$ and $\pi(U) \neq \$$, where $e = \lcpcount{\rta{V}{1}}{\rta{U}{1}}$.
	\end{restatable}
	
	Our representation of the LF- and FL-mapping consists of two strings $\ltarr{\TT}$ and $\ftarr{\TT}$, which are defined as follows.
	First, $\ltarr{\TT}$ is the string of length $n$ over $\extsigma$ such that $\ltarr{\TT}[i] = \pi(\conj{\TT}{\carr{\TT}[\lfmap{\TT}{i}]})$ for each $i\in[1..n]$.
	\begin{cor}\label{cor:lforder}
		Let $\emptyset \neq \TT \subset \extsigma^+$, $n$ the accumulated length of all texts in $\TT$, $i,j \in[1..n]$, and $i < j$. 
		If $\ltarr{\TT}[i] = \ltarr{\TT}[j]$, then $\lfmap{\TT}{i} < \lfmap{\TT}{j}$.
	\end{cor}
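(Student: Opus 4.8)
The statement is the Cartesian--tree counterpart of the classical fact that the LF-mapping of a Burrows--Wheeler transform is monotone on every maximal run of equal symbols of the transform, and the plan is to read it off the definitions of $\carr{\TT}$, $\prev{\TT}$, $\ltarr{\TT}$ and $\lfarr{\TT}$ together with Lemma~\ref{lem:rotationomegaorder}. Write $a = \carr{\TT}[i]$ and $b = \carr{\TT}[j]$; using $\carr{\TT}\circ\icarr{\TT} = \mathrm{id}$ we get $\lfmap{\TT}{i} = \icarr{\TT}[\prev{\TT}{a}]$, $\lfmap{\TT}{j} = \icarr{\TT}[\prev{\TT}{b}]$, $\ltarr{\TT}[i] = \pi(\conj{\TT}{\prev{\TT}{a}})$ and $\ltarr{\TT}[j] = \pi(\conj{\TT}{\prev{\TT}{b}})$. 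Hence the hypothesis reads $\pi(\conj{\TT}{\prev{\TT}{a}}) = \pi(\conj{\TT}{\prev{\TT}{b}})$, and the conclusion $\lfmap{\TT}{i} < \lfmap{\TT}{j}$ is equivalent to: $\conj{\TT}{\prev{\TT}{a}}$ precedes $\conj{\TT}{\prev{\TT}{b}}$ in the order defining $\carr{\TT}$, i.e.\ in $\omega$-preorder with ties broken first by text index and then by text position. Since $i < j$, that order puts $\conj{\TT}{a}$ before $\conj{\TT}{b}$, so either $\conj{\TT}{a} \ctprec \conj{\TT}{b}$, or $\conj{\TT}{a} \cteq \conj{\TT}{b}$ with $a$ preceding $b$ in the tie-break order. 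Write $V = \conj{\TT}{\prev{\TT}{a}}$ and $U = \conj{\TT}{\prev{\TT}{b}}$ for short; the task is to show $V$ precedes $U$ in that same order.

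First I would record two facts about $\prev{\TT}$, both proved by inspecting its two defining cases and using the periodicity property that, for a text $T$ whose $\rpdenc{T}$ has primitive root of length $p$, one has $\rta{T}{k} \cteq \rta{T}{k'}$ iff $k \equiv k' \pmod p$ --- which follows from Lemma~\ref{lem:rotaencodingmatch} and the characterisation $V \cteq U \Leftrightarrow \wurz{\rpdenc{V}} = \wurz{\rpdenc{U}}$ read off the definition of the $\omega$-preorder: (i) $\prev{\TT}$ keeps every position inside the index block of its own text, and (ii) $\rta{V}{1} \cteq \conj{\TT}{a}$ and $\rta{U}{1} \cteq \conj{\TT}{b}$. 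Now if $\conj{\TT}{a} \ctprec \conj{\TT}{b}$, then by~(ii) $\rta{V}{1} \ctprec \rta{U}{1}$, so Lemma~\ref{lem:rotationomegaorder} applies to $V,U$; since $\pi(V) = \pi(U)$, its condition~(a) holds when $\pi(V) = \$$, and otherwise $\pi(U) \ne \$$ and $\pi(V) = \pi(U) \ge \min\{e, \pi(U)\}$ so its condition~(b) holds, and in either case $V \ctprec U$, whence $\icarr{\TT}[\prev{\TT}{a}] < \icarr{\TT}[\prev{\TT}{b}]$. This is the short, central case and it uses nothing beyond the cited lemma and $\pi(V) = \pi(U)$.

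It remains to treat $\conj{\TT}{a} \cteq \conj{\TT}{b}$. By~(i) and~(ii), $V$ and $U$ are conjugates of the same texts as $\conj{\TT}{a}$ and $\conj{\TT}{b}$, and $\rta{V}{1} \cteq \conj{\TT}{a} \cteq \conj{\TT}{b} \cteq \rta{U}{1}$. If $\conj{\TT}{a}$ and $\conj{\TT}{b}$ are rotations of one text $T$, then $\conj{\TT}{a} \cteq T \Leftrightarrow \conj{\TT}{b} \cteq T$ by transitivity, so $\prev{\TT}$ shifts the two rotation offsets by the same amount ($-1$ or $-1+p$); as $\conj{\TT}{a} \cteq \conj{\TT}{b}$ makes the offsets congruent modulo $p$ and $a$ precedes $b$ in text position, the shifted offsets stay in the block, stay congruent modulo $p$, and keep their order, so $V \cteq U$ with $V$ at the earlier text position and therefore $\icarr{\TT}[\prev{\TT}{a}] < \icarr{\TT}[\prev{\TT}{b}]$. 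If instead $\conj{\TT}{a}$ and $\conj{\TT}{b}$ are rotations of different texts with indices $j_a < j_b$, it is enough to show $V \cteq U$, because then the text-index tie-break of $\carr{\TT}$ puts $V$ before $U$; this is where $\pi(V) = \pi(U)$ is genuinely needed, via the following sublemma.

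The sublemma --- \emph{if $\rta{V}{1} \cteq \rta{U}{1}$ and $\pi(V) = \pi(U)$, then $V \cteq U$} --- is the step I expect to need the most care, although I expect it to resolve cleanly. The idea is to describe $\mpdenc{V^\omega}$ in terms of $\mpdenc{(\rta{V}{1})^\omega}$ using $V^\omega[2..] = (\rta{V}{1})^\omega$. Because $\mpdenc{}$ is invariant under taking prefixes but not suffixes, the two encodings (after the one-position shift) agree at every position that is not a strict left-to-right minimum of $(\rta{V}{1})^\omega$ --- there the nearest previous not-larger element lies strictly after the first position, hence is unchanged by prepending $V[1]$ --- and can differ only at a strict left-to-right minimum $q$, where the entry of $\mpdenc{V^\omega}$ is the distance back to the first position if $V[1] \le (\rta{V}{1})^\omega[q]$ and $\Infty$ otherwise. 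Since $\rta{V}{1} \cteq \rta{U}{1}$ forces $\mpdenc{(\rta{V}{1})^\omega} = \mpdenc{(\rta{U}{1})^\omega}$ (Lemma~\ref{lem:equality3z}), these minimum positions coincide for $V$ and $U$, so $V \cteq U$ holds iff the number $t_V$ of such positions $q$ with $(\rta{V}{1})^\omega[q] \ge V[1]$ equals the corresponding number $t_U$ for $U$. But by the definition of $\rtsenc{}$, $\pi(V) = \rtsenc{V}[1]$ counts precisely the positions $q$ with $\mpdenc{\rta{V}{1}}[q] = \Infty$ and $\rta{V}{1}[q] \ge V[1]$, that is $\pi(V) = t_V$, and likewise $\pi(U) = t_U$; hence $\pi(V) = \pi(U)$ gives $t_V = t_U$ and therefore $V \cteq U$. (The degenerate case $\pi(V) = \pi(U) = \$$, i.e.\ $V[1] = U[1] = \$$, is settled by the same direct comparison of $\mpdenc{V^\omega}$ and $\mpdenc{U^\omega}$.) With the sublemma in hand all the cases above close, which proves the corollary.
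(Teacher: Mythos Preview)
Your proof is correct. The paper itself gives no proof of this corollary; it is simply placed right after Lemma~\ref{lem:rotationomegaorder} and left to the reader. Your Case~1 (the strict case $\conj{\TT}{a}\ctprec\conj{\TT}{b}$) is exactly the argument the paper's labeling as a ``corollary'' invites: from $\pi(V)=\pi(U)$ one of the two alternatives in Lemma~\ref{lem:rotationomegaorder} is always met, so $V\ctprec U$. Nothing more is written in the paper.

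Where you go beyond the paper is in treating the tie case $\conj{\TT}{a}\cteq\conj{\TT}{b}$ explicitly, which is genuinely needed because the conjugate array breaks ties by position and Lemma~\ref{lem:rotationomegaorder} says nothing about $\cteq$. Your handling is sound: facts (i) and (ii) about $\prev{\TT}$ are correct, the same-text subcase reduces to $\prev{\TT}$ shifting both offsets by the same amount, and the different-text subcase reduces to the sublemma. One simplification worth noting: the sublemma holds \emph{without} the hypothesis $\pi(V)=\pi(U)$. Using the appendix material, $\rtsenc{\rta{V}{1}}=\rta{\rtsenc{V}}{1}$ (Lemma in Appendix~\ref{app:encodings}) and the fact that $\cteq$ can equivalently be phrased via $\wurz{\rtsenc{\cdot}}$ give $\wurz{\rta{\rtsenc{V}}{1}}=\wurz{\rta{\rtsenc{U}}{1}}$; since $\wurz{\rta{X}{1}}=\rta{\wurz{X}}{1}$ and rotation is injective, this forces $\wurz{\rtsenc{V}}=\wurz{\rtsenc{U}}$, i.e.\ $V\cteq U$. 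So the equivalence relation $\cteq$ is simply rotation-stable, and your careful counting argument via $t_V=\pi(V)$, while correct, is not needed for this step.
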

	Second, $\ftarr{\TT}$ is the string of length $n$ over $\extsigma$ such that $\ftarr{\TT}[\lfmap{\TT}{i}] = \ltarr{\TT}[i]$ for each $i \in [1..n]$.
	By what follows, $\ltarr{\TT}$ and $\ftarr{\TT}$ suffice to compute both LF- and FL-mapping of $\TT$.
	\begin{cor}\label{cor:fastlfmapcompute}
		Let $\emptyset \neq\TT \subset \extsigma^+$, $n$ the accumulated length of all texts in $\TT$, and $i \in [1..n]$. 
		Then $\lfmap{\TT}{i} = \select{\ltarr{\TT}[i]}{\ftarr{\TT}}{\rank{\ltarr{\TT}[i]}{\ltarr{\TT}}{i}}$ and $\flmap{\TT}{i} = \select{\ftarr{\TT}[i]}{\ltarr{\TT}}{\rank{\ftarr{\TT}[i]}{\ftarr{\TT}}{i}}$.
	\end{cor}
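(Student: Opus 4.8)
The plan is to run the classical FM-index argument for recovering the LF-mapping from a rank/select representation; the only ingredient specific to our setting is the order-preservation property already recorded in Corollary~\ref{cor:lforder}, plus the bookkeeping fact that $\ftarr{\TT}$ is obtained from $\ltarr{\TT}$ by the permutation $\lfarr{\TT}$. First I would fix $i \in [1..n]$, abbreviate $c = \ltarr{\TT}[i]$ and $r = \rank{c}{\ltarr{\TT}}{i}$, and observe $r \geq 1$ since position $i$ itself carries the symbol $c$. The goal for the first identity is to prove that $\lfmap{\TT}{i}$ is exactly the $r$-th position at which $\ftarr{\TT}$ equals $c$. Since $\lfarr{\TT}$ is a permutation of $[1..n]$ and $\ftarr{\TT}[\lfmap{\TT}{j}] = \ltarr{\TT}[j]$ for every $j$, the map $j \mapsto \lfmap{\TT}{j}$ is a bijection carrying $\{\,j : \ltarr{\TT}[j] = c\,\}$ onto $\{\,p : \ftarr{\TT}[p] = c\,\}$; in particular $\ftarr{\TT}[\lfmap{\TT}{i}] = c$, so $\lfmap{\TT}{i}$ is one of the $c$-positions of $\ftarr{\TT}$, and both strings carry the same multiset of symbols, whence every rank/select expression appearing in the statement is well-defined.

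Next I would count the $c$-positions of $\ftarr{\TT}$ at indices $\leq \lfmap{\TT}{i}$. Through the bijection above these are in one-to-one correspondence with the indices $j$ satisfying $\ltarr{\TT}[j] = c$ and $\lfmap{\TT}{j} \leq \lfmap{\TT}{i}$. By Corollary~\ref{cor:lforder}, among indices with $\ltarr{\TT}[j] = c = \ltarr{\TT}[i]$ the inequality $\lfmap{\TT}{j} < \lfmap{\TT}{i}$ holds for $j < i$, while $\lfmap{\TT}{j} > \lfmap{\TT}{i}$ holds for $j > i$ (apply the corollary with the roles of $i$ and $j$ swapped, together with injectivity of $\lfarr{\TT}$ to exclude equality). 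Hence $\lfmap{\TT}{j} \leq \lfmap{\TT}{i}$ is equivalent to $j \leq i$, so the number of such indices is $\rank{c}{\ltarr{\TT}}{i} = r$. Therefore $\lfmap{\TT}{i}$ is the $r$-th $c$-position of $\ftarr{\TT}$, i.e.\ $\lfmap{\TT}{i} = \select{c}{\ftarr{\TT}}{r} = \select{\ltarr{\TT}[i]}{\ftarr{\TT}}{\rank{\ltarr{\TT}[i]}{\ltarr{\TT}}{i}}$, which is the first claimed equality.

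For the FL-mapping I would unwind $\flarr{\TT} = \lfarr{\TT}^{-1}$. Fix $i \in [1..n]$, set $j = \flmap{\TT}{i}$ so that $\lfmap{\TT}{j} = i$, and write $c' = \ftarr{\TT}[i] = \ltarr{\TT}[j]$. Applying the already-proven first identity to the index $j$ gives $i = \select{c'}{\ftarr{\TT}}{\rank{c'}{\ltarr{\TT}}{j}}$, which means that $i$ is the $\rank{c'}{\ltarr{\TT}}{j}$-th $c'$-position of $\ftarr{\TT}$, hence $\rank{c'}{\ftarr{\TT}}{i} = \rank{c'}{\ltarr{\TT}}{j}$. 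Substituting this equality and using $\ltarr{\TT}[j] = c'$ yields $\select{c'}{\ltarr{\TT}}{\rank{c'}{\ftarr{\TT}}{i}} = \select{c'}{\ltarr{\TT}}{\rank{c'}{\ltarr{\TT}}{j}} = j = \flmap{\TT}{i}$, which is the second claimed equality.

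The argument has no real obstacle; the one point that needs care — more bookkeeping than difficulty — is the two-sided use of Corollary~\ref{cor:lforder}, which is stated only for $i < j$: turning ``$\lfmap{\TT}{j} \leq \lfmap{\TT}{i} \Leftrightarrow j \leq i$'' into an honest equivalence requires invoking it once in each direction and appealing to the injectivity of the permutation $\lfarr{\TT}$ to rule out $\lfmap{\TT}{j} = \lfmap{\TT}{i}$ with $j \neq i$. Everything else is the standard rank/select manipulation familiar from the FM-index.
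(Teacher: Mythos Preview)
Your proof is correct and follows exactly the approach the paper intends: the corollary is placed immediately after Corollary~\ref{cor:lforder} because the order-preservation of $\lfarr{\TT}$ on equal $\ltarr{\TT}$-values, together with the defining relation $\ftarr{\TT}[\lfmap{\TT}{j}] = \ltarr{\TT}[j]$ and the fact that $\lfarr{\TT}$ is a permutation, is precisely what reduces both identities to the standard FM-index rank/select bookkeeping you carry out.
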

	In Figure~\ref{fig:cbwt} we present $\ftarr{\TT}$ and $\ltarr{\TT}$ of our running example.
	
	\subsection{Backward Search}\label{subsection:backwardsearch}
	
	The LF-mapping can be leveraged for the backward search by the following result, which is due to Lemma~\ref{lem:encctmatch} and Corollary~\ref{cor:uniquecrange}.
	\begin{lem}[{\cite[Lemma~6]{kim21compact}}]\label{lem:rangelarray}Let $\emptyset \neq \TT \subset \extsigma^+$, $P\in\Sigma^+$, $\left|P\right|=m$, $i \in [1..m]$, $h = \pi(P[i..]\cdot\$)$ and $e = \rank{\Infty}{\mpdenc{P[i..]}}{m-i+1}$. 
		For $j \in \crange{\TT}{P[i+1..]}$, $\lfmap{\TT}{j} \in \crange{\TT}{P[i..]}$ if and only if (a) $e > 1$ and $\ltarr{\TT}[j] = h$ or (b) $e = 1$ and $\ltarr{\TT}[j] \geq h$.
	\end{lem}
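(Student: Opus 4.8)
The plan is to first strip away the data-structure wrapping. Put $a = \carr{\TT}[j]$ and $W = \conj{\TT}{\prev{\TT}{a}}$. From $\lfmap{\TT}{j} = \icarr{\TT}[\prev{\TT}{a}]$ we get $\carr{\TT}[\lfmap{\TT}{j}] = \prev{\TT}{a}$, so $W = \conj{\TT}{\carr{\TT}[\lfmap{\TT}{j}]}$ and hence $\ltarr{\TT}[j] = \pi(W)$ by the definition of $\ltarr{\TT}$. By \cref{cor:uniquecrange}, the hypothesis $j \in \crange{\TT}{P[i+1..]}$ means $P[i+1..] \ctmatch \conj{\TT}{a}^\omega[..m-i]$, and the conclusion $\lfmap{\TT}{j} \in \crange{\TT}{P[i..]}$ means $P[i..] \ctmatch W^\omega[..m-i+1]$. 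The two conjugates are tied together by $\conj{\TT}{a} \cteq \rta{W}{1}$: when $\prev{\TT}{a} = a-1$ this holds with equality, since stepping back one position within a text inverts a single left rotation, and when $\prev{\TT}{a}$ jumps forward by the period length of $\rpdenc{T_{j'}}$ (for $T_{j'}$ the text containing position $a$) it follows because rotating a text by that period length produces an $\omega$-equal text, a periodicity consequence of \cref{lem:rotaencodingmatch}. Combining this with \cref{lem:encctmatch} and \cref{lem:equality3z} replaces $\conj{\TT}{a}^\omega$ by $W^\omega[2..]$, so the claim becomes: given $\mpdenc{P[i+1..]} = \mpdenc{W^\omega[2..m-i+1]}$, decide when $\mpdenc{P[i..]} = \mpdenc{W^\omega[..m-i+1]}$.

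Second, I would localise the effect of prepending the symbol $P[i]$. Call a position of $P[i+1..]$ a prefix-minimum position if it carries an $\Infty$ in $\mpdenc{P[i+1..]}$. After the index shift by one, $\mpdenc{P[i..]}$ agrees with $\mpdenc{P[i+1..]}$ at every non-prefix-minimum position, since only at a prefix-minimum can the new symbol $P[i]$ become the recorded parent, and at a prefix-minimum of value $v$ it equals $\Infty$ if $v < P[i]$ and the distance to position $1$ if $v \ge P[i]$. The same description holds for $\mpdenc{W^\omega[..m-i+1]}$ against $\mpdenc{W^\omega[2..m-i+1]}$ with $W[1]$ in place of $P[i]$. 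Since $\mpdenc{P[i+1..]} = \mpdenc{W^\omega[2..m-i+1]}$, the two strings share one list of prefix-minimum positions, along which their values strictly decrease; let $T$ be their number, $h'$ the number of them of value at least $P[i]$, and $h''$ the number of value at least $W[1]$. Matching the two encodings then reduces to $\min\{h',T\} = \min\{h'',T\}$.

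Third, I would identify the parameters. Reading $h = \pi(P[i..]\cdot\$)$ off the definition of $\rtsenc{\cdot}$ gives $h = h'$: in the rotation $P[i+1..]\cdot\$\cdot P[i]$ the appended $\$$ contributes the only new $\Infty$-position and is not counted because $\$ < P[i]$, leaving exactly the prefix-minima of $P[i+1..]$ of value at least $P[i]$. Likewise $\pi(W)$ counts, within one period of $\rta{W}{1}^\omega = W^\omega[2..]$, the prefix-minima of value at least $W[1]$, so $\pi(W) = h''$ (if $m-i$ is too small to expose all of them, the defect disappears under the $\min$ with $T$). Counting $\Infty$-symbols in $\mpdenc{P[i..]}$ gives $e = 1 + (T - h')$, hence $T = h + e - 1$. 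Substituting: if $e = 1$ then $T = h$ and $\min\{h,T\} = \min\{\pi(W),T\}$ collapses to $\pi(W) \ge h$; if $e > 1$ then $T > h$, so the equality forces $\pi(W) = h$. Since $\pi(W) = \ltarr{\TT}[j]$, these are exactly cases (b) and (a) of the statement; alternatively, the same threshold behaviour can be extracted from \cref{lem:rotationomegaorder}, whose cases (a) and (b) correspond here to $e = 1$ and $e > 1$.

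The step I expect to be the main obstacle is the combined bookkeeping of ties and truncation. Two points carry the weight: first, the auxiliary fact from the reduction that $\prev{\TT}$ never leaves an $\omega$-equivalence class, which rests on the periodicity of $\rpdenc{\cdot}$ through \cref{lem:rotaencodingmatch} and is best isolated as a separate lemma; and second, the treatment of equal symbols when $P[i]$ or $W[1]$ meets a prefix-minimum value, together with the off-by-one between the finite count forced by the $\$$ in $h$ and the $\omega$-iterations underlying $e$ and $\ltarr{\TT}[j]$. Once the tie-breaking convention in $\mpdenc{\cdot}$ is fixed consistently, these collapse to routine case distinctions and the skeleton above carries through.
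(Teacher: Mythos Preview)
The paper does not actually prove this lemma: it cites \cite[Lemma~6]{kim21compact} and only remarks that the statement ``is due to Lemma~\ref{lem:encctmatch} and Corollary~\ref{cor:uniquecrange}.'' Your proposal is a correct and reasonably detailed fleshing-out of exactly that hint: you use \cref{cor:uniquecrange} to reduce membership in $\crange{\TT}{P[i..]}$ to a ct-match question, then \cref{lem:encctmatch} to turn it into an equality of parent-distance encodings, and finally do the local combinatorics of how prepending one symbol rewrites the $\Infty$-positions. The counting argument identifying $h$, $e$ and $\ltarr{\TT}[j]=\pi(W)$ with the numbers $h'$, $T-h'+1$ and $h''$ is right, as is the observation that a short window ($m-i<|W|$) only shrinks $h''$ in a way absorbed by $\min\{\cdot,T\}$.

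One small gap worth patching: your combinatorial identification $\pi(W)=h''$ silently assumes $W[1]\neq\$$. When $W[1]=\$$ the definition of $\rtsenc{\cdot}$ sets $\pi(W)=\$$, which is a symbol rather than a count, and your ``number of prefix-minima of value at least $W[1]$'' reading breaks down. The fix is a one-line case: if $W[1]=\$$ then $\mpdenc{W^\omega[..m-i+1]}[1]=\$\neq\Infty=\mpdenc{P[i..]}[1]$, so the match fails, and since $\ltarr{\TT}[j]=\$$ is smaller than every value in $\Sigma$ neither (a) nor (b) can hold. With that addition, your argument is complete and matches the route the paper gestures at.
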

	At this point it is straight-forward to apply the techniques developed by Kim and Cho~\cite[Sections~5 and~6]{kim21compact} to define an index of $\TT \subset \Sigma^+$ that occupies $3n + o(n)$ bits of space and that solves \Count{} in $O(m)$ time, where $m$ is the pattern length.
	However, for brevity and in view of a space efficient construction of our proposed index and its extension, we will represent $\ftarr{\TT}$ and $\ltarr{\TT}$ by the dynamic data structure of Lemma~\ref{lemma:dynamicstringrmq}, and introduce an auxiliary string for the backward search. 
	Let $\lcpinfty{\TT}$ denote a string of length $n$ over $\Sigma$ such that $\lcpinfty{T}[1] = 0$ and $\lcpinfty{\TT}[i] = \lcpcount{\conj{\TT}{\carr{T}[i]}}{\conj{\TT}{\carr{T}[i-1]}}$ for each $i \in [2..n]$.
	
	\begin{restatable}{lem}{lcpcompute}\label{lemma:lcpinftycomputation}
		Let $\emptyset \neq \TT = \{\alltexts{T}{d}\}\subset\extsigma^+$, $n = \left|T_1\cdots T_d\right|$, $i,j \in [1..n]$, and $i < j$. 
		Then $\lcpcount{\conj{\TT}{\carr{\TT}[i]}}{\conj{\TT}{\carr{\TT}[j]}} = \min\{\lcpinfty{\TT}[k] \mid k \in[i+1..j]\} = \rnv{\lcpinfty{\TT}}{i+1}{j}{-1}$.
	\end{restatable}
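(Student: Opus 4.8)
The proof follows the template of the classical suffix-array identity $\mathrm{LCP}(\mathrm{SA}[i],\mathrm{SA}[j])=\min\{\mathrm{LCP}[k]\mid k\in[i+1..j]\}$, adapted to conjugates ordered by the $\omega$-preorder and to counting $\Infty$-symbols rather than measuring prefix lengths. First I would prove $\lcpcount{\conj{\TT}{\carr{\TT}[i]}}{\conj{\TT}{\carr{\TT}[j]}}=\min\{\lcpinfty{\TT}[k]\mid k\in[i+1..j]\}$, and then observe that every entry of $\lcpinfty{\TT}$ lies in $\Sigma=[0..\sigma]$ and is therefore larger than $-1$, so that $\rnv{\lcpinfty{\TT}}{i+1}{j}{-1}$ returns exactly the minimum of $\lcpinfty{\TT}[i+1..j]$. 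Throughout I abbreviate $V_k:=\conj{\TT}{\carr{\TT}[k]}$; by the definition of $\carr{\TT}$ the sequence $V_1,\dots,V_n$ is $\ctpeq$-non-decreasing and $\lcpinfty{\TT}[k]=\lcpcount{V_k}{V_{k-1}}$, and I will use freely that $\lcpcount{\cdot}{\cdot}$ is symmetric (the number of $\Infty$ inside a common prefix does not depend on which of the two encodings it is counted in).

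For the inequality ``$\ge$'' I would establish a strong-triangle inequality $\lcpcount{X}{W}\ge\min\{\lcpcount{X}{Y},\lcpcount{Y}{W}\}$ for all $X,Y,W\in\extsigma^+$: whichever of $\lcplength{\mpdenc{X}}{\mpdenc{Y}}$ and $\lcplength{\mpdenc{Y}}{\mpdenc{W}}$ is smaller, the three encodings share a prefix of that length and hence have equal $\Infty$-counts over it, so the claim reduces to monotonicity of $\rank{\Infty}{\cdot}{\cdot}$ together with the ordinary strong-triangle inequality for $\lcplength{\cdot}{\cdot}$. Iterating over $k=i+1,\dots,j$ gives $\lcpcount{V_i}{V_j}\ge\min_{k\in[i+1..j]}\lcpcount{V_{k-1}}{V_k}=\min_{k\in[i+1..j]}\lcpinfty{\TT}[k]$.

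The inequality ``$\le$'' is the heart of the matter; it suffices to show $\lcpcount{V_a}{V_b}\ge\lcpcount{V_i}{V_j}$ for all $i\le a<b\le j$ (then apply it with $a=k-1$, $b=k$). The argument rests on two facts. (i) For any $V\in\extsigma^+$ the infinite encoding $\mpdenc{V^{\omega}}$ contains the symbol $\Infty$ only among its first $|V|$ positions: for $p>|V|$ the running minimum $\min\{V^\omega[q]\mid q<p\}$ has already attained the minimum symbol of $V$, which is $\le V^\omega[p]$, so $V^\omega[p]$ is not a strict new minimum. Combined with $\mpdenc{W}=\mpdenc{W^\omega}[..|W|]$, this shows that with $\ell:=\lcplength{\mpdenc{V_i}}{\mpdenc{V_j}}\le\min\{|V_i|,|V_j|\}$ one has $\mpdenc{V_i^\omega}[..\ell]=\mpdenc{V_j^\omega}[..\ell]$ and $\lcpcount{V_i}{V_j}=\rank{\Infty}{\mpdenc{V_i^\omega}}{\ell}$. (ii) Betweenness transfers this shared prefix: from $V_i\ctpeq V_a\ctpeq V_j$ it follows that $\mpdenc{V_a^\omega}[..\ell]=\mpdenc{V_i^\omega}[..\ell]$. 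Indeed, a first position $t\le\ell$ at which $\mpdenc{V_a^\omega}$ and $\mpdenc{V_i^\omega}$ disagree would, because $\mpdenc{V_i^\omega}$ and $\mpdenc{V_j^\omega}$ agree up to $\ell$, also be the first disagreement of $\mpdenc{V_a^\omega}$ with $\mpdenc{V_j^\omega}$; since $t\le\ell\le 3\max\{|V_i|,|V_a|\}$ and $t\le\ell\le 3\max\{|V_a|,|V_j|\}$, \cref{lem:equality3z} rules out $V_i\cteq V_a$ and $V_a\cteq V_j$, so $V_i\ctprec V_a$ and $V_a\ctprec V_j$, and \cref{cor:prec3z} would force $\mpdenc{V_i^\omega}[t]<\mpdenc{V_a^\omega}[t]$ and simultaneously $\mpdenc{V_a^\omega}[t]<\mpdenc{V_j^\omega}[t]=\mpdenc{V_i^\omega}[t]$, a contradiction. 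The same argument applies to $V_b$, hence $\mpdenc{V_a^\omega}[..\ell]=\mpdenc{V_b^\omega}[..\ell]$, so $\lcplength{\mpdenc{V_a^\omega}}{\mpdenc{V_b^\omega}}\ge\ell$ and, using fact (i) for the shorter of $V_a,V_b$ to cover the case $\min\{|V_a|,|V_b|\}<\ell$, $\lcpcount{V_a}{V_b}\ge\rank{\Infty}{\mpdenc{V_a^\omega}}{\ell}=\rank{\Infty}{\mpdenc{V_i^\omega}}{\ell}=\lcpcount{V_i}{V_j}$. Combining ``$\ge$'' and ``$\le$'' gives the min-identity, and the $\rnv{}{}{}{}$-reformulation follows as noted.

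The step I expect to be the main obstacle is fact (ii): the $\omega$-preorder is defined only through a pair-dependent $3z$-truncation of the infinite parent-distance encodings, so one must verify that all positions relevant to $\lcpcount{V_i}{V_j}$ — namely positions at most $\ell\le\min\{|V_i|,|V_j|\}$ — lie inside the windows on which \cref{cor:prec3z} and \cref{lem:equality3z} govern the comparison, which is exactly what legitimises ``sandwiching'' the shared length-$\ell$ prefix between $V_i\ctpeq V_a\ctpeq V_j$. The surrounding bookkeeping (moving between a finite encoding and its infinite iteration, the confinement of $\Infty$ in fact (i), the sub-case where an intermediate conjugate is shorter than $\ell$, and the degenerate case $\lcplength{\mpdenc{V_i^\omega}}{\mpdenc{V_j^\omega}}=\infty$) is routine but must be carried out with care.
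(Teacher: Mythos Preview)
Your proof is correct and rests on the same core idea as the paper's: the classical fact that for three strings in lexicographic order the pairwise longest common prefixes satisfy the min-identity. The paper's proof, however, is far more compressed. It simply fixes $z=\max\{|U|,|V|,|W|\}$, observes that for $\mpdenc{U^\omega[..3z]}\le\mpdenc{V^\omega[..3z]}\le\mpdenc{W^\omega[..3z]}$ the standard identity
\[
\lcplength{\mpdenc{U^\omega[..3z]}}{\mpdenc{W^\omega[..3z]}}=\min\bigl\{\lcplength{\mpdenc{U^\omega[..3z]}}{\mpdenc{V^\omega[..3z]}},\,\lcplength{\mpdenc{V^\omega[..3z]}}{\mpdenc{W^\omega[..3z]}}\bigr\}
\]
holds, and then states that the lemma follows from \cref{lem:equality3z} and the definition of $\lcpinfty{\TT}$. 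In other words, the paper works directly with the $3z$-truncated $\omega$-encodings, which are genuinely linearly ordered by \cref{cor:prec3z}, so the standard three-string identity applies verbatim and one never needs to split into a strong-triangle inequality plus a separate sandwiching step. All of the bookkeeping you carry out explicitly---your fact~(i) confining $\Infty$ to the first $|V|$ positions, the passage between $\mpdenc{V}$ and $\mpdenc{V^\omega}$, and the case $\min\{|V_a|,|V_b|\}<\ell$---is left implicit in the paper's ``the statement follows''. Your write-up is therefore a fully fleshed-out version of what the paper sketches; the extra detail is useful but not a different argument.
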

	
	\begin{lem}\label{lemma:crange}
		Let $\emptyset \neq \TT \subset \extsigma^+$, $P \in \Sigma^+$, $m = \left|P\right|$, $i \in [1..m]$, $[\ell..r] = \crange{\TT}{P[i+1..]}$, $h =\pi(P[i..]\cdot\$)$, and $e = \rank{\Infty}{\mpdenc{P[i..]}}{m - i + 1}$. 
		Then Algorithm~\ref{alg:crangeupd} correctly computes $[\ell'..r'] = \crange{\TT}{P[i..]}$.
	\end{lem}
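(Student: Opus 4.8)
The plan is to check that the interval returned by Algorithm~\ref{alg:crangeupd} equals $\crange{\TT}{P[i..]}$, which exists and is unique by Corollary~\ref{cor:uniquecrange}. The first step is a set identity: I would show
\[
\crange{\TT}{P[i..]} = \{\, \lfmap{\TT}{j} \mid j \in [\ell..r],\ (e>1 \wedge \ltarr{\TT}[j] = h)\ \vee\ (e=1 \wedge \ltarr{\TT}[j] \geq h)\,\}.
\]
The inclusion ``$\supseteq$'' is precisely Lemma~\ref{lem:rangelarray}. For ``$\subseteq$'', take $k \in \crange{\TT}{P[i..]}$, so $P[i..] \ctmatch \conj{\TT}{\carr{\TT}[k]}^{\omega}[..m-i+1]$ by Corollary~\ref{cor:uniquecrange}; since ct-matching is an SCER, dropping the first position preserves the match, and because $\prev{\TT}{\cdot}$ — hence $\lfarr{\TT}$ — realizes a one-position left rotation inside the cyclic word of a conjugate up to $\omega$-equality (here one uses the definitions of $\prev{\TT}{\cdot}$ and $\carr{\TT}$ together with Lemma~\ref{lem:encctmatch} and the fact that $\omega$-equal strings have identical finite prefixes of their infinite iterations, so the tie-break built into $\prev{\TT}{\cdot}$ is harmless), the index $j$ with $\lfmap{\TT}{j}=k$ satisfies $P[i+1..] \ctmatch \conj{\TT}{\carr{\TT}[j]}^{\omega}[..m-i]$, i.e.\ $j \in \crange{\TT}{P[i+1..]} = [\ell..r]$; applying Lemma~\ref{lem:rangelarray} to this $j$ yields the stated condition.

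The second step converts this set into the interval the algorithm outputs. By Corollary~\ref{cor:uniquecrange} the set above is an interval $[\ell'..r']$, and it is empty exactly when $[\ell..r]=\emptyset$ or no $j\in[\ell..r]$ meets the condition — detected by a rank or rnkcnt query on $\ltarr{\TT}$. It remains to identify $\ell'$ and $r'$. The relative order in which the qualifying conjugates appear after prepending (equivalently, the order of their $\lfmap{\TT}{\cdot}$-images) versus their order inside $[\ell..r]$ is governed by Lemma~\ref{lem:rotationomegaorder}: within a fixed value of $\ltarr{\TT}$ the two orders agree (Corollary~\ref{cor:lforder}), and across distinct values the decision in Lemma~\ref{lem:rotationomegaorder} involves the $\textup{lcp}^{\infty}$ of the relevant conjugate rotations. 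Hence if $e>1$ all qualifying $j$ carry $\ltarr{\TT}[j]=h$, and by Corollary~\ref{cor:lforder}, $\ell'$ and $r'$ are the $\lfmap{\TT}{\cdot}$-images of the first and last occurrence of $h$ in $\ltarr{\TT}[\ell..r]$, found via $\rank{h}{\ltarr{\TT}}{\cdot}$/\,$\select{h}{\ltarr{\TT}}{\cdot}$ and mapped through $\lfmap{\TT}{\cdot}$ by Corollary~\ref{cor:fastlfmapcompute}; if $e=1$, the qualifying $j$ may carry different symbols $\geq h$, and I would locate the extreme qualifying positions of $[\ell..r]$ by an RNV (and, if the code needs it, an MI) query on $\ltarr{\TT}[\ell..r]$ against the $\textup{lcp}^{\infty}$-values of adjacent conjugates, which Lemma~\ref{lemma:lcpinftycomputation} supplies through $\rnv{\lcpinfty{\TT}}{\cdot}{\cdot}{-1}$, before mapping them through $\lfmap{\TT}{\cdot}$ via Corollary~\ref{cor:fastlfmapcompute}.

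The third, routine step is to read Algorithm~\ref{alg:crangeupd} line by line and confirm that it performs exactly this emptiness test, the location of the extreme qualifying positions in $[\ell..r]$ (including the $\lcpinfty{\TT}$-based comparisons justified by Lemma~\ref{lemma:lcpinftycomputation}), and the two $\lfmap{\TT}{\cdot}$ evaluations via Corollary~\ref{cor:fastlfmapcompute}, so that its output is $[\ell'..r']=\crange{\TT}{P[i..]}$.

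I expect the main obstacle to be the ``$\subseteq$'' direction of the set identity — concretely, pinning down that $\lfarr{\TT}$ is a one-step left rotation on conjugates modulo $\omega$-equality — and, in the $e=1$ branch, turning Lemma~\ref{lem:rotationomegaorder} into the exact claim that $\lfmap{\TT}{\cdot}$ sends the extreme qualifying indices of $[\ell..r]$ to the endpoints of the new interval even though the symbols $\geq h$ differ; the remainder is bookkeeping against Lemmas~\ref{lem:rangelarray}, \ref{lem:rotationomegaorder} and \ref{lemma:lcpinftycomputation} and Corollaries~\ref{cor:uniquecrange}, \ref{cor:lforder} and \ref{cor:fastlfmapcompute}.
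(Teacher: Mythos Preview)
Your high-level plan and the $e>1$ branch are essentially the paper's argument, and making the ``$\subseteq$'' inclusion of the set identity explicit via $\flarr{\TT}$ is a useful sharpening the paper leaves implicit. One small mismatch: even for $e>1$ the algorithm computes only $r'$ by a single $\lfarr{\TT}$ evaluation and then sets $\ell'=r'-c+1$; your ``$\lfarr{\TT}$-image of the first and last occurrence of $h$'' description is equivalent by Corollary~\ref{cor:lforder} but is not what you will see in your line-by-line check.

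The real gap is your reading of the $e=1$ branch. The algorithm does \emph{not} locate positions in $[\ell..r]$ whose $\lfarr{\TT}$-images are the endpoints and then map them; for distinct symbols $\ge h$ there is no $O(1)$-query way to pick out such positions, because $\lfarr{\TT}$ is not order-preserving across values. Instead it fixes a single \emph{anchor}: $v=\rnv{\ltarr{\TT}}{\ell}{r}{h-1}$ is the least qualifying value and $x$ its rightmost occurrence in $[\ell..r]$. Then $\lfmap{\TT}{x}$ lands \emph{somewhere} in $[\ell'..r']$, generally not at an endpoint, and the code determines the offset $y=\lfmap{\TT}{x}-\ell'$ by \emph{counting} the qualifying $j$ with $\lfmap{\TT}{j}<\lfmap{\TT}{x}$. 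The $\maxi{\lcpinfty{\TT}}{x}{v+1}$ call (on $\lcpinfty{\TT}$, not on $\ltarr{\TT}$) yields $[\ell''..r'']$, which together with Lemmas~\ref{lem:rotationomegaorder} and~\ref{lemma:lcpinftycomputation} splits the qualifying $j$ into three regimes: for $j\in[\ell..x-1]$ one has $\ltarr{\TT}[j]\ge v=\ltarr{\TT}[x]$, hence $\lfmap{\TT}{j}<\lfmap{\TT}{x}$; for $j\in[x+1..r'']$ the relevant $\textup{lcp}^\infty$ exceeds $v$ while $\ltarr{\TT}[j]>v$ by the maximality of $x$, hence again $\lfmap{\TT}{j}<\lfmap{\TT}{x}$; and for $j\in[r''+1..r]$ the $\textup{lcp}^\infty$ is at most $v=\ltarr{\TT}[x]$, hence $\lfmap{\TT}{j}>\lfmap{\TT}{x}$. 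This gives $y$ as the sum in Line~\ref{alg:crangeupd:2smaller} and $r'=\lfmap{\TT}{x}+c-(y+1)$. Replace your ``locate extreme positions, then map'' description by this anchor-and-offset argument; as written, your Step~3 would contradict rather than confirm your Step~2.
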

		\begin{algorithm}[tbp]
		\caption{
			Computing the conjugate range $\crange{\TT}{P[i..]}$. Here, $\emptyset \neq \TT \subset \extsigma^+$, $P \in \Sigma^+$, $m = \left|P\right|$, $i \in [1..m]$, $[\ell..r] = \crange{\TT}{P[i+1..]}$, $h =\pi(P[i..]\cdot\$)$, and $e =\rank{\infty}{\mpdenc{P[i..]}}{m - i + 1}$. 
		}
		\label{alg:crangeupd}
		\SetKwFunction{FMain}{crangeupd}
		\SetKwProg{Fn}{Function}{:}{}
		\Fn{\FMain{$e, h,[\ell..r], \ltarr{\TT}, \ftarr{\TT},\lcpinfty{\TT}$}}{
			$r' \gets r$\;{\label{alg:crangeupd:err}}
			\If(\tcp*[f]{Case~1 in Lemma~\ref{lemma:crange}}){$e > 1${\label{alg:crangeupd:firstcasestart}}}{{\label{alg:crangeupd:1start}}
				\If{$c \gets \rangecount{\ltarr{\TT}}{\ell}{r}{h}{h} \geq 1$}{{\label{alg:crangeupd:1cee}}
					$r' \gets \lfmap{\TT}{\select{h}{\ltarr{\TT}}{\rank{h}{\ltarr{\TT}}{r}}}$\;{\label{alg:crangeupd:1end}}
				}
			}
			\Else(\tcp*[f]{Case~2 in Lemma~\ref{lemma:crange}}){{\label{alg:crangeupd:2start}}
				\If{$c \gets \rangecount{\ltarr{\TT}}{\ell}{r}{h}{\sigma} \geq 1$}{{\label{alg:crangeupd:2cee}}
					$v \gets \rnv{\ltarr{\TT}}{\ell}{r}{h - 1}$\;{\label{alg:crangeupd:2val}}
					$x \gets \select{v}{\ltarr{\TT}}{\rank{v}{\ltarr{\TT}}{r}}$\;{\label{alg:crangeupd:2x}}
					$[\ell''..r''] \gets \maxi{\lcpinfty{\TT}}{x}{v + 1}$\;{\label{alg:crangeupd:2interval}}
					$y \gets \rangecount{\ltarr{\TT}}{\ell}{x-1}{h}{\sigma} + \rangecount{\ltarr{\TT}}{x+1}{r''}{h}{\sigma}$\;{\label{alg:crangeupd:2smaller}}
					$r' \gets \lfmap{\TT}{x} + c - (y + 1)$\;{\label{alg:crangeupd:2end}}
				}
			}
			\KwRet $[r' - c + 1..r']$\tcp*{returns $[r+1..r] = \emptyset$ if $c = 0$}{\label{alg:crangeupd:ell}}
		}
	\end{algorithm}\begin{proof}
		Let $c = \left|\crange{\TT}{P[i..]}\right|$.
		We show how Algorithm~\ref{alg:crangeupd} computes $c$ and $r'$ to obtain $[\ell'..r']$.
		
		\begin{cs1}
			Assume $e > 1$.
			This case is handled from Line~\ref{alg:crangeupd:1start} through Line~\ref{alg:crangeupd:1end}.
			By Lemma~\ref{lem:rangelarray}, $\lfmap{\TT}{j} \in \crange{\TT}{P[i..]}$ if and only if $\ltarr{\TT}[j] = h$ for each $j \in [\ell..r]$.
			We compute $c$ in Line~\ref{alg:crangeupd:1cee} and return an empty interval in Line~\ref{alg:crangeupd:ell} due to Line~\ref{alg:crangeupd:err} if $c \leq 0$.
			Assume $c \geq 1$.
			We compute the largest $x \in [\ell..r]$ such that $\ltarr{\TT}[x] = h$, and then $r' = \lfmap{\TT}{x}$ in Line~\ref{alg:crangeupd:1end}, with correctness following by Corollary~\ref{cor:lforder}.
		\end{cs1}
		
		\begin{cs2}
			Assume $e = 1$.
			This case is handled from Line~\ref{alg:crangeupd:2start} through Line~\ref{alg:crangeupd:2end}.
			By Lemma~\ref{lem:rangelarray}, $\lfmap{\TT}{j} \in \crange{\TT}{P[i..]}$ if and only if $\ltarr{\TT}[j] \geq h$ for each $j \in [\ell..r]$.
			Thus, we correctly compute $c$ in Line~\ref{alg:crangeupd:2cee}, and return an empty interval in Line~\ref{alg:crangeupd:ell} due to Line~\ref{alg:crangeupd:err} if $c \leq 0$.
			Assume $c \geq 1$.
			We compute the lowest value $v$ in $\ltarr{\TT}[\ell..r]$ greater than $h-1$ in Line~\ref{alg:crangeupd:2val} and determine the largest $x \in [\ell..r]$ such that $\ltarr{\TT}[x] = v$ in Line~\ref{alg:crangeupd:2x}.
			Let $[\ell''..r'']$ denote the interval computed in Line~\ref{alg:crangeupd:2interval}.
			Then $\lcpcount{\conj{\TT}{\carr{\TT}[j]}}{\conj{\TT}{\carr{\TT}[k]}} \geq v+1$ for each $j,k \in [\ell''..r'']$ by Lemma~\ref{lemma:lcpinftycomputation}.
			We compute $y = \lfmap{\TT}{x} - \ell'$.
			The following statements hold due to Lemma~\ref{lem:rotationomegaorder}.
			\begin{itemize}
				\item If $j \in [\ell..x-1]$ satisfies $\ltarr{\TT}[j] \geq h$, then $\lfmap{\TT}{j} < \lfmap{\TT}{x}$ since $v \leq \ltarr{\TT}[j]$. 
				\item If $j \in [x+1..r'']$  satisfies $\ltarr{\TT}[j] \geq h$, then $\lfmap{\TT}{j} < \lfmap{\TT}{x}$ since $v < \ltarr{\TT}[j]$ and $v < v + 1 \leq \lcpcount{\conj{\TT}{\carr{\TT}[j]}}{\conj{\TT}{\carr{\TT}[x]}}$.
				\item If $j \in [r''+1..r]$  satisfies $\ltarr{\TT}[j] \geq h$, then $\lfmap{\TT}{j} > \lfmap{\TT}{x}$ since we have $v \geq \lcpcount{\conj{\TT}{\carr{\TT}[j]}}{\conj{\TT}{\carr{\TT}[x]}}$.
			\end{itemize}
			We apply these results to compute $y$ in Line~\ref{alg:crangeupd:2smaller}, and then infer $r' = \lfmap{\TT}{x} + c - (y + 1)$ in Line~\ref{alg:crangeupd:2end}.
		\end{cs2}
	\end{proof}
	The next result is obtained from the computation of Demaine et al.'s~\cite{demaine14cartesian} Cartesian tree signature encoding of the given string.
	
	\begin{lem}\label{lem:preprocess}
		Given $P\in\extsigma^*$ of length $m$, we can process $P$ in $O(m)$ time such that we can subsequently compute $\pi(P[i..]\cdot\$)$ and $\rank{\Infty}{P[i..]}{m-i+1}$ in $O(1)$ time, for every $i \in [1..m]$.
	\end{lem}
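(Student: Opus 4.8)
The plan is to reduce both target quantities to one combinatorial statistic of the suffixes of $P$ --- the number of strict left-to-right minima --- and then to compute all of these at once with the stack-based scan that underlies Demaine et al.'s~\cite{demaine14cartesian} Cartesian tree signature encoding.

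First I would unfold the two definitions. Using the remark $\mpdenc{V}[..i]=\mpdenc{V[..i]}$ and the fact that no symbol is smaller than $\$$, one sees that $\rank{\Infty}{\mpdenc{P[i..]}}{m-i+1}$ counts exactly the positions $k\in[i..m]$ with $P[k]\neq\$$ and $P[k]<P[k']$ for all $i\le k'<k$, and that every such $k$ lies strictly to the left of the first $\$$ in $P[i..m]$ (once the running prefix minimum has absorbed a $\$$, no later position can yield an $\Infty$). Write $a_i$ for this count and put $a_{m+1}=0$. Applying the interpretation of $\rtsenc{\cdot}$ given right after its definition to $V=P[i..]\cdot\$$, so that $\rta{V}{1}=P[i+1..m]\cdot\$\cdot P[i]$, one obtains: $\pi(P[i..]\cdot\$)=\$$ when $P[i]=\$$, and otherwise $\pi(P[i..]\cdot\$)$ is the number of the left-to-right minima counted by $a_{i+1}$ whose value is $\geq P[i]$.

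Next I would use that the left-to-right minima counted by $a_{i+1}$ have strictly decreasing values, so those of value $\geq P[i]$ form a prefix of that sequence, while the remaining ones, of value $<P[i]$, are precisely the left-to-right minima that $P[i..m]$ inherits from $P[i+1..m]$ besides $P[i]$ itself. This gives the recurrence $a_i=1+\bigl(a_{i+1}-\pi(P[i..]\cdot\$)\bigr)$, i.e. $\pi(P[i..]\cdot\$)=a_{i+1}-a_i+1$, whenever $P[i]\neq\$$. It therefore suffices to compute the array $(a_i)_{1\le i\le m+1}$, which I would do in a single right-to-left scan maintaining a stack holding the values of the current left-to-right minima, strictly decreasing from the top: at step $i$, reset the stack to empty if $P[i]=\$$, and otherwise pop every entry that is $\geq P[i]$ and then push $P[i]$; in both cases record $a_i$ as the resulting stack size (and, if convenient, the number of pops, which by the identity above equals $\pi(P[i..]\cdot\$)$ when $P[i]\neq\$$). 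Correctness is exactly the recurrence, and the running time is $O(m)$ because every symbol is pushed and popped at most once and a reset costs $O(1)$. After this $O(m)$ preprocessing, $\rank{\Infty}{\mpdenc{P[i..]}}{m-i+1}=a_i$, and $\pi(P[i..]\cdot\$)$ is $\$$ if $P[i]=\$$ and $a_{i+1}-a_i+1$ otherwise, so each query is answered in $O(1)$ time.

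The main obstacle is not the stack argument but the bookkeeping around $\$$: one has to check that truncating each suffix at its first $\$$ is consistent for both quantities and that the stack reset reproduces precisely this truncation --- which is exactly where the remark on $\mpdenc{\cdot}$ and the explicit $\$$-case of $\rtsenc{\cdot}$ do the work --- and one should also double-check the boundary case $i=m$, where $\rta{V}{1}=\$\cdot P[m]$ and the formula correctly yields $\pi(P[m..]\cdot\$)=0$.
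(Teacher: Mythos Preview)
Your proof is correct and is essentially the approach the paper has in mind: the paper only states that the lemma ``is obtained from the computation of Demaine et al.'s Cartesian tree signature encoding of the given string,'' and what you wrote is precisely that stack-based right-to-left scan spelled out in full, together with the identity $\pi(P[i..]\cdot\$)=a_{i+1}-a_i+1$ and the careful handling of the $\$$-case. Your treatment is in fact more detailed than the paper's one-line justification; the only cosmetic point is that a stack reset is $O(1)$ only after amortization (each pushed element is removed at most once, by pop or by reset), which still gives total $O(m)$ time.
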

	
	\begin{thm}\label{thm:indexexists}
		Let $\emptyset \neq \TT = \{\alltexts{T}{d}\} \subset \extsigma^+$ and $n = \left|T_1\cdots T_d\right|$.
		There exists a data structure occupying $O(n \lg \sigma)$ bits of space that solves \Count{} in $O(m \frac{\lg\sigma\lg n}{\lg\lg n})$ time, where $m$ is pattern length.
	\end{thm}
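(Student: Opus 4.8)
The plan is to let the data structure consist of the three strings $\ltarr{\TT}$, $\ftarr{\TT}$ and $\lcpinfty{\TT}$ introduced above, each stored in the dynamic string representation of \cref{lemma:dynamicstringrmq}. All three have length $n$; $\ltarr{\TT}$ and $\ftarr{\TT}$ range over $\extsigma$ and $\lcpinfty{\TT}$ over $\Sigma$, so after a trivial order-preserving relabeling (identifying $\$$ with $0$) each is a string over $[0..\sigma']$ with $\sigma' \in O(\sigma)$ and $\sigma' \le n^{O(1)}$, and \cref{lemma:dynamicstringrmq} applies. The total space is $O(n\lg\sigma)$ bits, as claimed. Note that the structure stores neither $\carr{\TT}$ nor $\icarr{\TT}$ explicitly, as the backward search operates solely on interval endpoints together with these three strings.

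For a \Count{} query with a pattern $P\in\Sigma^*$ of length $m$, I would return $n$ at once if $P=\varepsilon$, since $\crange{\TT}{\varepsilon}=[1..n]$. Otherwise I would run the backward search: preprocess $P$ in $O(m)$ time via \cref{lem:preprocess}; initialize $[\ell..r]\gets[1..n]$, which equals $\crange{\TT}{P[m+1..]}$; and for $i=m,m-1,\dots,1$ obtain $h=\pi(P[i..]\cdot\$)$ and $e=\rank{\Infty}{\mpdenc{P[i..]}}{m-i+1}$ in $O(1)$ time by \cref{lem:preprocess} and replace $[\ell..r]$ by the output of \textsc{crangeupd}$(e,h,[\ell..r],\ltarr{\TT},\ftarr{\TT},\lcpinfty{\TT})$ of \cref{alg:crangeupd}. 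Since on entry to iteration $i$ the invariant $[\ell..r]=\crange{\TT}{P[i+1..]}$ holds, \cref{lemma:crange} yields $[\ell..r]=\crange{\TT}{P[i..]}$ on exit; hence after the loop $[\ell..r]=\crange{\TT}{P}$, and by \cref{cor:uniquecrange} the number $r-\ell+1$ is precisely the answer to \Count{}, which I would return.

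For the running time I would check that a single call of \cref{alg:crangeupd} runs in $O(\frac{\lg\sigma\lg n}{\lg\lg n})$ time: it issues a constant number of the queries \textup{access}, \textup{rank}, \textup{rnkcnt} (the $\rangecount{\cdot}{\cdot}{\cdot}{\cdot}{\cdot}$ calls, including the two split counts in Line~\ref{alg:crangeupd:2smaller}), \textup{select}, \textup{RNV} and \textup{MI} on $\ltarr{\TT}$, $\ftarr{\TT}$ and $\lcpinfty{\TT}$, together with a constant number of evaluations of $\lfmap{\TT}{\cdot}$. Each query of the first kind costs $O(\frac{\lg\sigma\lg n}{\lg\lg n})$ by \cref{lemma:dynamicstringrmq}, and each evaluation of $\lfmap{\TT}{\cdot}$ reduces by \cref{cor:fastlfmapcompute} to one \textup{access}, one \textup{rank} and one \textup{select}, staying within the same bound; the remaining arithmetic is $O(1)$ word operations. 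Summing over the $m$ iterations and adding the $O(m)$ preprocessing yields the claimed $O(m\frac{\lg\sigma\lg n}{\lg\lg n})$.

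The main obstacle I anticipate is exactly this last accounting: matching every primitive invoked in \cref{alg:crangeupd} to one of the six operations supported by \cref{lemma:dynamicstringrmq}, verifying that each is applied to a string whose alphabet size is $n^{O(1)}$, and handling the boundary cases of the $e=1$ branch — all of which are already subsumed by \cref{lemma:crange}, so no argument beyond the lemmas and corollaries established above should be needed.
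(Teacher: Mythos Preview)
Your proposal is correct and follows essentially the same approach as the paper: represent $\ftarr{\TT}$, $\ltarr{\TT}$ and $\lcpinfty{\TT}$ via \cref{lemma:dynamicstringrmq}, preprocess $P$ with \cref{lem:preprocess}, and iterate \cref{lemma:crange} from $i=m$ down to $1$, noting that each call to \cref{alg:crangeupd} issues $O(1)$ supported queries (including the LF-mapping via \cref{cor:fastlfmapcompute}). Your write-up is in fact more explicit than the paper's on the alphabet relabeling and on matching each primitive of \cref{alg:crangeupd} to an operation of \cref{lemma:dynamicstringrmq}, but the argument is the same.
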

	
	\begin{proof}
		We represent $\ftarr{\TT}$, $\ltarr{\TT}$ and $\lcpinfty{\TT}$ by the data structure of Lemma~\ref{lemma:dynamicstringrmq}, which leads to the claimed space complexity.
		We preprocess a pattern $P\in\Sigma^+$ of length $m$ with Lemma~\ref{lem:preprocess}, and compute $\crange{\TT}{P[i..]}$ from $\crange{\TT}{P[i+1..]}$ for each $i \in [1..m]$ in descending order leveraging Lemma~\ref{lemma:crange}. 
		Since each conjugate range update takes $O(1)$ queries, the claimed complexity for solving \Count{} follows from Lemma~\ref{lemma:dynamicstringrmq}.
	\end{proof}
	We call the data structure of Theorem~\ref{thm:indexexists} the \emph{cBWT index} of $\TT$.
	The cBWT index of the running example is presented in Figure~\ref{fig:cbwt}.
	\begin{figure}[t]
		\centering
		\begin{tabular}{|c||c|l|c|c|c|c|l|} 
			\hline
			$i$ & $\carr{\TT}[i]$ & $\conj{\TT}{\carr{\TT}[i]}$ & $\lfarr{\TT}[i]$ & $\ftarr{\TT}[i]$ & $\ltarr{\TT}[i]$ & $\lcpinfty{\TT}[i]$ & $\mpdenc{\conj{\TT}{\carr{\TT}[i]}}$\\
			\hline
			 1 & $\mathtt{8}$ & $\mathtt{4478}$ & $\mathtt{8}$ & $\mathtt{1}$ & $\mathtt{0}$ & $\mathtt{0}$ & $\mathtt{\Infty111}$\\
			 2 & $\mathtt{9}$ & $\mathtt{4784}$ & $\mathtt{1}$ & $\mathtt{2}$ & $\mathtt{1}$ & $\mathtt{1}$ & $\mathtt{\Infty113}$ \\
			 3 & $\mathtt{2}$ & $\mathtt{125}$ & $\mathtt{9}$ & $\mathtt{2}$ & $\mathtt{0}$ & $\mathtt{1}$ & $\mathtt{\Infty11}$ \\
			 4 & $\mathtt{5}$ & $\mathtt{3635}$ & $\mathtt{10}$ & $\mathtt{2}$ & $\mathtt{0}$ & $\mathtt{1}$ & $\mathtt{\Infty121}$ \\
			 5 & $\mathtt{7}$ & $\mathtt{3536}$ & $\mathtt{11}$ & $\mathtt{2}$ & $\mathtt{0}$ & $\mathtt{1}$ & $\mathtt{\Infty121}$ \\
			 6 & $\mathtt{10}$ & $\mathtt{7844}$ & $\mathtt{2}$ & $\mathtt{1}$ & $\mathtt{2}$ & $\mathtt{1}$ & $\mathtt{\Infty1\Infty1}$ \\
			 7 & $\mathtt{3}$ & $\mathtt{251}$ & $\mathtt{3}$ & $\mathtt{1}$ & $\mathtt{2}$ & $\mathtt{1}$ & $\mathtt{\Infty1\Infty}$ \\
			 8 & $\mathtt{11}$ & $\mathtt{8447}$ & $\mathtt{6}$ & $\mathtt{0}$ & $\mathtt{1}$ & $\mathtt{1}$ & $\mathtt{\Infty\Infty11}$ \\
			 9 & $\mathtt{1}$ & $\mathtt{512}$ & $\mathtt{7}$ & $\mathtt{0}$ & $\mathtt{1}$ & $\mathtt{2}$ & $\mathtt{\Infty\Infty1}$ \\
			 10 & $\mathtt{4}$ & $\mathtt{5363}$ & $\mathtt{4}$ & $\mathtt{0}$ & $\mathtt{2}$ & $\mathtt{2}$ & $\mathtt{\Infty\Infty12}$ \\
			 11 & $\mathtt{6}$ & $\mathtt{6353}$ & $\mathtt{5}$ & $\mathtt{0}$ & $\mathtt{2}$ & $\mathtt{2}$ & $\mathtt{\Infty\Infty12}$ \\
			\hline
		\end{tabular}
		\caption{The cBWT index of our running example $\TT = \{\mathtt{512},\mathtt{5363},\mathtt{4478}\}$.
			For a complete overview, we refer to \cref{fig:cbwtfull} in \cref{app:fullexample}, which additionally depicts the FL-mapping, the $\rtsenc{..}$-encoding of the (sorted) conjugates, and the conjugates in text position order like in \cref{fig:conjugatearray}.
}
		\label{fig:cbwt}
	\end{figure}
	\begin{rem}\label{rem:kimconceptual}
		The (conceptual) integer index by Kim and Cho~\cite[Section~4.1]{kim21compact} emerges as a special case of the cBWT index if $d = 1$, $\rank{\$}{T_1}{n_1} = 1$ and $T_1[n_1] = \$$ by substituting the occurrence of $\$$ in $\ftarr{\TT}$ and $\ltarr{\TT}$ for $-1$. 
	\end{rem}
	
	\section{Construction in Compact Space}
	
	We start with the construction of the cBWT index of a single text satisfying convenient properties, and then show how to leverage that result to construct the cBWT index of arbitrary single texts.
	Subsequently, we show how an existing cBWT index can be extended by another text.
	Finally, we can leverage these two results step by step to construct the cBWT index of an arbitrary number of texts iteratively.
	
	\subsection{Single Text cBWT Index}\label{subsection:single}
	
	We need another technical result before we can tackle the construction of the cBWT index of a single text.
	
	\begin{restatable}{lem}{lcpchangerota}\label{lem:lcpinftyrota}
		Let $V,U \in \extsigma^+$, $\rta{V}{1} \ctprec \rta{U}{1}$, and $e = \lcpcount{\rta{V}{1}}{\rta{U}{1}}$.
		Then
		\[
		\lcpcount{V}{U} =
		\begin{cases*}
			0 & \text{if $\pi(U) = \$$ or $\pi(U) = \$$,}\\
			e - \pi(V) + 1 & \text{if $V \ctprec U$ and $\$ \neq \pi(V) = \pi(U) < e$,}\\
			1 & \text{otherwise.}
		\end{cases*}
		\]
	\end{restatable}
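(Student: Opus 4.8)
We rely on the fact that for $A\in\extsigma^+$ one has $\mpdenc{A}[i]=\Infty$ exactly when $A[i]$ is a \emph{left-to-right minimum}, i.e.\ $\$\neq A[i]<A[j]$ for all $j<i$; hence $\lcpcount{V}{U}$ counts the left-to-right minima of $V$ (equivalently of $U$) among the first $L:=\lcplength{\mpdenc{V}}{\mpdenc{U}}$ positions. If $\pi(V)=\$$ then $V[1]=\$$, and since $\$$ is minimal, $\mpdenc{V}$ has no $\Infty$, so $\lcpcount{V}{U}=0$; if $\pi(U)=\$$ then the common prefix of $\mpdenc{V}$ and $\mpdenc{U}$ is a prefix of $\mpdenc{U}$ and likewise has no $\Infty$. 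So assume $V[1],U[1]\neq\$$; then $\mpdenc{V}[1]=\mpdenc{U}[1]=\Infty$, so $\lcpcount{V}{U}\ge 1$. If in addition $V[1]$ is no larger than every symbol of $V$, then position~$1$ is the only left-to-right minimum of any prefix of $V$, so $\lcpcount{V}{U}=1$; in that case $\pi(V)$ equals the total number of $\Infty$'s in $\mpdenc{\rta{V}{1}}$, which is $\ge e$, so the claimed value is also $1$, and symmetrically for $U$. We may therefore also assume that neither $V[1]$ nor $U[1]$ is minimal in its string.

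The key structural fact, which is checked directly from the definition of $\mpdenc{\cdot}$ by splitting on whether $V[1]<V[i]$, is: for $i\in[2..|V|]$ we have $\mpdenc{V}[i]=\mpdenc{\rta{V}{1}}[i-1]$ unless $V[i]$ is a left-to-right minimum of $V[2..]$ with $V[i]>V[1]$, in which case $\mpdenc{\rta{V}{1}}[i-1]=\Infty$ while $\mpdenc{V}[i]=i-1$ (the parent of $V[i]$ being $V[1]$). Since the left-to-right minima of $\rta{V}{1}$ have strictly decreasing values, these exceptional positions $i-1$ are exactly the first $\pi(V)$ left-to-right minima of $\rta{V}{1}$; consequently the $\Infty$'s of $\mpdenc{V}$ sit at position~$1$ together with the shifts (by one) of the left-to-right minima of $\rta{V}{1}$ ranked \emph{after} the first $\pi(V)$ of them. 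We record the analogue for $U$, and put $e=\lcpcount{\rta{V}{1}}{\rta{U}{1}}$ and $L':=\lcplength{\mpdenc{\rta{V}{1}}}{\mpdenc{\rta{U}{1}}}$, so that the first $L'$ positions of the two rotated encodings agree and contain exactly $e$ left-to-right minima.

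Combining the two descriptions: along a common prefix the encodings share their left-to-right-minimum positions, and a finite parent distance at position $p$ is at most $p-1$, so replacing an $\Infty$ there by $p$ can never equate previously distinct symbols; using $\mpdenc{A}[..k]=\mpdenc{A[..k]}$ one gets $L\le L'+1$. Then: (1)~if $\pi(V)=\pi(U)<e$, the first $\pi(V)$ of the $e$ left-to-right minima inside $[1..L']$ are exceptional for both $V$ and $U$ and are replaced identically, so $\mpdenc{V}$ and $\mpdenc{U}$ agree through position $L'+1$ and split afterwards; hence $L=L'+1$ and $\lcpcount{V}{U}=1+(e-\pi(V))=e-\pi(V)+1$; (2)~if $\pi(V)=\pi(U)\ge e$, all $e$ of those minima are exceptional for both and get replaced, so $\mpdenc{V}$ carries no $\Infty$ beyond position~$1$ within $[1..L'+1]\supseteq[1..L]$, giving $\lcpcount{V}{U}=1$; (3)~if $\pi(V)\neq\pi(U)$, then either $e\le\min\{\pi(V),\pi(U)\}$ and we argue as in~(2), or the first index at which $\mpdenc{V}$ and $\mpdenc{U}$ differ is the shift of the $(\min\{\pi(V),\pi(U)\}+1)$-th left-to-right minimum of $\rta{V}{1}$, which is no larger than the position of the second $\Infty$ of $\mpdenc{V}$ (the shift of its $(\pi(V)+1)$-th left-to-right minimum), so once more $\lcpcount{V}{U}=1$. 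By Lemma~\ref{lem:rotationomegaorder}, $V\ctprec U$ holds automatically whenever $\pi(V)=\pi(U)\neq\$$ and $\pi(V)<e$, so the three cases match the statement.

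The main obstacle is the bookkeeping in step~(3): keeping straight the index shift between $V$ and $\rta{V}{1}$, the fact that $\pi(V)$ is measured on $\rta{V}{1}$ while the counted $\Infty$'s live in $\mpdenc{V}$, and the boundary situations — occurrences of $\$$, equal symbols (where the strict/non-strict conventions in ``left-to-right minimum'' and in the parent distance must be handled consistently), $V[1]$ or $U[1]$ being minimal, and $L'$ reaching $\min\{|V|,|U|\}$ so that one rotated encoding is a prefix of the other — the last being precisely where the strictness of $\rta{V}{1}\ctprec\rta{U}{1}$ prevents the $\omega$-encodings from agreeing forever.
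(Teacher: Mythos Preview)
Your argument is correct, and it is organized differently from the paper's. The paper splits first on the direction of the $\omega$-preorder between $V$ and $U$ (Case~1: $U\ctprec V$; Case~2: $V\ctprec U$ with three subcases on the relations among $e,\pi(V),\pi(U)$), and in each branch carries out a direct position-by-position comparison of $\mpdenc{V}$ and $\mpdenc{U}$ using $\select{\Infty}{\cdot}{2}$ as the pivot. You instead extract a single structural lemma---that $\mpdenc{V}[i]$ agrees with $\mpdenc{\rta{V}{1}}[i-1]$ except at the first $\pi(V)$ left-to-right minima of $\rta{V}{1}$, where it becomes $i-1$---and then split purely on the combinatorics of $\pi(V),\pi(U),e$. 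This buys you a uniform mechanism (count surviving $\Infty$'s after the ``exceptional'' replacements) and makes the bound $L\le L'+1$ a one-line consequence; the paper's organization instead keeps the connection to Lemma~\ref{lem:rotationomegaorder} explicit in each branch, which makes it clearer why the direction $V\ctprec U$ matters only in the middle case.

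Two points to tighten. First, your claim that the mismatch at position $L'+1$ of the rotated encodings propagates to a mismatch at position $L'+2$ of $\mpdenc{V},\mpdenc{U}$ deserves one more sentence: since $\rta{V}{1}\ctprec\rta{U}{1}$, the smaller side at $L'+1$ cannot be $\Infty$, so at most one side is exceptional there; if it is, the replacement value is $L'+1$, while the other side is either $\Infty$ or a parent distance $\le L'$, so they still differ. Second, your final paragraph flags but does not dispatch the boundary case $L'=\min\{|V|,|U|\}$. The paper handles it explicitly in its Case~2.3; in your framework it falls under $e\le\min\{\pi(V),\pi(U)\}$ (because when $L'=|V|$, say, $e$ equals the total number of $\Infty$'s in $\mpdenc{\rta{V}{1}}$ and hence $\pi(V)\le e$), so all $\Infty$'s in the common range are exceptional and the answer is $1$ as in your case~(2). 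Saying this would let you drop the hedging in the last paragraph.
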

	
	\begin{lem}\label{lem:singlecBWT}
		Let $R \in \extsigma^+$, $\rho = \left|R\right|$, $R[\rho] = \$$, and $\rank{\$}{R}{\rho} = 1$.
		Algorithm~\ref{alg:singlecBWT} correctly computes $\icarr{\{bR\}}[\rho+1]$ and the cBWT index of $\{bR\}$ for each $b \in \Sigma$.
	\end{lem}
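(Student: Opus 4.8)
The plan is to read Algorithm~\ref{alg:singlecBWT} as the Cartesian-tree analogue of the online ``prepend one symbol'' update of a BWT-type index: from the cBWT index of $\{R\}$ it produces that of $\{bR\}$ by inserting the single new conjugate $bR = \conj{\{bR\}}{1} = \rta{bR}{0}$, and it reports the position data the iteration needs. First I would prove the structural fact that the $\rho$ conjugates $\rta{bR}{1},\dots,\rta{bR}{\rho}$ of $bR$ are obtained from the $\rho$ conjugates $\rta{R}{0},\dots,\rta{R}{\rho-1}$ of $R$ by splicing $b$ in right behind the unique $\$$, and that this splicing changes neither their relative $\omega$-order nor their $\pi$-values. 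The reason is that $\$$ is the unique, smallest symbol, so $\mpdenc{W^{\omega}}$ contains no $\Infty$ after its first $\$$; hence by Corollary~\ref{cor:prec3z} every $\omega$-comparison between two of these conjugates is decided on the prefix up to (and including) the first $\$$, which is exactly the part untouched by the splicing, and likewise the prefix-minima that determine $\pi(\cdot)$ all occur before the first $\$$. Consequently $\carr{\{bR\}}$ arises from $\carr{\{R\}}$ by inserting $bR$ at its (yet to be determined) rank $p'$ and shifting the remaining ranks, and $\pi(\conj{\{bR\}}{i}) = \pi(\conj{\{R\}}{i-1})$ for all $i \in [2..\rho+1]$.

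\textbf{Verifying the update.} Using these facts, building the new index reduces to: (i) computing the signature $\pi(bR)$ of the new conjugate, available in $O(1)$ time by Lemma~\ref{lem:preprocess}; (ii) computing the rank $p'$ of $bR$; (iii) rebuilding $\ftarr{}$, $\ltarr{}$ and $\lcpinfty{}$. For (iii) I would argue: since $\ftarr{\TT}[i]$ is the $\pi$-value of the rank-$i$ conjugate and old $\pi$-values are unchanged, $\ftarr{\{bR\}}$ is $\ftarr{\{R\}}$ with $\pi(bR)$ inserted at position $p'$; since $\ltarr{\TT}[i]$ is the $\pi$-value of the cyclic predecessor of the rank-$i$ conjugate, and the only changed predecessor relations are that $bR$ (rank $p'$) is new, with cyclic predecessor the $\$$-conjugate of rank $1$, and that $\conj{\{bR\}}{2} = \rta{bR}{1} = R\cdot b$ now has cyclic predecessor $bR$ rather than $R$, the string $\ltarr{\{bR\}}$ is $\ltarr{\{R\}}$ with its unique entry $\$$ --- located at $\icarr{\{R\}}[1] = \select{\$}{\ltarr{\{R\}}}{1}$ --- replaced by $\pi(bR)$ and a fresh $\$$ inserted at position $p'$; and $\lcpinfty{\{bR\}}$ is $\lcpinfty{\{R\}}[1..p'-1]$, followed by the at most two new entries $\lcpcount{bR}{Z}$ and $\lcpcount{Z'}{bR}$ for the sorted-order neighbours $Z,Z'$ of $bR$, followed by $\lcpinfty{\{R\}}[p'..\rho]$, the inherited entries being correct because $\lcpcount{\cdot}{\cdot}$ of two old conjugates counts only $\Infty$'s occurring before a $\$$ and is therefore splicing-invariant; the two new entries are obtained with $O(1)$ queries from the relations between $bR$, $\rta{bR}{1}$ and their neighbours via Lemma~\ref{lem:lcpinftyrota} and Lemma~\ref{lemma:lcpinftycomputation}. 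Every insertion, the single replacement, and every query costs $O(\frac{\lg\sigma\lg n}{\lg\lg n})$ time by Lemma~\ref{lemma:dynamicstringrmq}. Finally $\conj{\{bR\}}{\rho+1} = \rta{bR}{\rho} = \$\,b\,R[1..\rho-1]$ is the only conjugate of $bR$ beginning with $\$$ and hence the $\ctprec$-minimum, whence $\icarr{\{bR\}}[\rho+1] = 1$; the algorithm returns this value together with the rank $p'$ of the newly inserted conjugate, which the next iteration consumes.

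\textbf{Main obstacle.} The crux is step~(ii), the computation of $p'$: it is the Cartesian-tree counterpart of locating where a newly prepended suffix falls in the suffix array, which for a plain FM-index is a single LF-step but here must pass through the $\pi$/LCP machinery. I expect to express $p'$ as $1$ plus the number of conjugates that are $\ctprec bR$, and to evaluate this count with $O(1)$ rank/RNV/MI queries on $\ltarr{\{R\}}$ and $\lcpinfty{\{R\}}$ by comparing $bR$ against the old conjugates through Lemma~\ref{lem:rotationomegaorder} (anchored at the known rank $\icarr{\{R\}}[1]$), in the spirit of a single iteration of Algorithm~\ref{alg:crangeupd}. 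A secondary nuisance, to be dispatched by a short case distinction on whether $\icarr{\{R\}}[1] < p'$, is keeping the insertion shifts consistent across $\ftarr{\{bR\}}$, $\ltarr{\{bR\}}$ and $\lcpinfty{\{bR\}}$.
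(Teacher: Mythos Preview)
Your overall decomposition is the paper's: show that the conjugates $\rta{bR}{1},\dots,\rta{bR}{\rho}$ inherit both their relative $\omega$-order and their $\pi$-values from $\rta{R}{0},\dots,\rta{R}{\rho-1}$ (because everything is decided before the unique~$\$$), compute the insertion rank $p'=c+1$ of $bR$, then patch $\ftarr{}$, $\ltarr{}$ and $\lcpinfty{}$ by one insertion each plus the single overwrite in $\ltarr{}$ and the at most two recomputed $\lcpinfty{}$ entries via Lemma~\ref{lem:lcpinftyrota} and Lemma~\ref{lemma:lcpinftycomputation}. All of that matches the paper and is correct.

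The genuine gap is in your ``main obstacle''. You assert that $p'$ can be obtained with $O(1)$ rank/RNV/MI queries ``in the spirit of a single iteration of Algorithm~\ref{alg:crangeupd}''. That analogy breaks: in Algorithm~\ref{alg:crangeupd} the threshold on $\ltarr{}$ is a single fixed value, because the pattern determines it. Here, by Lemma~\ref{lem:rotationomegaorder}, a conjugate $W$ whose left rotation sits at rank~$j$ satisfies $W\ctprec bR$ iff $\ltarr{\{R\}}[j]\ge\min\{e_j,\pi(bR)\}$ with $e_j=\lcpcount{\conj{\{R\}}{\carr{\{R\}}[j]}}{R}$, and $e_j$ \emph{varies with~$j$}. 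As $j$ moves away from the anchor~$y$ the value $e_j$ steps down through $\pi(bR),\pi(bR)-1,\dots,1$, and so does the threshold. This is exactly why the paper's algorithm has the \textbf{for}-loop ``$i\gets\pi(bR)$ \textbf{to} $1$'': for each band $J_i=\{j:e_j=i\}$ it issues one $\textup{MI}$ query to delimit the band and one $\textup{rnkcnt}$ query with threshold~$i$. One extension therefore costs $\Theta(1+\pi(bR))$ queries, not $O(1)$; the linear total in Corollary~\ref{cor:kimindexconstruction} is obtained only through the amortisation of Lemma~\ref{lem:sumrtsenc}, which the paper invokes explicitly and your proposal omits. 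Without the loop and the amortisation your argument neither verifies what the algorithm actually does nor yields the stated running time.

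A minor point: the algorithm returns the single value $c+1$, which \emph{is} your $p'=\icarr{\{bR\}}[1]$, and this is what the next call consumes as~$y$. Your reading that it additionally (or instead) returns the constant $\icarr{\{bR\}}[\rho+1]=1$ for the $\$$-starting conjugate misinterprets the caption; the index in the lemma statement and caption should be read as~$1$, i.e., the rank of the newly inserted conjugate $bR$.
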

	\begin{algorithm}[tbp]
		\caption{
			Computing $c + 1 =\icarr{\{bR\}}[\rho+1]$ and updating the cBWT of $\{R\}$ to that of $\{bR\}$ for $b \in \Sigma$.
			Here, $R \in \extsigma^+$, $\rho = \left|R\right|$, $R[\rho] = \$$, $\rank{\$}{R}{\rho} = 1$, and $y = \icarr{\{R\}}[\rho]$.
		}
		\label{alg:singlecBWT}
		\SetKwFunction{FMain}{extend}
		\SetKwProg{Fn}{Function}{:}{}
		\Fn{\FMain{$\pi(bR),y,\ftarr{\{R\}},\ltarr{\{R\}},\lcpinfty{\{R\}}$}}{
$[\ell..r] \gets \maxi{\lcpinfty{\{R\}}}{y}{\pi(bR) + 1}$\;{\label{alg:singlecBWT:startleft}}
$c \gets \rangecount{\ltarr{\{R\}}}{\ell}{y-1}{\pi(bR)}{\sigma} + 2 + \rangecount{\ltarr{\{R\}}}{y+1}{r}{\pi(bR)+1}{\sigma}$\;\For{$i \gets \pi(bR)$ \KwTo $1$}{
				$r' \gets \ell - 1$\;{\label{alg:singlecBWT:rightborder}}
				$[\ell..r] \gets \maxi{\lcpinfty{\{R\}}}{y}{i}$\;{\label{alg:singlecBWT:leftborder}}
				$c \gets c + \rangecount{\ltarr{\{R\}}}{\ell}{r'}{i}{\sigma}$\;}
			$p, s \gets 1$\tcp*{$p$ stores $\lcpinfty{\{bR\}}[c+1]$, and, if $c<\rho$, $s$ stores $\lcpinfty{\{bR\}}[c+2]$}{\label{alg:singlecBWT:lcpstart}}
			\If{$\flmap{\{R\}}{c} < y$ and $\pi(bR) = \ftarr{\{R\}}[c]$}{
				$p \gets \rnv{\lcpinfty{\{R\}}}{\flmap{\{R\}}{c}+1}{y}{-1} - \pi(bR) + 1$\;
			}
			\If{$c < \rho$ and $\flmap{\{R\}}{c+1} > y$ and $\pi(bR) = \ftarr{\{R\}}[c]$}{
				$s \gets \rnv{\lcpinfty{\{R\}}}{y+1}{\flmap{\{R\}}{c+1}}{-1} - \pi(bR) + 1$\;
			}
			$\inst{\lcpinfty{\{R\}}}{c+1}{p}$\;
			\lIf{$c < \rho$}{{\label{alg:singlecBWT:lcpend}}
				$\lcpinfty{\{R\}}[c+2] \gets s$
			}
			$\inst{\ftarr{\{R\}}}{c+1}{\pi(bR)}$\;{\label{alg:updateks:flstart}}
			$\ltarr{\{R\}}[y] \gets \pi(bR)$\;
			$\inst{\ltarr{\{R\}}}{c+1}{\$}$\;{\label{alg:updateks:flend}}
			\KwRet $c + 1, \ftarr{\{R\}},\ltarr{\{R\}},\lcpinfty{\{R\}}$\tcp*{$\icarr{\{bR\}}[\rho+1]$ and the cBWT index of $\{bR\}$}
		}
	\end{algorithm}
	\begin{proof}
		For each $i \in [0..\pi(bR)]$, let $J_i = [\ell_i..r_i]$ maximal such that $r_{i} \leq \icarr{\{R\}}[\rho]$ and $\lcpcount{R}{\conj{\{R\}}{\carr{\{R\}}[j]}} = i$ for each $j \in J_i$, and $J_{\pi(bR)+1} = [\ell_{\pi(bR)+1}..r_{\pi(bR)+1}]$ maximal such that $\lcpcount{R}{\conj{\{R\}}{\carr{\{R\}}[j]}} \geq \pi(bR)+1$ for each $j \in J_{\pi(bR)+1} $, i.e., the left and right boundary of the former are computed in Line~\ref{alg:singlecBWT:leftborder} and Line~\ref{alg:singlecBWT:rightborder}, respectively, and the latter in Line~\ref{alg:singlecBWT:startleft}.
		The following statements are due to the location of the single $\$$ in each conjugate of $R$ and Lemma~\ref{lem:rotationomegaorder}.
		\begin{itemize}
			\item If $j \in [r_{\pi(bR)+1} + 1..\rho]$, then $bR \ctprec \conj{\{R\}}{\carr{\{R\}}[\lfmap{\{R\}}{j}]}$.
			\item If $j \in [\icarr{\{R\}}[\rho]+1..r_{\pi(bR)+1}]$, then $\conj{\{R\}}{\carr{\{R\}}[\lfmap{\{R\}}{j}]} \ctprec bR $ if and only if $\ltarr{\{R\}}[j] \geq \pi(bR)+1$.
			\item If $j = \icarr{\{R\}}[\rho]$, then $\conj{\{R\}}{\carr{\{R\}}[\lfmap{\{R\}}{j}]} \ctprec bR $.
			\item If $j \in [\ell_{\pi(bR)+1}..\icarr{\{R\}}[\rho]-1]$, then $\conj{\{R\}}{\carr{\{R\}}[\lfmap{\{R\}}{j}]} \ctprec bR $ if and only if $\ltarr{\{R\}}[j] \geq \pi(bR)$.
			\item If $i \in [1..\pi(bR)]$ and $j \in J_i$, then $\conj{\{R\}}{\carr{\{R\}}[\lfmap{\{R\}}{j}]} \ctprec bR $ if and only if $\ltarr{\{R\}}[j] \geq i$.
			\item $J_0 = [1..1]$ and  $(\$\cdot R)[..\rho] = \conj{\{R\}}{\carr{\{R\}}[\lfmap{\{R\}}{1}]} \ctprec bR $.
		\end{itemize}
		We apply these results from Line~\ref{alg:singlecBWT:startleft} through Line~\ref{alg:singlecBWT:leftborder} to compute the number $c$ of conjugates of $R$ smaller than $bR$ according to $\omega$-preorder.
		Since $\conj{\{R\}}{k}[..\select{\$}{\conj{\{R\}}{k}}{1}] = \conj{\{bR\}}{k+1}[..\select{\$}{\conj{\{R\}}{k}}{1}]$ for each $k \in [1..\rho]$ by assumption on $R$ and $b$,
		\[
		\conj{\{bR\}}{\carr{\{bR\}}[i]}[..x] =
		\begin{cases}
			\conj{\{R\}}{\carr{\{R\}}[i]}[..x] & \text{if  $i \in [1..c-1]$,}\\
			\conj{\{R\}}{\carr{\{R\}}[i-1]}[..x] & \text{if  $i-1 \in [c..\rho]$,}\\
			bR & \text{otherwise,}
		\end{cases}
		\]
		for each $i \in [1..\rho+1]$ with $x = \select{\$}{\conj{\{bR\}}{\carr{\{bR\}}[i]}}{1}$, i.e., $c$ is also the number of conjugates of $bR$ smaller than $bR$ according to $\omega$-preorder.
		Subsequently, Algorithm~\ref{alg:singlecBWT} updates the cBWT index of $\{R\}$ from Line~\ref{alg:singlecBWT:lcpstart} through Line~\ref{alg:updateks:flend}.
		By assumption on $b$ and $R$, $\lcpinfty{\{R\}}[..c] = \lcpinfty{\{bR\}}[..c]$ and $\lcpinfty{\{R\}}[c+2..] = \lcpinfty{\{bR\}}[c+3..]$.
		From Line~\ref{alg:singlecBWT:lcpstart} through Line~\ref{alg:singlecBWT:lcpend} we leverage Lemma~\ref{lemma:lcpinftycomputation} and Lemma~\ref{lem:lcpinftyrota} to compute $\lcpinfty{\{bR\}}[c+1]$ and, if $c < \rho$, $\lcpinfty{\{bR\}}[c+2]$, and update $\lcpinfty{\{R\}}$.
		By assumption on $R$, $\pi(\conj{\{R\}}{j}) = \pi(\conj{\{bR\}}{j+1})$ for each $j \in [1..\rho]$.
		Consequently, $\ftarr{\{R\}}[..c] = \ftarr{\{bR\}}[..c]$ and $\ftarr{\{R\}}[c+1..] = \ftarr{\{bR\}}[c+2..]$.
		Moreover, if we set $\ltarr{\{R\}}[\icarr{\{R\}}[\rho]] = \pi(bR)$, then $\ltarr{\{R\}}[..c] = \ltarr{\{bR\}}[..c]$ and $\ltarr{\{R\}}[c+1..] = \ltarr{\{bR\}}[c+2..]$.
		It remains to compute $\ftarr{\{bR\}}[c+1]$ and $\ltarr{\{bR\}}[c+1]$, which are $\pi(bR)$ and $\$$, respectively.
		The update of both $\ftarr{\{R\}}$ and $\ltarr{\{R\}}$ is done from Line~\ref{alg:updateks:flstart} through Line~\ref{alg:updateks:flend}.
	\end{proof}

	\begin{cor}\label{cor:kimindexconstruction}
		Let $R \in \extsigma^+$, $\rho = \left|R\right|$, $R[\rho] = \$$, and $\rank{\$}{R}{\rho} = 1$.
		The cBWT index of $\{R\}$ can be constructed in $O(\rho \lg \sigma )$ bits of space and $O(\rho\frac{\lg \sigma \lg \rho}{\lg\lg \rho})$ time.
	\end{cor}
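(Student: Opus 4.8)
The plan is to construct the cBWT index of $\{R\}$ incrementally: starting from the cBWT index of the one-symbol text $\{\$\} = \{R[\rho..\rho]\}$, I would prepend the symbols $R[\rho-1], R[\rho-2], \dots, R[1]$ one at a time, each prepend realised by a single call of \cref{alg:singlecBWT} from \cref{lem:singlecBWT}. Throughout I would keep the three dynamic strings $\ftarr{\cdot}$, $\ltarr{\cdot}$, $\lcpinfty{\cdot}$ of the current text in the data structure of \cref{lemma:dynamicstringrmq} (encoding $\$$ as a symbol ranked below all of $\Sigma$), together with the integer $y$ holding $\icarr{\cdot}$ of the conjugate that starts at the trailing $\$$. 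The index of $\{\$\}$ is simply the three length-one strings $\$$, $\$$, $0$ with $y = 1$.

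Correctness follows by induction on the number of prepended symbols. Each suffix $R[k..\rho]$ again ends with $\$$ and contains $\$$ exactly once, and $R[k] \in \Sigma$ for $k < \rho$, so \cref{alg:singlecBWT} is applicable at every step; by \cref{lem:singlecBWT} the step prepending $R[k]$ turns the cBWT index of $\{R[k+1..\rho]\}$ into that of $\{R[k..\rho]\}$ and returns $\icarr{\{R[k..\rho]\}}[\rho-k+1]$, which is precisely the $y$ that the next step needs. After the final step we therefore hold the cBWT index of $\{R[1..\rho]\} = \{R\}$.

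For the running time I would, once and for all, run the preprocessing of \cref{lem:preprocess} on $P := R[1..\rho-1] \in \Sigma^*$ in $O(\rho)$ time and $O(\rho\lg\sigma)$ bits, so that afterwards $\pi(R[k..\rho]) = \pi(P[k..]\cdot\$)$ is available in $O(1)$ time for every $k$. A call of \cref{alg:singlecBWT} on a text of length at most $\rho$ costs, apart from $O(1)$ further queries and \textup{access}/insert/delete updates (including the $\textup{FL}$-mapping evaluations, served by \cref{cor:fastlfmapcompute}), exactly $\pi(R[k..\rho])$ iterations of its \textbf{for}-loop with $O(1)$ queries each; by \cref{lemma:dynamicstringrmq} every such operation runs in $O(\frac{\lg\sigma\lg\rho}{\lg\lg\rho})$ time, and the dynamic strings never exceed length $\rho$, so the space stays $O(\rho\lg\sigma)$ bits. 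Summing over the $\rho-1$ prepends, the total time is $O\bigl(\rho + \frac{\lg\sigma\lg\rho}{\lg\lg\rho}\sum_{k=1}^{\rho-1}(\pi(R[k..\rho])+1)\bigr)$.

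The step I expect to be the main obstacle is showing that the amortised work $\sum_{k=1}^{\rho-1}\pi(R[k..\rho])$ is $O(\rho)$. Because $\$$ occurs in $R$ only at position $\rho$, the identity noted right after the definition of $\rtsenc{\cdot}$ gives $\rtsenc{R}[k..\rho] = \rtsenc{R[k..\rho]}$ for every $k \in [1..\rho]$, and comparing first entries yields $\rtsenc{R}[k] = \rtsenc{R[k..\rho]}[1] = \pi(R[k..\rho])$. Hence $\sum_{k=1}^{\rho-1}\pi(R[k..\rho]) = \sum_{k=1}^{\rho-1}\rtsenc{R}[k] \le |R| = \rho$ by \cref{lem:sumrtsenc} (used with $j = \select{\$}{R}{1} - 1 = \rho - 1$), and the time bound collapses to the claimed $O(\rho\frac{\lg\sigma\lg\rho}{\lg\lg\rho})$.
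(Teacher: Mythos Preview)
Your proposal is correct and follows essentially the same approach as the paper: both start from the trivial index of $\{\$\}$, prepend symbols via repeated calls to \cref{alg:singlecBWT}, preprocess once with \cref{lem:preprocess} to get the needed $\pi$-values, and amortise the total number of queries to $O(\rho)$ using \cref{lem:sumrtsenc}. Your derivation of $\rtsenc{R}[k] = \pi(R[k..\rho])$ from the identity noted after the definition of $\rtsenc{\cdot}$ makes explicit a step the paper leaves implicit, but the argument is otherwise the same.
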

	\begin{proof}
		We represent $\ftarr{\{\$\}} = \ltarr{\{\$\}} = \mathtt{\$}$ and $\lcpinfty{\{\$\}} = 0$ by the data structure of Lemma~\ref{lemma:dynamicstringrmq}.
		Then $\icarr{\{\$\}}[\left|\mathtt{\$}\right|] = 1$.
		We preprocess $R$ with Lemma~\ref{lem:preprocess} in $O(\rho)$ time to have access to $\pi(R[i..]) = \pi(\rta{R}{i-1})$ for each $i \in [1..\rho]$ in $O(1)$ time.
		For each $i \in [1..\rho-1]$ in descending order, we apply Algorithm~\ref{alg:singlecBWT} to compute $\icarr{\{R[i..]\}}[\left|R[i..]\right|]$ and extend the cBWT index of $\{R[i+1..]\}$ to that of $\{R[i..]\}$.
		Since the extension of the cBWT index of $\{R[i+1..]\}$ to that of $\{R[i..]\}$ takes $O(\pi(R[i..]))$ queries for each $i \in [1..\rho-1]$, the construction of the cBWT index of $\{R[1..]\} = \{R\}$ takes a total of $O(\rho)$ queries by Lemma~\ref{lem:sumrtsenc}.
		The claimed complexities then follow by Lemma~\ref{lemma:dynamicstringrmq}.
	\end{proof}
	
	\begin{rem}	
		Due to Remark~\ref{rem:kimconceptual}, the previous statement is the online construction of the (conceptual) integer index presented by Kim and Cho~\cite[Section~4.1]{kim21compact}.
		For that, we substitute~$\$$ with~$-1$ in both $\ftarr{\TT}$ and $\ltarr{\TT}$.
		From the cBWT index, we can construct their $3n + o(n)$ bits of space representation~\cite[Section~5]{kim21compact} directly~\cite[Section~A.4]{kim21compact}
		while retaining the complexities as stated in Corollary~\ref{cor:kimindexconstruction}.
	\end{rem}
	
	\begin{cor}\label{cor:singleindexconstr}
		Let $T \in \Sigma^+$ and $\left|T\right| = n$.
		Then the cBWT index of $\{T\}$ can be constructed in $O(n \lg \sigma )$ bits of space and $O(n\frac{\lg \sigma \lg n}{\lg\lg n})$ time.
	\end{cor}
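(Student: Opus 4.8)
The plan is to reduce to the terminated single-text case of Corollary~\ref{cor:kimindexconstruction} by a periodicity argument. Fix a constant $c\ge 4$ and apply Corollary~\ref{cor:kimindexconstruction} to $R=T^{c}\$$. Since $|R|=cn+1=O(n)$, this produces the cBWT index of $\{T^{c}\$\}$ within the target bounds, $O(n\lg\sigma)$ bits and $O(n\frac{\lg\sigma\lg n}{\lg\lg n})$ time, and it remains to distill from it the cBWT index of $\{T\}$.

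The bridge is the following. For $p\in[1..n]$ the conjugate $C_p$ of $T^{c}\$$ starting at position $p$ equals $T^{\omega}[p..cn]\,\$\,T[..p-1]$, and because $c\ge 4$ its unique $\$$ lies at a position $>3n$; since $\mpdenc{\cdot}$ of a prefix is a prefix of $\mpdenc{\cdot}$, the encodings $\mpdenc{C_p^{\omega}}$ and $\mpdenc{\rta{T}{p-1}^{\omega}}$ therefore agree on their first $3n$ entries. By Corollary~\ref{cor:prec3z} and Lemma~\ref{lem:equality3z}, the $\omega$-preorder restricted to $\{C_1,\dots,C_n\}$ then coincides with the $\omega$-preorder on $\{\rta{T}{0},\dots,\rta{T}{n-1}\}$ up to one effect: whenever $\rta{T}{p-1}\cteq\rta{T}{p'-1}$, the conjugates $C_p,C_{p'}$ tie on their first $3n$ symbols and are separated only afterwards, namely by which of them reaches its $\$$ first, so that the larger of $p,p'$ becomes the $\omega$-smaller one. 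Hence, inside each $\omega$-equivalence class, the $C_p$ appear in the reverse of the text-position order prescribed by $\carr{\{T\}}$, while the classes themselves are already in the correct order; moreover the $C_p$ of a common class turn out to be $\omega$-consecutive among all conjugates of $T^{c}\$$, so they occupy contiguous blocks of the conjugate array of $\{T^{c}\$\}$.

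The construction then proceeds as follows. Using the LF-mapping of $\{T^{c}\$\}$ (Corollary~\ref{cor:fastlfmapcompute}) we walk once around its single cycle, in $O(n)$ queries, recording for every rank the starting position of its conjugate; we keep the ranks whose position lies in $[1..n]$, which in rank order list the conjugates of $T$ in $\omega$-order except for the per-class reversal above, detect the block boundaries along the way from the stored $\lcpinfty{\{T^{c}\$\}}$ information, and reverse each block. Because $\$<C_p[1]$ and no new minimum occurs past the first period of $T$, we have $\pi(C_p)=\pi(\rta{T}{p-1})$, so $\ltarr{\{T\}}$ and $\ftarr{\{T\}}$ are obtained by restricting $\ltarr{\{T^{c}\$\}},\ftarr{\{T^{c}\$\}}$ to the kept ranks and relabelling these $1..n$; and $\lcpinfty{\{T\}}$ is obtained by setting, for consecutive kept ranks $i<i'$, the value $\rnv{\lcpinfty{\{T^{c}\$\}}}{i+1}{i'}{-1}$, once one checks that this $\Infty$-count equals the corresponding $\lcpcount{\cdot}{\cdot}$ and is insensitive to the block reversal (Lemma~\ref{lemma:lcpinftycomputation}, Lemma~\ref{lem:lcpinftyrota}). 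Re-representing the three resulting length-$n$ strings by the structure of Lemma~\ref{lemma:dynamicstringrmq} yields the cBWT index of $\{T\}$; every step runs within $O(n)$ queries and $O(n\lg\sigma)$ bits, so the claimed complexities follow from Lemma~\ref{lemma:dynamicstringrmq}.

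The main obstacle is the correspondence of the second paragraph: one must verify that appending $\$$ only after a constant number of copies of $T$ makes the terminator irrelevant for exactly the comparisons that occur among the kept conjugates — which is precisely what the $3n$ bound of Corollary~\ref{cor:prec3z} provides — and, more delicately, analyse how the padded string's remaining conjugates interleave with the kept ones, so as to pin down the block structure and confirm that the only corrections required are the linear-time per-class reversal of tie-breaks and the accompanying bookkeeping for $\lcpinfty{\{T\}}$.
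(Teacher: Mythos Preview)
Your reduction to $R=T^{c}\$$ and the extraction via a bit-mask on $\carr{\{R\}}$ are exactly the paper's route, but there is a concrete error in the extraction. You keep the conjugates $C_p$ with $p\in[1..n]$ and assert that $\ltarr{\{T\}}$ is the restriction of $\ltarr{\{R\}}$ to the kept ranks. This fails at $p=1$: since $R$ contains exactly one~$\$$, $\rpdenc{R}$ is primitive, so $\prev{\{R\}}{1}=|R|$; hence $\ltarr{\{R\}}$ at the rank of $C_1$ equals $\pi(\rta{R}{|R|-1})=\pi(\$\,T^{c})=\$$, whereas $\ltarr{\{T\}}$ at the rank of $T$ must equal $\pi(\rta{T}{|\wurz{\rpdenc{T}}|-1})\in\Sigma$. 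The paper sidesteps this by marking positions $[2..n{+}1]$ instead of $[1..n]$ (the bit string $Y$ is maintained during the incremental construction of Corollary~\ref{cor:kimindexconstruction}); then every kept position $p$ has $p-1\in[1..cn]$, and one obtains $\ftarr{\{T\}}[i]=\ftarr{\{R\}}[\select{1}{Y}{i}]$ and $\ltarr{\{T\}}[i]=\ltarr{\{R\}}[\select{1}{Y}{i}]$ directly.

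Your per-class reversal is also unnecessary, and the contiguity claim you invoke for it (``$\omega$-consecutive among all conjugates of $T^{c}\$$'') is neither argued nor needed. If $U,W$ are $\omega$-equal conjugates of $T$ then $\rpdenc{U}=\rpdenc{W}$, hence $\rtsenc{U}=\rtsenc{W}$ by Corollary~\ref{cor:rtsencodingmatch}; by Corollary~\ref{cor:definitionltarr} this makes both $\ftarr{\{T\}}$ and $\ltarr{\{T\}}$ constant on each $\omega$-class, and $\lcpcount{U}{W}=\rank{\Infty}{\mpdenc{U}}{|U|}$ within a class while between classes it depends only on the classes. The cBWT index is therefore invariant under reordering inside $\omega$-classes, so the tie-break mismatch you diagnose is immaterial. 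With the shift to $[2..n{+}1]$ the paper simply reads off $\ftarr{\{T\}}$ and $\ltarr{\{T\}}$ at the $Y$-marked ranks and sets $\lcpinfty{\{T\}}[i]=\rnv{\lcpinfty{\{R\}}}{\select{1}{Y}{i-1}+1}{\select{1}{Y}{i}}{-1}$ via Lemma~\ref{lemma:lcpinftycomputation}, with no block detection or reversal.
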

	\begin{proof}
		Let $R = T^4\$$.
		We construct the cBWT index of $\{R\}$ within the claimed complexities by Corollary~\ref{cor:kimindexconstruction}.
		During construction, we build a bit string $Y$ of length $4n+1$ such that $Y[i] = 1$ if and only if $\carr{\{R\}}[i] \in [2..n+1]$.
		By choice of $R$,
		$\conj{\{T\}}{i}^\omega[..3n] = \conj{\{R\}}{i}[..3n]$ for each $i \in [1..n]$, and $\conj{\{T\}}{1}^\omega[..3n] = \conj{\{R\}}{n+1}[..3n]$.	
		Then by choice of $Y$ and Lemma~\ref{lem:equality3z},
		\begin{itemize}
			\item $\conj{\{T\}}{\carr{\{T\}}[i]}^\omega[..3n] = \conj{\{R\}}{\carr{\{R\}}[\select{1}{Y}{i}]}[..3n]$,
			\item $\pi(\conj{\{T\}}{\carr{\{T\}}[i]}) = \pi(\conj{\{R\}}{\carr{\{R\}}[\select{1}{Y}{i}]})$, and
			\item $\pi(\conj{\{T\}}{\carr{\{T\}}[\lfmap{\{T\}}{i}]}) = \pi(\conj{\{R\}}{\carr{\{R\}}[\lfmap{\{R\}}{\select{1}{Y}{i}}]})$,
		\end{itemize}
		for each $i \in [1..n]$.
		Thus, $\ftarr{\{T\}}[i] = \ftarr{\{R\}}[\select{1}{Y}{i}]$ and $\ltarr{\{T\}}[i] = \ltarr{\{R\}}[\select{1}{Y}{i}]$ for each $i \in [1..n]$.
		Moreover, $\lcpinfty{\{T\}}[i] = \rnv{\lcpinfty{\{R\}}}{\select{1}{Y}{i-1}+1}{\select{1}{Y}{i}}{-1}$ for each $i \in [2..n]$ due to Lemma~\ref{lemma:lcpinftycomputation}, and $\lcpinfty{\{T\}}[1] = 0$.
		Hence, we can transform the cBWT index of $\{R\}$ to index $\{T\}$ with $O(n)$ queries.
		The claimed complexities then follow by Lemma~\ref{lemma:dynamicstringrmq}.
	\end{proof}

	\subsection{Extending an Existing cBWT Index}
	
	Let $\emptyset \neq \RT \subset \Sigma^+$ be a non-empty set of strings and $\rho = \left|\carr{\RT}\right|$ denote the accumulated length of all strings in $\RT$.
	In this subsection, we assume we have the cBWT index of $\RT$ at hand, 
	and want to extend it by another text $S \in \Sigma^+$ of length $\lambda$ to index $\ST = \RT \cup \{S\}$.
	To achieve this, we compute the integers $\cnt{\RT}{\rta{S}{i}}$, $\plcp{\RT}{\rta{S}{i}}$ and $\slcp{\RT}{\rta{S}{i}}$ for each $i \in [1..\lambda]$, whose definitions follow.
	Let $\cnt{\RT}{V} = \{i\in[1..\rho] \mid \conj{\RT}{\carr{\RT}[i]} \ctpeq V\}$,
	\[
	\begin{split}
		\plcp{\RT}{V} &= 
		\begin{cases*}
			-1 & \text{if $\cnt{\RT}{V} = 0$,} \\
			\lcpcount{\conj{\RT}{\carr{\RT}[\cnt{\RT}{V}]}}{V} & \text{otherwise, and}
		\end{cases*} \\
		\slcp{\RT}{V} &= 
		\begin{cases*}
			\lcpcount{\conj{\RT}{\carr{\RT}[\cnt{\RT}{V}+1]}}{V} & \text{if $\cnt{\RT}{V} < \rho$,}\\
			-1 & \text{otherwise,}
		\end{cases*} 
	\end{split}
	\]
	for each $V \in \extsigma^+$. 
	We apply the same technique as in Lemma~\ref{lem:singlecBWT} to obtain the following result, whose details are deferred to \cref{app:lem:next}.

	\begin{restatable}{lem}{nextstatement}
		\label{lem:next:statement}
		Let $\emptyset \neq \RT \subset \Sigma^+$, $\rho = \left|\carr{\RT}\right|$, and $V \in \extsigma^+$.
		With the cBWT index of $\RT$, we can compute $\cnt{\RT}{V}$, $\plcp{\RT}{V}$ and $\slcp{\RT}{V}$ from $\pi(V)$, $\cnt{\RT}{\rta{V}{1}}$, $\plcp{\RT}{\rta{V}{1}}$ and $\slcp{\RT}{\rta{V}{1}}$ in $O((1 + \pi(V)) \frac{\lg \sigma \lg \rho}{\lg\lg \rho})$ time if $\pi(V) \neq \$$.
	\end{restatable}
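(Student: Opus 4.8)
The plan is to mirror the strategy of Lemma~\ref{lem:singlecBWT} and its proof, but instead of physically updating the cBWT arrays we only track the three quantities $\cnt{\RT}{V}$, $\plcp{\RT}{V}$, $\slcp{\RT}{V}$ associated with the single extra string $V$ that we are conceptually inserting into the sorted list of conjugates of $\RT$. First I would set up the following partition of $[1..\rho]$ relative to the known data for $\rta{V}{1}$: letting $y = \cnt{\RT}{\rta{V}{1}}$, and for each $i \in [0..\pi(V)]$ letting $J_i = [\ell_i..r_i]$ be maximal with $r_i \le y$ and $\lcpcount{\rta{V}{1}}{\conj{\RT}{\carr{\RT}[j]}} = i$ for all $j \in J_i$, and $J_{\pi(V)+1}$ maximal with $\lcpcount{\rta{V}{1}}{\conj{\RT}{\carr{\RT}[j]}} \ge \pi(V)+1$ for all $j$ in it. These boundaries are computed with $O(1)$ \textup{MI}-queries on $\lcpinfty{\RT}$ per value of $i$, using the identity $\lcpcount{\conj{\RT}{\carr{\RT}[j]}}{\conj{\RT}{\carr{\RT}[k]}} = \rnv{\lcpinfty{\RT}}{j+1}{k}{-1}$ from Lemma~\ref{lemma:lcpinftycomputation} together with $\plcp{\RT}{\rta{V}{1}}$ and $\slcp{\RT}{\rta{V}{1}}$ to locate $y$ correctly and to know how far the equal-LCP block around it extends. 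The crucial point is that $\lcpcount{\rta{V}{1}}{\conj{\RT}{\carr{\RT}[j]}}$ as a function of $j$ is unimodal around $y$ (a periodicity/comparison argument, as in the remarks preceding Corollary~\ref{cor:preceqtotalorder}), so these nested intervals are well-defined and shrink monotonically.

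Next I would count. By Lemma~\ref{lem:rotationomegaorder}, for $j$ in the block $J_i$ (with $1 \le i \le \pi(V)$) we have $\conj{\RT}{\carr{\RT}[\lfmap{\RT}{j}]} \ctprec V$ iff $\ltarr{\RT}[j] \ge i$, while for $j$ above $r_{\pi(V)+1}$ we always have $V \ctprec \conj{\RT}{\carr{\RT}[\lfmap{\RT}{j}]}$, and for $j$ in $J_{\pi(V)+1}$ or in the part of it below $y$ the threshold becomes $\pi(V)+1$ or $\pi(V)$ respectively (exactly the six bullet cases of the proof of Lemma~\ref{lem:singlecBWT}, now with $\RT$ in place of $\{R\}$ and the role of the single-$\$$ conjugate played by $\rta{V}{1}$ itself through the data $\cnt{\RT}{\rta{V}{1}}$, $\plcp{\RT}{\rta{V}{1}}$). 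Summing the corresponding $\textup{rnkcnt}$-queries on $\ltarr{\RT}$ over the $\pi(V)+1$ blocks gives $\cnt{\RT}{V}$ in $O((1+\pi(V))\frac{\lg\sigma\lg\rho}{\lg\lg\rho})$ time, matching the claimed bound; the $-1$ corner cases (when $\cnt{\RT}{\rta{V}{1}}$ or a block is empty) are handled exactly as the $c=0$ branches there. For $\plcp{\RT}{V}$ and $\slcp{\RT}{V}$ I would apply Lemma~\ref{lem:lcpinftyrota}: the predecessor and successor of $V$ among the conjugates of $\RT$, call their indices $\cnt{\RT}{V}$ and $\cnt{\RT}{V}+1$, are the $\lfmap{\RT}{}$-images of the largest $j$ below-or-equal $y$ and the smallest $j$ above $y$ (within the relevant block) whose $\ltarr{\RT}$-entry meets the threshold; their LCP with $V$ is then $e - \pi(V)+1$, $1$, or $0$ according to the three cases of Lemma~\ref{lem:lcpinftyrota}, where $e = \lcpcount{\rta{V}{1}}{\rta{U}{1}}$ is read off via $\plcp{\RT}{\rta{V}{1}}$ or $\slcp{\RT}{\rta{V}{1}}$ and a single \textup{RNV}-query on $\lcpinfty{\RT}$ over the range between $y$ and that $j$. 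This again costs $O((1+\pi(V))$ queries, and the time bound follows from Lemma~\ref{lemma:dynamicstringrmq}.

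The main obstacle I anticipate is bookkeeping the boundary cases when $V$ falls at the very ends of the sorted conjugate list or when one of the nested blocks $J_i$ is empty, so that the ``predecessor'' or ``successor'' of $V$ does not exist or lies outside the block structure computed from $\rta{V}{1}$; here the correctness hinges on using $\plcp{\RT}{\rta{V}{1}}$ and $\slcp{\RT}{\rta{V}{1}}$ (rather than just $\cnt{\RT}{\rta{V}{1}}$) to decide whether the equal-LCP block around position $y$ actually reaches position $y$ or stops just before it, which in turn determines whether Lemma~\ref{lem:rotationomegaorder} is applied with $e = \plcp{\RT}{\rta{V}{1}}$ or with a strictly smaller value obtained from $\lcpinfty{\RT}$. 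A careful case analysis paralleling the six bullets in the proof of Lemma~\ref{lem:singlecBWT}, with the additional observation that $\rta{V}{1}$ need not be a conjugate of any string in $\RT$ so that we must rely on the precomputed $\cnt{\RT}{}$/$\plcp{\RT}{}$/$\slcp{\RT}{}$ values as a surrogate for its position, closes the argument; the full details are deferred to \cref{app:lem:next}.
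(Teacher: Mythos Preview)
Your proposal is correct and follows essentially the same approach as the paper: partition $[1..\rho]$ into nested blocks $J_i$ around $y=\cnt{\RT}{\rta{V}{1}}$ using \textup{MI}-queries on $\lcpinfty{\RT}$ together with $\plcp{\RT}{\rta{V}{1}}$ and $\slcp{\RT}{\rta{V}{1}}$, count with \textup{rnkcnt}-queries on $\ltarr{\RT}$ via Lemma~\ref{lem:rotationomegaorder}, and then recover $\plcp{\RT}{V}$ and $\slcp{\RT}{V}$ via Lemma~\ref{lem:lcpinftyrota}. The only place where the paper is slightly more direct is the LCP step: rather than locating the predecessor of $V$ as ``the $\lfmap{\RT}{}$-image of the largest $j\le y$ whose $\ltarr{\RT}$-entry meets the threshold'', the paper simply applies the FL-mapping to the already computed $c=\cnt{\RT}{V}$, and then branches on whether $\flmap{\RT}{c}$ lies below, at, or above $y$ (and symmetrically for $\flmap{\RT}{c+1}$ versus $y+1$) to select the correct case of Lemma~\ref{lem:lcpinftyrota}; this cleanly handles the boundary situations you anticipate in your last paragraph without any search.
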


	For the iterative construction, we augment the cBWT index by a dynamic bit string $E$ with the invariant that $E$ is zeroed before and after each extension with a new string.
	We use $E$ to temporarily mark modified parts in the cBWT such that we retain the functionality of the index even during construction.
	The length of $E$ is the number $\rho$ of characters indexed by the cBWT index, which we call in the next lemma the \emph{augmented cBWT index} for clarity.
	
	\begin{lem}\label{lem:cBWTextend}
		Let $\emptyset \neq \RT \subset \Sigma^+$, $\rho = \left|\carr{\RT}\right|$, $S \in \Sigma^+$, $\ST = \RT \cup \{S\}$, $\lambda = \left|S\right|$, and $z = \max \{\left|V\right| \mid V \in \ST\}$.
		The augmented cBWT index of $\RT$ can be extended to index~$\ST$ in $O((\lambda + \rho) \lg \sigma)$ bits of space and $O(z \frac{\lg \sigma \lg (\lambda + \rho)}{\lg\lg (\lambda + \rho)})$ time.
	\end{lem}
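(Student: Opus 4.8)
The plan is to extend the single-text construction of \cref{lem:singlecBWT} and \cref{cor:kimindexconstruction} to the incremental setting. I would proceed by processing the conjugates of the new string $S$ from right to left, cycling backwards through $S$ in the spirit of the LF-mapping, inserting one conjugate of $S$ into the existing cBWT index of $\RT$ at each step. Concretely, I fix an order $\rta{S}{\lambda - 1}, \rta{S}{\lambda - 2}, \dots, \rta{S}{0} = S$ in which to insert the rotations, so that at step $i$ we already know $\cnt{\RT}{\rta{S}{i+1}}$, $\plcp{\RT}{\rta{S}{i+1}}$ and $\slcp{\RT}{\rta{S}{i+1}}$ and use \cref{lem:next:statement} to obtain $\cnt{\RT}{\rta{S}{i}}$, $\plcp{\RT}{\rta{S}{i}}$ and $\slcp{\RT}{\rta{S}{i}}$; here $\cnt{\RT}{\rta{S}{i}}$ is exactly the rank at which $\rta{S}{i}$ must be inserted among the already-present conjugates. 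A subtlety is that \cref{lem:next:statement} only applies when $\pi(\rta{S}{i}) \neq \$$, but since $S \in \Sigma^+$ contains no $\$$, every $\pi(\rta{S}{i})$ is a non-negative integer, so this is not an obstacle. To bootstrap the recursion I would first compute $\cnt{\RT}{\rta{S}{0}}$ etc.\ directly by running the backward search for (an encoding of) $S$ against the cBWT index of $\RT$, then use the LF-mapping relation to shift from $\rta{S}{0}$ to $\rta{S}{\lambda-1}$, or equivalently simply start the chain from $\rta{S}{1}$ after one explicit backward search and wrap around.

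The core of the argument is then to show that the insertions can be carried out \emph{in place} on the dynamic strings $\ftarr{\cdot}$, $\ltarr{\cdot}$ and $\lcpinfty{\cdot}$ of \cref{lemma:dynamicstringrmq}, exactly as in \cref{lem:singlecBWT}, while keeping the index queryable throughout. For this I would use the auxiliary bit string $E$ of length $\rho$ (growing to $\lambda + \rho$): when we insert $\rta{S}{i}$ at position $c+1 = \cnt{\RT}{\rta{S}{i}} + 1$, the first $c$ entries of each array are unchanged, the old entries $[c+1..]$ shift by one, and the new entry is $\ftarr{\cdot}[c+1] = \pi(\rta{S}{i})$, $\ltarr{\cdot}[c+1] = \pi(\text{the predecessor rotation of } \rta{S}{i})$, and the two neighbouring $\lcpinfty{\cdot}$ values $[c+1]$ and $[c+2]$ recomputed from $\plcp{\RT}{\rta{S}{i}}$, $\slcp{\RT}{\rta{S}{i}}$ via \cref{lemma:lcpinftycomputation} and \cref{lem:lcpinftyrota} — precisely as in the proof of \cref{lem:singlecBWT}. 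The role of $E$ is to mark the newly inserted (and possibly one freshly overwritten $\ltarr{\cdot}$) positions so that queries against the partially-built index during the insertion of the remaining rotations of $S$ can tell which entries already belong to $S$; after all $\lambda$ rotations are inserted, $E$ is cleared, restoring the invariant. Correctness of each single insertion is the same bijection-between-conjugates argument used for \cref{lem:singlecBWT}: since $S$ is inserted as a whole, the relative $\omega$-order of all previously present conjugates is preserved, and the insertion point is $\cnt{\RT \cup (\text{inserted so far})}{\rta{S}{i}}$.

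For the complexity, each of the $\lambda$ steps invokes \cref{lem:next:statement} at cost $O((1+\pi(\rta{S}{i}))\frac{\lg\sigma\lg(\lambda+\rho)}{\lg\lg(\lambda+\rho)})$ plus $O(1)$ insertions/updates on arrays of length $O(\lambda+\rho)$, again at cost $O(\frac{\lg\sigma\lg(\lambda+\rho)}{\lg\lg(\lambda+\rho)})$ each by \cref{lemma:dynamicstringrmq}. Summing over $i \in [1..\lambda]$ and using $\sum_{i} \pi(\rta{S}{i}) = \sum_i \rtsenc{S^2}[\,\cdot\,] = O(\lambda)$ by (the appropriate application of) \cref{lem:sumrtsenc} — here I must be careful to bound $\sum_{i=1}^{\lambda} \rtsenc{S}[i]$, which since $S$ has no $\$$ is $\le \lambda$ directly by \cref{lem:sumrtsenc} with $j = \lambda$ — the total is $O(\lambda \frac{\lg\sigma\lg(\lambda+\rho)}{\lg\lg(\lambda+\rho)})$, and the initial backward search for $S$ adds $O(\lambda \frac{\lg\sigma\lg(\lambda+\rho)}{\lg\lg(\lambda+\rho)})$ by \cref{thm:indexexists}. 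Since $\lambda \le z$, this is within the claimed $O(z\frac{\lg\sigma\lg(\lambda+\rho)}{\lg\lg(\lambda+\rho)})$ bound; the space bound $O((\lambda+\rho)\lg\sigma)$ is immediate from \cref{lemma:dynamicstringrmq} applied to the three arrays of final length $\lambda+\rho$ plus the bit string $E$. The main obstacle I anticipate is the bookkeeping around $E$ and the partially-built index: one must verify that \cref{lem:next:statement}, whose proof relies on the full cBWT-index queries, still returns the correct $\cnt{\cdot}{\cdot}$ values when applied to the \emph{intermediate} index $\RT \cup \{\rta{S}{i+1},\dots,\rta{S}{\lambda-1}\}$ with some entries flagged by $E$ — i.e.\ that the queries of \cref{alg:crangeupd}-style routines can be made to either ignore or correctly account for the $E$-marked positions. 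This is exactly where the invariant "$E$ is zeroed before and after each extension'' and the decision to mark modified parts pays off, and it is the part of the proof that I would spell out most carefully, deferring the remaining routine verifications to the appendix.
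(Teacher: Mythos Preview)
Your approach is genuinely different from the paper's and, as written, has two real gaps.

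The paper does \emph{not} insert rotations of $S$ one by one into the live index. It (i) builds the cBWT index of $\{S\}$ separately via \cref{cor:singleindexconstr}; (ii) computes every $\cnt{\RT}{\rta{S}{i}}$, $\plcp{\RT}{\rta{S}{i}}$, $\slcp{\RT}{\rta{S}{i}}$ against the \emph{unmodified} $\RT$-index; (iii) records the insertion positions by inserting $1$-bits into $E$, so that $E$ becomes an interleaving bit-vector rather than a ``dirty'' marker; and (iv) carries out a single batch merge of the $\RT$-index and the $\{S\}$-index guided by $E$. Because $\ftarr{\RT},\ltarr{\RT},\lcpinfty{\RT}$ are never touched during step~(ii), \cref{lem:next:statement} is always invoked on a valid cBWT index. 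Your scheme, by contrast, would apply \cref{lem:next:statement} after some but not all rotations of $S$ have been inserted; at that point the LF-cycle through $S$ is broken and the arrays are not the cBWT index of any collection of strings, so the hypotheses of \cref{lem:next:statement} fail. You correctly flag this as ``the main obstacle,'' but it is not bookkeeping: marking positions with $E$ does not by itself make the rank/select/RNV/MI queries inside \cref{alg:next} return the values they would on the index of $\RT$.

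The second gap is the bootstrap. A backward search for $S$ of length $\lambda$ yields $\crange{\RT}{S}$, not $\cnt{\RT}{S}$: by \cref{cor:prec3z} deciding the $\omega$-order between $S$ and a conjugate in $\RT$ may require $3z$ characters, and once the conjugate range becomes empty mid-search you learn nothing about where $S$ falls. The recursion of \cref{lem:next:statement} on the rotations of $S$ alone is cyclic with no entry point. The paper breaks the cycle by working with $V = S^\omega[..4z]\cdot\$$ and anchoring at $\rta{V}{4z}$ (where $\pi = \$$ forces $\cnt{\RT}{\cdot}=0$) when no $\omega$-equal conjugate exists, and by backward-searching $S^\omega[2..4z]$ otherwise. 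Both take $\Theta(z)$ steps, which is precisely why the stated bound carries the factor~$z$ rather than the~$\lambda$ your analysis produces.
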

	\begin{proof}
		We compute the cBWT index of $\{S\}$ within the claimed complexities by Corollary~\ref{cor:singleindexconstr}
		Then we compute $\cnt{\RT}{\rta{S}{i}}$, $\plcp{\RT}{\rta{S}{i}}$ and $\slcp{\RT}{\rta{S}{i}}$ for each $i \in [1..\lambda]$.
		
		\begin{cs1}
			Assume $\rta{S}{i} \cteq R$ for some $R \in \RT$ and $i \in [1..\lambda]$.
			Then $\cnt{\RT}{\rta{S}{i}} = r$, $\plcp{\RT}{\rta{S}{i}} = \rank{\Infty}{\mpdenc{\rta{S}{i}}}{\lambda}$, and, if $r < \rho$, $\slcp{\RT}{\rta{S}{i}} = \lcpinfty{\RT}[r+1]$ by Lemma~\ref{lem:equality3z}, where $i \in [1..\lambda]$ and $r$ is the right boundary of $\crange{\RT}{\rta{S}{i}^\omega[..3z]}$.
			Since $\rank{\Infty}{\mpdenc{\rta{S}{i-1}}}{\lambda} = \rank{\Infty}{\mpdenc{S^\omega[i..4z]}}{4z-i+1}$ for each $i \in [2..\lambda+1]$, the last $\lambda $ steps of the backward search for $S^\omega[2..4z]$ yield the necessary values to compute $\cnt{\RT}{\rta{S}{i}}$, $\plcp{\RT}{\rta{S}{i}}$ and $\slcp{\RT}{\rta{S}{i}}$ for all $i \in [1..\lambda]$ in descending order within the claimed complexities by Theorem~\ref{thm:indexexists}.
		\end{cs1}
		
		\begin{cs2}
			Assume $\rta{S}{i} \not\cteq R$ for each $R \in \RT$ and $i \in [1..\lambda]$.
			Let $V = S^\omega[..4z]\cdot\$$.
			By assumption and Corollary~\ref{cor:prec3z}, $\cnt{\RT}{\rta{S}{i}} = \cnt{\RT}{\rta{V}{i}}$, $\plcp{\RT}{\rta{S}{i}} = \plcp{\RT}{\rta{V}{i}}$, and $\slcp{\RT}{\rta{S}{i}} = \slcp{\RT}{\rta{V}{i}}$ for each $i \in [1..\lambda]$.
			We preprocess $V$ with Lemma~\ref{lem:preprocess} within the claimed complexities.
			Since we have $\cnt{\RT}{\rta{V}{4z}} = 0$, $\plcp{\RT}{\rta{V}{4z}} = -1$ and $\slcp{\RT}{\rta{V}{4z}} = 0$ by construction, we can now use Lemma~\ref{lem:next:statement}
			to compute $\cnt{\RT}{\rta{V}{j}}$, $\plcp{\RT}{\rta{V}{j}}$ and $\slcp{\RT}{\rta{V}{j}}$ for each $j \in [1..4z-1]$ in descending order.
			Hence, we obtain $\cnt{\RT}{\rta{S}{i}}$, $\plcp{\RT}{\rta{S}{i}}$ and $\slcp{\RT}{\rta{S}{i}}$ for all $i \in [1..\lambda]$ in descending order within the claimed complexities by Lemma~\ref{lem:sumrtsenc}.
			This concludes Case~2.
		\end{cs2}
	
		Leveraging the backward search on the cBWT index of $\{S\}$, we store the $\textup{plcp}_\RT^\infty$-values and $\textup{slcp}_\RT^\infty$-values in lex-order, which takes $O(\lambda \lg \sigma)$ bits of space.
		To store the $\textup{cnt}_\RT$-values in lex-order, compact space, and stay within the claimed time complexity, we utilize $E$, which we assume to be represented by the data structure of Lemma~\ref{lemma:dynamicstringrmq}.
		For each $i \in [1..\lambda]$ in descending order, we call $\inst{E}{\select{0}{E}{\cnt{\RT}{\rta{S}{i}}}+1}{1}$, and discard $\cnt{\RT}{\rta{S}{i}}$ once $\cnt{\RT}{\rta{S}{i-1}}$ has been computed.
		Then $\cnt{\RT}{\conj{\{S\}}{\carr{\{S\}}[i]}} = \rank{0}{E}{\select{1}{E}{i}}$.
		The following statements for each $i \in [1..\lambda + \rho]$ are due to construction.
		\begin{itemize}
			\item If $E[i] = 0$, then $\ftarr{\ST}[i] = \ftarr{\RT}[\rank{0}{E}{i}]$ and $\ltarr{\ST}[i] = \ltarr{\RT}[\rank{0}{E}{i}]$.
			\item If $E[i] = 1$, then $\ftarr{\ST}[i] = \ftarr{\{S\}}[\rank{1}{E}{i}]$ and $\ltarr{\ST}[i] = \ltarr{\{S\}}[\rank{1}{E}{i}]$.
			\item $\lcpinfty{\ST}[1] = 0$.
			\item If $i \geq 2$, $E[i-1] = 1$ and $E[i] = 1$, then $\lcpinfty{\ST}[i] = \lcpinfty{\{S\}}[\rank{1}{E}{i}]$.
			\item If $i \geq 2$, $E[i-1] = 1$ and $E[i] = 0$, then $\lcpinfty{\ST}[i] = \slcp{\RT}{\conj{\{S\}}{\carr{\{S\}}[\rank{1}{E}{i}]}}$.
			\item If $i \geq 2$, $E[i-1] = 0$ and $E[i] = 1$, then $\lcpinfty{\ST}[i] = \plcp{\RT}{\conj{\{S\}}{\carr{\{S\}}[\rank{1}{E}{i}]}}$.
			\item If $i \geq 2$, $E[i-1] = 0$ and $E[i] = 0$, then $\lcpinfty{\ST}[i] = \lcpinfty{\RT}[\rank{0}{E}{i}]$.
		\end{itemize}
		Consequently, $O(\lambda)$ queries and operations are needed to update the cBWT index of $\RT$ to index~$\ST$.
		Finally, we zero $E$ with $O(\lambda)$ queries.
		The claimed complexities now follow from Lemma~\ref{lemma:dynamicstringrmq}.
	\end{proof}

	We are finally able to state the main result of this section.

	\begin{thm}\label{thm:cBWTconstruction}
		Let $\emptyset \neq \TT = \{\alltexts{T}{d}\} \subset \Sigma^+$ and $n = \left|T_1\cdots T_d\right|$.
		If $\left|T_{k-1}\right| \leq \left|T_k\right|$ for each $k \in [2..d]$, then the cBWT index of $\TT$ can be constructed in $O(n\lg\sigma)$ bits of space and $O(n\frac{\lg\sigma\lg n}{\lg\lg n})$ time.
	\end{thm}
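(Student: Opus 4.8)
The plan is to build the cBWT index of $\TT$ by iterated application of the two construction results from this section. First, I would construct the cBWT index of $\{T_1\}$ using \cref{cor:singleindexconstr}, which costs $O(n_1 \lg\sigma)$ bits of space and $O(n_1\frac{\lg\sigma\lg n_1}{\lg\lg n_1})$ time. Then, for $k = 2, 3, \dots, d$ in ascending order, I would augment the current cBWT index of $\TT_{k-1} := \{T_1,\dots,T_{k-1}\}$ with the bit string $E$ of \cref{lem:cBWTextend} (zeroed, hence contributing only $O(1)$ extra bits per indexed character) and extend it by $T_k$ to obtain the cBWT index of $\TT_k := \TT_{k-1}\cup\{T_k\}$. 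After the last extension $E$ is zeroed again, so we may discard it and are left with the cBWT index of $\TT_d = \TT$. Correctness is immediate from \cref{cor:singleindexconstr} and \cref{lem:cBWTextend}; the content of the theorem is the complexity bookkeeping.

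For the \textbf{space bound}, note that throughout the construction the augmented cBWT index indexes at most $n$ characters, since $\left|\carr{\TT_k}\right| = \sum_{j=1}^{k} n_j \le n$ for every $k$. The dynamic data structures of \cref{lemma:dynamicstringrmq} representing $\ftarr{}$, $\ltarr{}$, $\lcpinfty{}$ and $E$ therefore each occupy $O(n\lg\sigma)$ bits at all times. During the $k$-th extension, \cref{lem:cBWTextend} additionally builds the cBWT index of $\{T_k\}$ as scratch space, which takes $O(n_k\lg\sigma) = O(n\lg\sigma)$ bits, and stores $O(n_k)$ auxiliary $\textup{plcp}$- and $\textup{slcp}$-values, again $O(n\lg\sigma)$ bits. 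Hence the working space never exceeds $O(n\lg\sigma)$ bits.

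For the \textbf{time bound}, the key observation is where the hypothesis $\left|T_{k-1}\right|\le\left|T_k\right|$ enters. \Cref{lem:cBWTextend} charges the $k$-th extension $O\!\big(z_k\,\tfrac{\lg\sigma\lg(\lambda_k+\rho_k)}{\lg\lg(\lambda_k+\rho_k)}\big)$ time, where $\rho_k = \sum_{j<k} n_j$, $\lambda_k = n_k$, and $z_k = \max\{\left|V\right| : V\in\TT_k\}$. The sortedness hypothesis gives $z_k = n_k$, so the factor $z_k$ is just the length of the text being inserted, not the (potentially much larger) accumulated length; summing $z_k = n_k$ over $k$ telescopes to $\sum_{k=1}^d n_k = n$. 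Since $\lambda_k+\rho_k \le n$, the logarithmic factor $\tfrac{\lg\sigma\lg(\lambda_k+\rho_k)}{\lg\lg(\lambda_k+\rho_k)}$ is bounded by $O\!\big(\tfrac{\lg\sigma\lg n}{\lg\lg n}\big)$ for every $k$ with $\lambda_k+\rho_k$ large enough, and by $O(\lg\sigma)$ otherwise; in either case it is $O\!\big(\tfrac{\lg\sigma\lg n}{\lg\lg n}\big)$. Thus the total extension time is $O\!\big(\sum_{k} n_k\cdot\tfrac{\lg\sigma\lg n}{\lg\lg n}\big) = O\!\big(n\tfrac{\lg\sigma\lg n}{\lg\lg n}\big)$, and the initial single-text construction is absorbed into this bound.

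The \textbf{main obstacle} is making precise that the factor $z_k$ in \cref{lem:cBWTextend} really reduces to $n_k$ under the hypothesis, and confirming that without the hypothesis this telescoping fails: if a short text is inserted after a long one, $z_k$ stays as large as the longest text seen so far, and the sum $\sum_k z_k$ could blow up to $\Theta(nd)$. This is precisely why the theorem assumes the texts are given in nondecreasing length order; the general case is exactly the situation the following subsection (the dynamic variant) must handle with a different, amortized argument. Beyond this, the argument is routine: one only needs to check that augmenting and de-augmenting with $E$ between successive extensions costs nothing asymptotically, which follows from the $O(\lambda_k)$-query zeroing step already accounted for inside \cref{lem:cBWTextend}.
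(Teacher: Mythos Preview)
Your proposal is correct and follows essentially the same approach as the paper: build the cBWT index of $\{T_1\}$ via \cref{cor:singleindexconstr}, then iteratively extend by $T_2,\dots,T_d$ via \cref{lem:cBWTextend}, using the length-sorted hypothesis to guarantee $z_k = n_k$ so that $\sum_k z_k = n$. Your analysis is in fact more detailed than the paper's own proof, which simply states the space as $O((n+|T_d|)\lg\sigma)$ and the time as $O\big((\sum_k |T_k|)\tfrac{\lg\sigma\lg n}{\lg\lg n}\big)$ without spelling out the role of $z_k$; one minor caveat is that your reference to ``the following subsection (the dynamic variant)'' does not correspond to anything in the paper.
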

	\begin{proof}
		Let $\left|T_{k-1}\right| \leq \left|T_k\right|$ for each $k \in [2..d]$, and let $E_j$ denote a zeroed bit string of length $\left|T_1 \cdots T_j\right|$ for each $j \in [1..d]$.
		First, we construct $E_1$ and the cBWT index of $\{T_1\}$ within the claimed complexities by Lemma~\ref{cor:singleindexconstr}, where each string is represented by the data structure of Lemma~\ref{lemma:dynamicstringrmq}.
		Subsequently, we iteratively extend the cBWT index of $\{\alltexts{T}{k-1}\}$ augmented by $E_{k-1}$ to the cBWT index of $\{\alltexts{T}{k}\}$ augmented by $E_k$ for each $k \in [2..d]$ in ascending order leveraging Lemma~\ref{lem:cBWTextend}.
		Then the cBWT index of $\TT$ is constructed in $O((n + \left|T_d\right|) \lg \sigma) = O(n \lg\sigma)$ bits of space and $O((\sum_{k=1}^{d}\left|T_k\right|)\frac{\lg\sigma\lg n}{\lg\lg n}) = O(n \frac{\lg\sigma\lg n}{\lg\lg n})$ time.
	\end{proof}

\bibliographystyle{plain}

	\clearpage
\appendix

	\section{Proofs}\label{app:proofs}
	
	\rotaparent*
	\begin{proof}Let $\rpdenc{V} = \rpdenc{U}$ and $\left|V\right| = \left|U\right|$. 
		Assume $V^2 \not\approx_\textup{ct} U^2$. 
		By Lemma~\ref{lem:encctmatch}, there exists some $1 \leq i \leq \left|V\right|$ such that $\mpdenc{V}[i] \neq \mpdenc{U}[i]$. 
		If $\{\mpdenc{V}[i], \mpdenc{U}[i]\} \subset \extsigma$, then $\mpdenc{V}[i] = \mpdenc{V^2}[i + \left|V\right|] = \rpdenc{V}[i] = \rpdenc{U}[i] = \mpdenc{U^2}[i + \left|U\right|] = \mpdenc{U}[i]$. 
		Hence, either $\mpdenc{V}[i] = \Infty$ or $\mpdenc{U}[i] = \Infty$. 
		Without loss of generality, $\mpdenc{U}[i] = \Infty$. 
		But then $\rpdenc{U}[i] = \mpdenc{U^2}[i + \left|U\right|] > \mpdenc{V^2}[i + \left|V\right|] = \rpdenc{V}[i]$, a contradiction.
		Hence, $V^2 \ctmatch U^2$. 
		The remaining cases and opposite direction are immediate.
	\end{proof}

	For a string $X$, we call $1 \leq p\leq \left|X\right|$ satisfying $X[q] = X[q+p]$ for every $q \in [1..\left|X\right| - p]$ a \emph{period} of $X$.
	
	\begin{lem}[Weak Periodicity Lemma]\label{lem:weakperiodicity}
		Let $p$ and $q$ be two periods of a string $X$. If $p+q \leq \left|X\right|$, then $\gcd(p,q)$ is also a period of $X$.
	\end{lem}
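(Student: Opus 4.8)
The plan is to prove the statement by strong induction on $p + q$. Assume without loss of generality that $p \le q$. If $p = q$, then $\gcd(p,q) = p$ is a period of $X$ by hypothesis and there is nothing to prove; this is the base case. For the inductive step, suppose $p < q$ and that the lemma holds for every pair of periods whose sum is strictly smaller than $p + q$.

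The crucial step is to show that $q - p$ is again a period of $X$. Write $d = q - p$, so that $1 \le d \le q - 1 < |X|$, the last inequality because $p \ge 1$ and $p + q \le |X|$. Fix $i \in [1..|X| - d]$; I must check $X[i] = X[i+d]$, and I would do this by splitting on whether $i \le p$. If $i \le p$, then $i + q \le p + q \le |X|$, so the positions $i$, $i+d$, and $i+d+p = i+q$ all lie in $[1..|X|]$; using that $p$ is a period gives $X[i+d] = X[i+d+p] = X[i+q]$, and using that $q$ is a period gives $X[i] = X[i+q]$, hence $X[i] = X[i+d]$. If instead $i > p$, then $i - p \ge 1$ and $i - p + q = i + d \le |X|$; using that $p$ is a period gives $X[i] = X[i-p]$, and using that $q$ is a period gives $X[i-p] = X[i-p+q] = X[i+d]$, so again $X[i] = X[i+d]$. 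This shows $q - p$ is a period of $X$.

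Now $p$ and $q - p$ are two periods of $X$ with $p + (q-p) = q \le |X|$, and their sum $q$ is strictly less than $p + q$ since $p \ge 1$. The induction hypothesis therefore applies and yields that $\gcd(p, q-p)$ is a period of $X$. Since $\gcd(p, q-p) = \gcd(p,q)$, the induction is complete.

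I expect the only delicate point to be the case analysis establishing that $q - p$ is a period, where one has to keep the index ranges straight; the reduction itself, the strict decrease of $p + q$, and the identity $\gcd(p,q-p)=\gcd(p,q)$ are routine. It is also worth noting that the statement is symmetric in $p$ and $q$, which is exactly what allows the induction hypothesis to be invoked on the pair $\{p, q-p\}$ without worrying about which of the two is larger.
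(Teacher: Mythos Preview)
Your proof is correct; the induction on $p+q$ together with the Euclidean step showing that $q-p$ is again a period is the classical argument, and your case split on $i \le p$ versus $i > p$ keeps all indices in range. The paper itself does not prove this lemma at all: it merely states the Weak Periodicity Lemma as a known result and immediately applies it in the proof of \cref{lem:equality3z}, so there is nothing to compare against beyond noting that your argument is the standard one.
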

	
	\equality*
	\begin{proof}
		Without loss of generality, $\left|U\right| = z$. 
		Let $\left|V\right|= y$, $\left|\wurz{\rpdenc{V}}\right| = p$ and $\left|\wurz{\rpdenc{U}}\right| = q$.
		Assume $\mpdenc{V^{\omega}[..3z]}  = \mpdenc{U^{\omega}[..3z]}$. 
		By Lemma~\ref{lem:rotaencodingmatch},
		\[
		\rpdenc{\rta{V}{z}}^\omega[..2z] = \rpdenc{\rta{U}{z}}^\omega[..2z] = \rpdenc{U^2}.
		\]
		Consequently, both $p$ and $z$ are periods of $\rpdenc{U^2}$. 
		Since $p + z \leq 2z$, we can apply Lemma~\ref{lem:weakperiodicity} and find that $\gcd(p,z)$  is a period of $\rpdenc{U^2}$. 
		As $\gcd(p,z)$ divides $z$, $\rpdenc{U}$ can be formed by repeating $\rpdenc{U}[..\gcd(p,z)]$ an integral number of times. This implies $\gcd(p,z) \geq q$, i.e. $y \geq p \geq q$. Also by assumption and Lemma~\ref{lem:rotaencodingmatch},
		\[
		\rpdenc{V^2} = \rpdenc{\rta{V}{z}}^\omega[..2y] = \rpdenc{\rta{U}{y}}^\omega[..2y].
		\]
		Since $q \leq y$, both $q$ and $y$ are periods of $\rpdenc{V^2}$. 
		Then $q + y \leq 2y$ and Lemma~\ref{lem:weakperiodicity} imply that $\gcd(q,y)$ is a period of $\rpdenc{V^2}$. 
		As $\gcd(q,y)$ divides $y$, $\rpdenc{V}[..\gcd(q,y)]$ can be repeated an integral number of times to form $\rpdenc{V}$. 
		This implies $\gcd(q,y) \geq p$ and consequently $q \geq p$. 
		Thus, $p = q$ and the statement follows. 
		The opposite direction is immediate.
	\end{proof}
	
	\rotaomega*
	\begin{proof}
		We show the statement for $\min\{\pi(V),\pi(U)\} \in \Sigma$ since it is non-trivial.
		Let  $z = \max\{\left|V\right|,\left|U\right|\}$ and $\lambda = \lcplength{\mpdenc{\rta{V}{1}^\omega[..3z]}}{\mpdenc{\rta{U}{1}^\omega[..3z]}}$. 
		Then $\mpdenc{\rta{V}{1}^{\omega}[..3z]}  < \mpdenc{\rta{U}{1}^{\omega}[..3z]}$ by Corollary~\ref{cor:prec3z}, $\lambda < 3z$,  and $\select{\Infty}{\mpdenc{\rta{V}{1}}}{e} \leq \min\{\left|V\right|,\left|U\right|\}$.
		
		\begin{rdir}
			Assume $\pi(V) < \min\{e,\pi(U)\}$. 
			Let $i = \select{\Infty}{\mpdenc{V}}{2}$.
			Then $\mpdenc{V}[..i-1] = \mpdenc{U}[..i-1]$ and $\mpdenc{V}[i] = \Infty \neq i-1 = \mpdenc{U}[i]$, i.e. $\mpdenc{U} < \mpdenc{V}$. 
			Consequently, $\mpdenc{U^{\omega}[..3z]}  < \mpdenc{V^{\omega}[..3z]}$, which implies $U \ctprec V$ by Corollary~\ref{cor:prec3z}.
			The statement follows by contraposition.
		\end{rdir}
		
		\begin{ldir}
			We consider three different cases.
		\end{ldir}
	
		\begin{cs1}
			Assume $\min\{e,\pi(V)\}> \pi(U)$. 
			Let $i = \select{\Infty}{\mpdenc{U}}{2}$. 
			Then $\mpdenc{V}[..i-1] = \mpdenc{U}[..i-1]$ and $\mpdenc{V}[i] = i-1 \neq \Infty = \mpdenc{U}[i]$, i.e. $\mpdenc{V} < \mpdenc{U}$. 
			Consequently, $\mpdenc{V^{\omega}[..3z]}  < \mpdenc{U^{\omega}[..3z]}$, which implies $V \ctprec U$ by Corollary~\ref{cor:prec3z}.
		\end{cs1}\\
		\begin{cs2}
			Assume $\min\{\pi(V), \pi(U)\} \geq e$. 
			Then we have $\mpdenc{V^\omega[..3z]}[..\lambda+1] = \mpdenc{U^\omega[..3z]}[..\lambda+1]$.
			Since $\rta{V}{1} \ctprec \rta{U}{1}$, $\lambda +2 \leq 3z$ by Lemma~\ref{lem:equality3z}.
			If $\mpdenc{\rta{U}{1}^\omega[..3z]}[\lambda+1] = \Infty$, $\mpdenc{V^\omega[..3z]}[\lambda+2] \leq \lambda < \lambda + 1=\mpdenc{U^\omega[..3z]}[\lambda+2]$.
			If $\mpdenc{\rta{U}{1}^\omega[..3z]}[\lambda] \neq \Infty$, then $\mpdenc{V^\omega[..3z]}[\lambda+2] = \mpdenc{\rta{V}{1}^\omega[..3z]}[\lambda+1] <  \mpdenc{\rta{U}{1}^\omega[..3z]}[\lambda+1] =\mpdenc{U^\omega[..3z]}[\lambda+2]$.
			Thus, $V \ctprec U$ by Corollary~\ref{cor:prec3z}.
		\end{cs2}\\
		\begin{cs3}
			Assume $e > \pi(V) = \pi(U)$. 
			Then $\mpdenc{V^\omega[..3z]}[..\lambda+1] = \mpdenc{U^\omega[..3z]}[..\lambda+1]$ and $\lambda \leq z$.
			Moreover, $\mpdenc{V^\omega[..3z]}[\lambda+2] = \mpdenc{\rta{V}{1}^\omega[..3z]}[\lambda+1] < \mpdenc{\rta{U}{1}^\omega[..3z]}[\lambda+1] = \mpdenc{U^\omega[..3z]}[\lambda+2]$.
			Hence, $V \ctprec U$ by Corollary~\ref{cor:prec3z}.
		\end{cs3}\\
		Since we exhausted all possible cases for $\pi(V) \geq \min\{e,\pi(U)\}$, the statement follows.
	\end{proof}
	
	\lcpcompute*
	\begin{proof}
		Let $U,V,W\in\extsigma^+$ and $z = \max\{\left|U\right|, \left|V\right|,\left|W\right|\}$.
		Then $\lcplength{\mpdenc{U^\omega[..3z]}}{\mpdenc{W^\omega[..3z]}} = \min\{\lcplength{\mpdenc{U^\omega[..3z]}}{\mpdenc{V^\omega[..3z]}}, \lcplength{\mpdenc{V^\omega[..3z]}}{\mpdenc{W^\omega[..3z]}}\}$ if we have $\mpdenc{U^\omega[..3z]} \leq \mpdenc{V^\omega[..3z]}$ and $\mpdenc{V^\omega[..3z]} \leq \mpdenc{W^\omega[..3z]}$.
		Then the statement follows by Lemma~\ref{lem:equality3z} and the definition of $\lcpinfty{\TT}$.
	\end{proof}
	
	\lcpchangerota*
	\begin{proof}
		If $\pi(W) = \$$ for any $W \in \extsigma^+$, then $\rank{\Infty}{\mpdenc{W}}{\left|W\right|} = 0$.
		It remains to show the cases where both $\pi(V)$ and $\pi(U)$ do not evaluate to $\$$.
		Thus, assume $\$ < \min\{\pi(V),\pi(U)\}$, i.e., $\lcpcount{U}{V} \geq 1$.
		By assumption, $V \not\cteq U$.
		
		\begin{cs1}
			Assume $U \ctprec V$.
			Then $\pi(V) < e$ and $\pi(V) < \pi(U)$ by Lemma~\ref{lem:rotationomegaorder}.
			Consequently, $\mpdenc{V}[\select{\Infty}{\mpdenc{V}}{2}] = \Infty > \select{\Infty}{\mpdenc{V}}{2} -1 =  \mpdenc{U}[\select{\Infty}{\mpdenc{V}}{2}]$, i.e., $\lcpcount{V}{U} = 1$.
		\end{cs1}
		
		\begin{cs2}
			Assume $V \ctprec U$.
			Then $\pi(V) \geq e$ or $\pi(V) \geq \pi(U)$ by Lemma~\ref{lem:rotationomegaorder}.
			
			\begin{cs21}
				Assume $e \geq \pi(U) = \pi(V)$.
				Then $\lcplength{\mpdenc{V}}{\mpdenc{U}} = \lcplength{\mpdenc{\rta{V}{1}}}{\mpdenc{\rta{U}{1}}} + 1$.
				Hence, $\lcpcount{V}{U} = e - \pi(V) + 1$, which evaluates to $1$ if and only if $e = \pi(V) = \pi(U)$.
			\end{cs21}
			
			\begin{cs22}
				Assume $\min\{e,\pi(V)\} > \pi(U)$.
				Then $\mpdenc{V}[\select{\Infty}{\mpdenc{U}}{2}] = \select{\Infty}{\mpdenc{U}}{2} - 1 < \Infty = \mpdenc{U}[\select{\Infty}{\mpdenc{U}}{2}]$.
				Consequently, $\lcpcount{V}{U} = 1$.
			\end{cs22}
			
			\begin{cs23}
				Assume $e \leq \min\{\pi(V),\pi(U)\}$.
				If we have $\lcplength{\mpdenc{\rta{V}{1}}}{\mpdenc{\rta{U}{1}}} = \min\{\left|U\right|,\left|V\right|\}$, then $\lcpcount{V}{U} = \rank{\Infty}{\mpdenc{V}}{\min\{\left|V\right|, \left|U\right|\}}$ = 1.
				Assume $\min\{\left|U\right|,\left|V\right|\} > \lcplength{\mpdenc{\rta{V}{1}}}{\mpdenc{\rta{U}{1}}}$.
				Then there exists some $x \in [1..\min\{\left|U\right|,\left|V\right|\}-1]$ such that $\mpdenc{\rta{V}{1}}[..x-1] = \mpdenc{\rta{U}{1}}[..x-1]$ and $\rta{V}{1}[x] < \rta{U}{1}[x]$.
				If $\rta{U}{1}[x] = \Infty$, then $\mpdenc{U}[x+1] \geq x > \mpdenc{V}[x+1]$.
				If $\rta{U}{1}[x] \neq \Infty$, then $\mpdenc{U}[x+1] = \rta{U}{1}[x] > \rta{V}{1}[x] = \mpdenc{V}[x+1]$.
				Either way, $\lcpcount{V}{U} = \rank{\Infty}{\mpdenc{V}}{x} = 1$.
			\end{cs23}
		\end{cs2}
	\end{proof}

	\clearpage
	\section{Computation of Helper Values}\label{app:lem:next}
	We here describe algorithmic details in how to obtain the claims of \cref{lem:next:statement}.

	\begin{algorithm}
		\caption{
			Computing $c = \cnt{\RT}{V}$, $p = \plcp{\RT}{V}$ and $s = \slcp{\RT}{V}$.
			Here, $\emptyset \neq \RT \subset \Sigma^+$, $\rho = \left|\carr{\RT}\right|$, $V \in \extsigma^+$, and $y = \cnt{\RT}{\rta{V}{1}}$.
}
		\label{alg:next}
		\SetKwFunction{FMain}{next}
		\SetKwProg{Fn}{Function}{:}{}
		\Fn{\FMain{$\pi(V),y,\plcp{\RT}{\rta{V}{1}},\slcp{\RT}{\rta{V}{1}},\ftarr{\RT},\ltarr{\RT},\lcpinfty{\RT}$}}{
		\lIf{$\pi(V) = \$$}{\label{alg:next:specialsymbol}
			\KwRet $0,-1,0$
		}
		\lIf{$\slcp{\RT}{\rta{V}{1}} \geq \pi(V)+1$}{\label{alg:next:cntstart}
			$[\ell..r] \gets \maxi{\lcpinfty{\RT}}{y+1}{\pi(V) + 1}$
		}
		\lElseIf{$\plcp{\RT}{\rta{V}{1}} \geq \pi(V)+1$}{
			$[\ell..r] \gets \maxi{\lcpinfty{\RT}}{y}{\pi(V) + 1}$
		}
		\lElse{
			$[\ell..r] \gets [y+1..y]$
		}
		$c \gets \rangecount{\ltarr{\RT}}{y+1}{r}{\pi(V)+1}{\sigma} + \rangecount{\ltarr{\RT}}{\ell}{y}{\pi(V)}{\sigma}$\;
		\For{$i \gets \min\{\pi(V),\plcp{\RT}{\rta{V}{1}} \}$ \KwTo $1$}{\label{alg:next:forloop}
			$r' \gets \ell - 1$\;
			$[\ell..r] \gets \maxi{\lcpinfty{\RT}}{y}{i}$\;
			$c \gets c + \rangecount{\ltarr{\RT}}{\ell}{r'}{i}{\sigma}$\;\label{alg:next:cntend}
		}
		\lIf{$c = 0$}{\label{alg:next:plcpstart}
			$p \gets -1$
		}
		\lElseIf{$\flmap{\RT}{c} > y$}{
			$p \gets 1$
		}
		\lElseIf{$\flmap{\RT}{c} = y$}{
			$p \gets \plcp{\RT}{\rta{V}{1}} - \pi(V) + 1$
		}
		\lElse{\label{alg:next:plcpend}
			$p \gets \min\{\plcp{\RT}{\rta{V}{1}}, \rnv{\lcpinfty{\RT}}{\flmap{\RT}{c} +1}{y}{-1}\} - \pi(V) + 1$
		}
		\lIf{$c = \rho$}{\label{alg:next:slcpstart}
			$s \gets -1$
		}
		\lElseIf{$\flmap{\RT}{c+1} < y + 1$}{
			$s \gets 1$
		}
		\lElseIf{$\flmap{\RT}{c+1} = y + 1$}{
			$s \gets \slcp{\RT}{\rta{V}{1}} - \pi(V) + 1$
		}
		\lElse{\label{alg:next:slcpend}
			$s \gets \min\{\slcp{\RT}{\rta{V}{1}}, \rnv{\lcpinfty{\RT}}{y +2}{\flmap{\RT}{c+1}}{-1}\} - \pi(V) + 1$
		}
		\KwRet $c, p, s$
		}
	\end{algorithm}
	
		\begin{lem}\label{lem:next}
		Let $\emptyset \neq \RT \subset \Sigma^+$, $\rho = \left|\carr{\RT}\right|$, and $V \in \extsigma^+$.
		Then Algorithm~\ref{alg:next} correctly computes $\cnt{\RT}{V}$, $\plcp{\RT}{V}$ and $\slcp{\RT}{V}$.
	\end{lem}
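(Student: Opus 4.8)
The plan is to mirror the proof of \cref{lem:singlecBWT}: there the reference string $R$ had trivial self-statistics, and here its role is played by the left rotation $\rta{V}{1}$, about which the values $\cnt{\RT}{\rta{V}{1}}$, $\plcp{\RT}{\rta{V}{1}}$ and $\slcp{\RT}{\rta{V}{1}}$ are handed to us and drive a backward-search-like case analysis. First I would dispatch \cref{alg:next:specialsymbol}: if $\pi(V) = \$$ then $V[1] = \$$, so, since $\$$ is smaller than any symbol of $\Sigma$ and $\RT \subset \Sigma^+$ contains no $\$$, we get $V \ctprec D$ for every conjugate $D$ of $\RT$, and moreover $\lcpcount{D}{V} = 0$ because $\mpdenc{V}[1] = \$ \neq \Infty = \mpdenc{D}[1]$. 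Hence $\cnt{\RT}{V} = 0$, $\plcp{\RT}{V} = -1$ and $\slcp{\RT}{V} = \lcpcount{\conj{\RT}{\carr{\RT}[1]}}{V} = 0$, which is the returned triple.

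Assume henceforth $\pi(V) \neq \$$. For the count $c = \cnt{\RT}{V}$, recall that position $i$ of the $\ltarr{\RT}$-indexing corresponds to the unique conjugate $E_i$ of $\RT$ whose left rotation $\rta{E_i}{1}$ is the $i$-th smallest conjugate, with $\ltarr{\RT}[i] = \pi(E_i)$, and that $E_i$ runs over all conjugates of $\RT$ as $i$ runs over $[1..\rho]$. Since $y = \cnt{\RT}{\rta{V}{1}}$, the positions $i \le y$ are exactly those with $\rta{E_i}{1} \ctpeq \rta{V}{1}$ and $i > y$ those with $\rta{V}{1} \ctprec \rta{E_i}{1}$. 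By \cref{lemma:lcpinftycomputation}, $e_i := \lcpcount{\rta{E_i}{1}}{\rta{V}{1}}$ is the minimum of $\lcpinfty{\RT}[k]$ over $k$ strictly between $i$ and $y$, capped by $\plcp{\RT}{\rta{V}{1}}$ (for $i \le y$) or by $\slcp{\RT}{\rta{V}{1}}$ (for $i > y$), so the maximal block around $y$ on which $e_i \ge t$ is exactly $\maxi{\lcpinfty{\RT}}{\cdot}{t}$. Then \cref{lem:rotationomegaorder} turns "$E_i \ctpeq V$?" into a threshold test on $\ltarr{\RT}[i] = \pi(E_i)$: for $i > y$ it reads "$\ltarr{\RT}[i] \ge \pi(V)+1$ and $e_i \ge \pi(V)+1$"; for $i \le y$ with $\rta{E_i}{1} \cteq \rta{V}{1}$ (or $\rta{E_i}{1} \ctprec \rta{V}{1}$ but $e_i \ge \pi(V)$) it reads "$\ltarr{\RT}[i] \ge \pi(V)$"; and for the remaining $i \le y$ it reads "$\ltarr{\RT}[i] \ge \min\{e_i,\pi(V)\} = e_i$". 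Summing these contributions is precisely what the algorithm does: the two initial $\rangecount$ calls on $[y+1..r]$ and $[\ell..y]$ handle the block where $\lcpinfty{\RT} \ge \pi(V)+1$ straddles $y$ (whose shape is dictated by the case split on $\slcp{\RT}{\rta{V}{1}} \ge \pi(V)+1$ versus $\plcp{\RT}{\rta{V}{1}} \ge \pi(V)+1$ versus neither), and the \texttt{for}-loop peels off the nested blocks $\maxi{\lcpinfty{\RT}}{y}{t}$ for $t = \min\{\pi(V),\plcp{\RT}{\rta{V}{1}}\},\dots,1$, counting on each fresh layer the positions with $\ltarr{\RT} \ge t$. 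Thus Line~\ref{alg:next:cntend} leaves $c = \cnt{\RT}{V}$.

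For $p = \plcp{\RT}{V}$ and $s = \slcp{\RT}{V}$: by definition $p = -1$ iff $c = 0$ and $s = -1$ iff $c = \rho$, the first guards. Otherwise the predecessor of $V$ in $\omega$-sorted order is $C^- = \conj{\RT}{\carr{\RT}[c]}$ and $\rta{C^-}{1}$ sits at position $\flmap{\RT}{c}$; we compute $\lcpcount{C^-}{V}$ from \cref{lem:lcpinftyrota}, for which we need the relation between $\rta{C^-}{1}$ and $\rta{V}{1}$ and the value $e = \lcpcount{\rta{C^-}{1}}{\rta{V}{1}}$. If $\flmap{\RT}{c} > y$ then $\rta{V}{1} \ctprec \rta{C^-}{1}$ while $C^- \ctprec V$, so neither of the first two cases of \cref{lem:lcpinftyrota} applies and $\lcpcount{C^-}{V} = 1 = p$. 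If $\flmap{\RT}{c} = y$ then $\rta{C^-}{1}$ is the $y$-th conjugate, $e = \plcp{\RT}{\rta{V}{1}}$, and the middle case of \cref{lem:lcpinftyrota} gives $p = e - \pi(V) + 1$. If $\flmap{\RT}{c} < y$ then $\rta{C^-}{1} \ctprec \conj{\RT}{\carr{\RT}[y]} \ctpeq \rta{V}{1}$, so by \cref{lemma:lcpinftycomputation} $e = \min\{\plcp{\RT}{\rta{V}{1}},\, \rnv{\lcpinfty{\RT}}{\flmap{\RT}{c}+1}{y}{-1}\}$ and again $p = e - \pi(V) + 1$. The successor $C^+ = \conj{\RT}{\carr{\RT}[c+1]}$ and the value $s$ are treated symmetrically, comparing $\flmap{\RT}{c+1}$ with $y+1$ and using $\slcp{\RT}{\rta{V}{1}}$ and $\rnv{\lcpinfty{\RT}}{y+2}{\flmap{\RT}{c+1}}{-1}$.

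The main obstacle is the bookkeeping in the two "middle" cases: establishing that when $\flmap{\RT}{c} \le y$ the pair $(C^-,V)$ really lies in the regime $C^- \ctprec V$ with $\$ \neq \pi(C^-) = \pi(V) < e$ in which the non-trivial branch of \cref{lem:lcpinftyrota} fires (and symmetrically for $(C^+,V)$). This, together with the claim that the nested $\maxi$-blocks exactly realise the level sets of $e_i$ and that the $\rangecount$ sums count precisely the conjugates $\ctpeq V$, rests on the monotonicity of the LF-mapping inside a plateau of equal $\ltarr{\RT}$-values (\cref{cor:lforder}), exactly as in the proof of \cref{lem:singlecBWT}. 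Everything else is a routine verification of the cases of \cref{lem:rotationomegaorder} and \cref{lem:lcpinftyrota} against the three regions "below the plateau", "inside the plateau", and "above the plateau", plus the degenerate situations $\rta{V}{1}[1] = \$$ and empty ranges.
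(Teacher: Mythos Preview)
Your proposal is correct and follows essentially the same approach as the paper: dispatch the $\$$-case, then use \cref{lem:rotationomegaorder} to reduce the count to threshold tests on $\ltarr{\RT}$ over the nested $\maxi$-blocks around $y$, and finally invoke \cref{lem:lcpinftyrota} together with \cref{lemma:lcpinftycomputation} via the FL-mapping for $\plcp{}$ and $\slcp{}$. Your write-up is in fact more explicit than the paper's terse sketch, and you correctly flag the one nontrivial verification, namely that when $\flmap{\RT}{c}\le y$ the predecessor automatically satisfies $\pi(C^-)=\pi(V)<e$ so that the second branch of \cref{lem:lcpinftyrota} fires.
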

	\begin{proof}
		Algorithm~\ref{alg:next} takes $\pi(V)$, $y = \cnt{\RT}{\rta{V}{1}}$, $\plcp{\RT}{\rta{V}{1}}$, $\slcp{\RT}{\rta{V}{1}}$, and the cBWT index of $\RT$ as input.
		If $\pi(V) = \$$, then we return $\cnt{\RT}{V} = 0$, $\plcp{\RT}{V} = -1$ and $\slcp{\RT}{V} = 0$ in Line~\ref{alg:next:specialsymbol} since $\RT \subset \Sigma^+$.
		Thus, assume $\pi(V) > \$$.
		For each $i \in [1..\pi(V)]$, let $J_i = [\ell_i..r_i]$ maximal such that $r_{i} \leq y$ and $\lcpcount{V}{\conj{\RT}{\carr{\RT}[j]}} = i$ for each $j \in J_i$.
		The following statements are due to Lemma~\ref{lem:rotationomegaorder} and $\RT \subset \Sigma^+$.
		\begin{itemize}
			\item If $j \in [y+1..\rho]$, then $\conj{\RT}{\carr{\TT}[\lfmap{\RT}{j}]} \ctprec V$ if and only if $\ltarr{\RT}[j] \geq \pi(V)+1$ and $\lcpcount{\conj{\RT}{\carr{\TT}[i]}}{\rta{V}{j}} \geq \pi(V)+1$.
			\item If $j \in [r_{\pi(V)}+1..y]$, then $\conj{\RT}{\carr{\TT}[\lfmap{\RT}{j}]} \ctprec V$ if and only if $\ltarr{\RT}[j] \geq \pi(V)$.
			\item If $i \in [1..\pi(V)]$ and $j \in J_i$, then $\conj{\RT}{\carr{\TT}[\lfmap{\RT}{j}]} \ctprec V$ if and only if $\ltarr{\TT}[j] \geq i$.
		\end{itemize}
		We apply these results from Line~\ref{alg:next:cntstart} through Line~\ref{alg:next:cntend} to compute $\cnt{\RT}{V}$.
		Leveraging Lemma~\ref{lem:lcpinftyrota} and (the proof of) Lemma~\ref{lemma:lcpinftycomputation}, we then compute $\plcp{\RT}{V}$ and $\slcp{\RT}{V}$ from Line~\ref{alg:next:plcpstart} through Line~\ref{alg:next:plcpend} and from Line~\ref{alg:next:slcpstart} through Line~\ref{alg:next:slcpend}, respectively.
	\end{proof}

	\nextstatement*
	\begin{proof}
		We apply Algorithm~\ref{alg:next}.
		Correctness follows from \cref{lem:next}, and the time complexity is due to the loop of Algorithm~\ref{alg:next} in Line~\ref{alg:next:forloop} and Lemma~\ref{lemma:dynamicstringrmq}.
	\end{proof}

	\section{Properties of the Rotational Cartesian Tree Signature Encoding}\label{app:encodings}
	We show that the rotational Cartesian tree signature encoding is commutative with rotation.
	
	\begin{lem}\label{lem:rtacomm}
		For every $V \in \extsigma^+$ and $i,j \in [1..\left|V\right|]$T, $\rtsenc{\rta{V}{i}}[j] = \rta{\rtsenc{V}}{i}[j]$ .
	\end{lem}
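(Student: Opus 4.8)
The plan is to reduce the claim to a single observation --- that every entry of $\rtsenc{V}$ depends on only one conjugate of $V$ --- after which the statement becomes pure index bookkeeping.

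First I would note that, for every $i \in [1..\left|V\right|]$, the last symbol of $\rta{V}{i} = V[i+1..\left|V\right|]\cdot V[1..i]$ is $V[i]$. Hence $\rtsenc{V}[i]$ can be rewritten without referring to the position $i$ inside $V$: setting $W := \rta{V}{i}$, it equals $W[\left|W\right|]$ when $W[\left|W\right|] = \$$, and otherwise the number of positions $k \in [1..\left|W\right|]$ with $W[k] \geq W[\left|W\right|]$ and $\mpdenc{W}[k] = \Infty$. This is precisely the combinatorial reading of $\rtsenc{}$ stated right after its definition, with $\rta{V}{i}$ playing the role of $V$ and $V[i]$ replaced by the equal symbol $\rta{V}{i}[\left|V\right|]$. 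Writing $\phi(W)$ for this quantity, we have established $\rtsenc{V}[i] = \phi(\rta{V}{i})$ for all $V \in \extsigma^+$ and all $i \in [1..\left|V\right|]$; the $\$$-case is covered as well, since $V[i] = \$$ holds if and only if $\rta{V}{i}[\left|V\right|] = \$$.

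Next I would invoke two elementary facts about left rotation: $\rta{\rta{X}{a}}{b} = \rta{X}{a+b}$, which follows from the recursive definition of left rotation by induction on $b$, and $\rta{X}{a} = \rta{X}{a'}$ whenever $a \equiv a' \pmod{\left|X\right|}$, which follows from $\rta{X}{\left|X\right|} = X$. Now fix $i, j \in [1..\left|V\right|]$ and let $m \in [1..\left|V\right|]$ be the unique integer with $m \equiv i + j \pmod{\left|V\right|}$. Applying the identity $\rtsenc{Y}[k] = \phi(\rta{Y}{k})$ with $Y = \rta{V}{i}$ and $k = j$ yields $\rtsenc{\rta{V}{i}}[j] = \phi(\rta{\rta{V}{i}}{j}) = \phi(\rta{V}{i+j}) = \phi(\rta{V}{m}) = \rtsenc{V}[m]$. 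On the other hand, $\rtsenc{V}$ is a string of length $\left|V\right|$, so straight from the definition of the left rotation of a string, $\rta{\rtsenc{V}}{i}[j] = \rtsenc{V}[m]$ as well. Comparing both expressions gives $\rtsenc{\rta{V}{i}}[j] = \rta{\rtsenc{V}}{i}[j]$, the claim.

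I expect the first step to be the main (indeed only) obstacle: one must check carefully that $\rtsenc{V}[i]$ genuinely factors through the conjugate $\rta{V}{i}$ alone, which hinges on the small identity $V[i] = \rta{V}{i}[\left|V\right|]$. Everything afterwards is a mechanical reduction of $i + j$ modulo $\left|V\right|$ and matching it against the definition of the left rotation of a string, so I would keep that part brief.
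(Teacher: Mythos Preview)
Your proof is correct and follows essentially the same approach as the paper's own argument: both rest on the observation that $\rtsenc{V}[i]$ is determined entirely by the conjugate $\rta{V}{i}$ (via the identity $V[i] = \rta{V}{i}[\left|V\right|]$), combined with the rotation identity $\rta{\rta{V}{i}}{j} = \rta{V}{i+j}$ and index reduction modulo $\left|V\right|$. The only cosmetic difference is that you make the first observation explicit by naming a function $\phi$, whereas the paper keeps it implicit inside a chain of equalities using the rank-difference formula; the substance is identical.
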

	\begin{proof}
		Fix some $V \in \extsigma^+$ with $\lambda = \left|V\right|$ and $i,j \in [1..\lambda]$.
		It is easy to see that the statement holds if $\rta{V}{i}[j] = \$$.
		Thus, assume $\rta{V}{i}[j] \neq \$$.
		If $i+j \mod \lambda \neq 0$, then
		\begin{equation*}
			\begin{split}
				\rtsenc{\rta{V}{i}}[j] &= \rank{\Infty}{\mpdenc{\rta{\rta{V}{i}}{j}}}{\lambda}\\ 
				&- \rank{\Infty}{\mpdenc{\rta{V}{i}[j] \cdot \rta{\rta{V}{i}}{j}}[2..]}{\lambda} \\
				&= \rank{\Infty}{\mpdenc{\rta{V}{i+j}}}{\lambda}\\
				&- \rank{\Infty}{\mpdenc{V[i + j \mod \lambda] \cdot \rta{V}{i+j}}[2..]}{\lambda}\\
				&= \rtsenc{V}[i + j \mod \lambda] \\
				&= \rta{\rtsenc{V}}{i}[j].
			\end{split}
		\end{equation*}
		If $i+j \mod \lambda = 0$, then the statement follows similarly with $\rta{\rtsenc{V}}{i}[j] = \rtsenc{V}[\lambda]$ and $\rta{V}{i}[j] = V[\lambda]$.
	\end{proof}
	By Lemma~\ref{lem:rtacomm} and an examination of the definitions, we arrive at the following statement.
	\begin{cor}\label{cor:rtsencodingmatch}
		Let $V, U \in \extsigma^+$. 
		Then $\rtsenc{V} = \rtsenc{U}$ if and only if $V^2 \ctmatch U^2$.
	\end{cor}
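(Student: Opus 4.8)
The plan is to read the statement off \cref{lem:rotaencodingmatch}, which already gives $V^2 \ctmatch U^2 \iff \rpdenc V = \rpdenc U$, so it suffices to prove $\rtsenc V = \rtsenc U \iff \rpdenc V = \rpdenc U$. First I would dispose of the degenerate case $|V| \neq |U|$: then $\rtsenc V$ and $\rtsenc U$ have different lengths and $\ctree{V^2}$, $\ctree{U^2}$ have $2|V| \neq 2|U|$ nodes, so both sides fail. Hence assume $|V| = |U| =: \lambda$. Two routine ``examination of the definitions'' observations set everything up: (i) the algebraic identity $(\rta V i)^2 = \rta{V^2}{i}$ for $i \in [0..\lambda]$; and (ii) $\rpdenc{\cdot}$ is rotation-equivariant, $\rpdenc{\rta V i} = \rta{\rpdenc V}{i}$, since the cyclic distance back to the previous position with a value $\leq$ the current one depends only on the cyclic word, not on where we start numbering it.

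For the direction $\rpdenc V = \rpdenc U \Rightarrow \rtsenc V = \rtsenc U$: the defining formula of $\rtsenc V[i]$ evaluates $\mpdenc{\cdot}$ only on $\rta V i$, which is the length-$\lambda$ prefix of $(\rta V i)^2$, and on $V[i]\cdot\rta V i$, which is the length-$(\lambda+1)$ prefix of $(\rta V{i-1})^2$. From $\rpdenc V = \rpdenc U$ and rotation-equivariance I get $\rpdenc{\rta V i} = \rpdenc{\rta U i}$ and $\rpdenc{\rta V{i-1}} = \rpdenc{\rta U{i-1}}$, hence $(\rta V i)^2 \ctmatch (\rta U i)^2$ and $(\rta V{i-1})^2 \ctmatch (\rta U{i-1})^2$ by \cref{lem:rotaencodingmatch}, hence the corresponding $\mpdenc{\cdot}$-strings are equal, and by the prefix property $\mpdenc{V}[..k] = \mpdenc{V[..k]}$ their prefixes agree, giving $\mpdenc{\rta V i} = \mpdenc{\rta U i}$ and $\mpdenc{V[i]\cdot\rta V i} = \mpdenc{U[i]\cdot\rta U i}$. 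Since $V^2 \ctmatch U^2$ also forces $V[i] = \$ \iff U[i] = \$$ (the $\$$ is encoded literally in $\mpdenc{\cdot}$), the case split in the definition of $\rtsenc{\cdot}$ agrees on both sides, so $\rtsenc V[i] = \rtsenc U[i]$ for every $i$.

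For the converse $\rtsenc V = \rtsenc U \Rightarrow \rpdenc V = \rpdenc U$ — the main obstacle — the point is that $\rpdenc V$ and $\rtsenc V$ are two encodings of one and the same object, the Cartesian tree of the cyclic word $V^\omega$, related by the classical stack-based Cartesian-tree construction: scanning the positions around the cycle while maintaining the right spine of the tree under construction, $\rtsenc V[i]$ is exactly the number of nodes popped when position $i$ is inserted, and the parent structure the scan produces is precisely $\rpdenc V$; hence running this decoding on the periodic sequence $\rtsenc V[1]\,\rtsenc V[2]\cdots$ recovers $\rpdenc V$ from $\rtsenc V$ and $\lambda$. Here \cref{lem:rtacomm} is what makes the cyclic decoding well-defined: it gives $\rtsenc{\rta V i} = \rta{\rtsenc V}{i}$, so the reconstruction is insensitive to the rotational offset at which one starts the scan, and \cref{lem:sumrtsenc} guarantees that over one period the pops balance the pushes, so the stack returns to its initial configuration. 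The delicate part I expect to spend the most care on is making this cyclic decoding precise and provably inverse to $V \mapsto \rtsenc V$ on length-$\lambda$ strings — in particular handling equal symbols under the by-text-position tie-break and the wrap-around; the non-cyclic instance of this equivalence is classical (Demaine et al.\ \cite{demaine14cartesian}), and \cref{lem:rtacomm} is exactly the bridge that transfers it to the rotational setting. Once both directions are in hand, \cref{lem:rotaencodingmatch} yields the corollary.
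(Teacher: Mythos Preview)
Your proposal is correct and is precisely the ``examination of the definitions'' that the paper's one-sentence proof defers to the reader; the paper offers no further detail beyond citing \cref{lem:rtacomm}, so your decomposition via \cref{lem:rotaencodingmatch} and your stack-based decoding sketch for the backward direction are exactly what has to be filled in.

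Two minor remarks. First, your forward direction can be shortened: since $\rta{V}{i} = V^2[i+1..i+\lambda]$ and $V[i]\cdot\rta{V}{i} = V^2[i..i+\lambda]$ are both substrings of $V^2$, the SCER property of $\ctmatch$ gives $\mpdenc{\rta{V}{i}} = \mpdenc{\rta{U}{i}}$ and $\mpdenc{V[i]\cdot\rta{V}{i}} = \mpdenc{U[i]\cdot\rta{U}{i}}$ directly from $V^2 \ctmatch U^2$, without needing your observation~(ii) or a second appeal to \cref{lem:rotaencodingmatch}. Second, in the backward direction your claim that ``\cref{lem:sumrtsenc} guarantees that over one period the pops balance the pushes'' overstates that lemma, which only asserts $\sum_i \rtsenc{V}[i] \le |V|$; you need equality (for $V$ without $\$$). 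Equality does hold and follows from the telescoping identity $\rtsenc{V}[i] = a_i - a_{i-1} + 1$, where $a_i$ is the number of $\Infty$'s in $\mpdenc{\rta{V}{i}}$ and $a_0 = a_{|V|}$ since $\rta{V}{0} = \rta{V}{|V|}$ --- but you should prove this separately rather than cite \cref{lem:sumrtsenc}.
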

	Then Lemma~\ref{lem:rotaencodingmatch} implies that the rotational modified parent distance encoding can be substituted for the rotational tree signature encoding in the definition of the $\omega$-preorder and the permutation $\textup{prev}_\TT$.
	The following statement is another consequence of Lemma~\ref{lem:rtacomm}.
	\begin{cor}\label{cor:definitionltarr}
		Let $\emptyset \neq \TT \subset \extsigma^+$ and $n$ the accumulated length of all texts in $\TT$.
		Then $\ftarr{\TT}[i] = \pi(\conj{\TT}{\carr{\TT}[i]})$ and $\ltarr{\TT}[i] = \rtsenc{\conj{\TT}{\carr{\TT}[i]}}[\left|\conj{\TT}{\carr{\TT}[i]}\right|]$ for each $i \in [1..n]$.
	\end{cor}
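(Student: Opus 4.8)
The plan is to read off both identities directly from the definitions of $\ftarr{\TT}$, $\ltarr{\TT}$, $\lfarr{\TT}$ and $\prev{\TT}$, using Lemma~\ref{lem:rtacomm} (commutativity of $\rtsenc{\cdot}$ with rotation) and Corollary~\ref{cor:rtsencodingmatch} to pass between $\omega$-equal conjugates of the same text. The $\ftarr{\TT}$ identity is immediate: by definition $\ftarr{\TT}[\lfmap{\TT}{i}] = \ltarr{\TT}[i] = \pi(\conj{\TT}{\carr{\TT}[\lfmap{\TT}{i}]})$ for each $i \in [1..n]$, and since $\lfarr{\TT}$ is a permutation of $[1..n]$, replacing $i$ by $\flmap{\TT}{i}$ gives $\ftarr{\TT}[i] = \pi(\conj{\TT}{\carr{\TT}[i]})$.

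For $\ltarr{\TT}$, fix $i$, set $j := \carr{\TT}[i]$, and let $T_h$ be the text whose index range contains $j$, so that $W := \conj{\TT}{j} = \rta{T_h}{t}$ with $\lambda := |T_h|$ and $0 \le t < \lambda$. As $\carr{\TT}$ and $\icarr{\TT}$ are inverse, $\carr{\TT}[\lfmap{\TT}{i}] = \prev{\TT}{j}$, hence $\ltarr{\TT}[i] = \pi(\conj{\TT}{\prev{\TT}{j}})$; put $V := \conj{\TT}{\prev{\TT}{j}}$. The substantive step is to show that $\prev{\TT}{j}$ stays inside the index range of $T_h$ (so $V$ is a conjugate of $T_h$ and $|V| = \lambda$) and that $\rta{V}{1} \cteq W$. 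If $W \not\cteq T_h$ then $W \neq \rta{T_h}{0}$, i.e.\ $t \ge 1$, so $\prev{\TT}{j} = j-1$ still lies in $T_h$'s range and $\rta{V}{1} = \rta{T_h}{t} = W$. If $W \cteq T_h$, let $p := |\wurz{\rpdenc{T_h}}|$ be the integer added by $\prev{\TT}$; by the remark following Corollary~\ref{cor:rtsencodingmatch} this $p$ is the least positive circular period of $\rtsenc{T_h}$, so $p \mid \lambda$ and, by Lemma~\ref{lem:rtacomm} together with Corollary~\ref{cor:rtsencodingmatch} and Lemma~\ref{lem:equality3z}, a rotation $\rta{T_h}{u}$ satisfies $\rta{T_h}{u} \cteq T_h$ iff $p \mid u$. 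Thus $t = mp$ with $(m+1)p \le \lambda$, $\prev{\TT}{j} = j-1+p$ still lies in $T_h$'s range, and $\rta{V}{1} = \rta{T_h}{(m+1)p} \cteq T_h \cteq W$. In either case $\rta{V}{1}$ and $W$ are $\omega$-equal conjugates of the same text, hence of equal length, so $\rta{V}{1}^2 \ctmatch W^2$ by Lemma~\ref{lem:equality3z} and Lemma~\ref{lem:encctmatch}, whence $\rtsenc{\rta{V}{1}} = \rtsenc{W}$ by Corollary~\ref{cor:rtsencodingmatch}, which combined with Lemma~\ref{lem:rtacomm} gives $\rta{\rtsenc{V}}{1} = \rtsenc{W}$. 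Reading off the last coordinate of this length-$\lambda$ string, $\ltarr{\TT}[i] = \pi(V) = \rtsenc{V}[1] = \rta{\rtsenc{V}}{1}[\lambda] = \rtsenc{W}[\lambda] = \rtsenc{\conj{\TT}{\carr{\TT}[i]}}\bigl[\,\lvert\conj{\TT}{\carr{\TT}[i]}\rvert\,\bigr]$.

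I expect the main obstacle to be the case $W \cteq T_h$: one must verify that $\prev{\TT}$'s jump by $|\wurz{\rpdenc{T_h}}|$ does not leave $T_h$'s index range and that one further rotation lands on a conjugate $\cteq T_h$. Both hinge on $|\wurz{\rpdenc{T_h}}|$ dividing $|T_h|$ and on $\rtsenc{T_h}$ being circularly periodic with that period, which I would extract from the interchangeability of $\rpdenc{\cdot}$ and $\rtsenc{\cdot}$ together with Lemma~\ref{lem:rtacomm}; the remaining steps are routine bookkeeping with the definitions.
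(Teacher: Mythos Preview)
Your proof is correct and follows the same route the paper intends: the paper's own argument for this corollary is merely the one-line remark that it is ``another consequence of Lemma~\ref{lem:rtacomm}'' (together with the preceding substitution remark), and you have supplied precisely the details that sentence elides---the $\ftarr{\TT}$ identity via the permutation $\lfarr{\TT}$, and the $\ltarr{\TT}$ identity via the case analysis on $\prev{\TT}$, the invariance $\rtsenc{\rta{V}{1}}=\rtsenc{W}$ from Corollary~\ref{cor:rtsencodingmatch}, and the final index shift using Lemma~\ref{lem:rtacomm}. Your handling of the delicate case $W\cteq T_h$ (showing $p\mid\lambda$, that the jump stays within $T_h$'s range, and that one further rotation lands in the $\cteq$-class of $T_h$) is exactly the content hidden behind the paper's substitution remark.
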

	Hence, we can define both $\ftarr{\TT}$ and $\ltarr{\TT}$ from the conjugate array and the $\rtsenc{..}$-encoded conjugates considered in $\TT$, and then define the LF-mapping with the statement of Corollary~\ref{cor:fastlfmapcompute}.

	\section{Locating Occurrences}\label{app:locate}
	
	In case we want to report the locations of occurrences, 
	we can construct $\carr{\TT}$ alongside the cBWT index and store it in $O(n + d\lg n)$ bits of space by means of sampling each $(\lg n)$-th text position~\cite[Sect.~3.2]{ferragina00fmindex}, where the latter term is due to the $d$ cycles of the LF-mapping corresponding to the $\rpdenc{..}$-encoded roots of the input texts, which require at least a sample each.
	We maintain an auxiliary bit string indicating the sampled locations in lex-order, and represent the sampled conjugate array, its auxiliary bit string, $\ftarr{\TT}$ and a copy of $\ltarr{\TT}$ by the data structure of the following result.
	\begin{lem}[\cite{munro15dynamic}]\label{lemma:dynamicstringsimple}
		A dynamic string of length $n$ over $[0..\sigma]$ with $\sigma \leq n^{O(1)}$ can be stored in a data structure occupying $O(n \lg \sigma)$ bits of space, supporting insertion, deletion and the queries \textup{access}, \textup{rank} and \textup{select} in $O(\frac{\lg n}{\lg\lg n})$ time.
	\end{lem}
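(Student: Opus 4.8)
This lemma is quoted verbatim from Munro, Nekrich and Vitter~\cite{munro15dynamic}, so the honest proof is to cite \cite{munro15dynamic} and stop. If a self-contained argument were wanted in this paper, I would reconstruct their construction along the following lines. The skeleton is a \emph{weight-balanced B-tree} over the $n$ text positions with branching factor $\Theta(\lg^{\varepsilon} n)$ for a small constant $\varepsilon \in (0,1)$. Such a tree has height $O(\log_{\lg^{\varepsilon} n} n) = O(\lg n/\lg\lg n)$; each leaf holds a block of $\Theta(\lg^{2} n)$ consecutive symbols, and the standard split/merge rebalancing of weight-balanced B-trees touches only $O(\lg n/\lg\lg n)$ nodes per update, so \textup{insert} and \textup{delete} stay within the claimed bound (possibly amortized). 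Every internal node records the subtree sizes of its children, so \textup{access}$(i)$ is just a root-to-leaf descent to the $i$-th leaf followed by a constant-time in-block lookup, again $O(\lg n/\lg\lg n)$ time.

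\textbf{Rank and select.} The substantial part is supporting \textup{rank} and \textup{select} for an alphabet as large as $n^{O(1)}$ \emph{without} an extra $\lg\sigma$ factor. First I would make the in-block problem constant time: a block of $\Theta(\lg^{2} n)$ symbols over $[0..\sigma]$ carries a micro-index, rebuilt in $O(1)$ amortized time on each block update by precomputed tables and word parallelism (exploiting the $\Omega(\lg n)$-bit word), that answers block-restricted \textup{rank} and \textup{select} in $O(1)$ time. Across blocks, each internal node stores, only for the symbols actually occurring in its subtree, a dynamic predecessor structure mapping a symbol $c$ to a record that holds the vector of counts of $c$ over the node's children together with its prefix sums. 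A query $\textup{rank}_c(i)$ then walks root-to-leaf: at each node it locates $c$ by a predecessor query and adds the appropriate prefix sum; $\textup{select}$ is symmetric, descending by binary/predecessor search on these prefix sums. With the predecessor structures supporting queries and updates in $O(\lg n/\lg\lg n)$ time (a $y$-fast-trie- or fusion-tree-style structure over the word-RAM) and the count records maintained under the $O(\lg n/\lg\lg n)$ structural changes per update, every operation stays within $O(\lg n/\lg\lg n)$. For the space, the subtree-size fields, the predecessor maps, and the count records over all nodes sum to $O(n\lg\sigma)$ bits.

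\textbf{Main obstacle.} The delicate point — the one I would genuinely defer to \cite{munro15dynamic} rather than redo — is making the per-level work for \textup{rank}/\textup{select} independent of $\sigma$: a node of degree $\lg^{\varepsilon} n$ naively needs one counter per child per occurring symbol, and updating all affected counters along an insertion path while still answering the symbol-to-record lookups in $o(\lg\lg n)$ amortized time per level requires the careful bookkeeping of that paper (amortizing counter updates against block rebuilds, batching, and hashing the symbol-to-record maps). Everything else here — the tree skeleton, the weight-balance analysis, and the space accounting — is routine, so in the paper I would simply invoke \cref{lemma:dynamicstringsimple} as \cite{munro15dynamic}.
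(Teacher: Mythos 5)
Your proposal is correct and matches the paper exactly: \cref{lemma:dynamicstringsimple} is stated as a cited result from \cite{munro15dynamic} with no proof given, and citing that reference is all the paper does. Your reconstruction sketch is reasonable extra background, but it is not part of the paper's argument and need not be.
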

	Then accessing the sampled conjugate array, checking if a position is sampled, and computing the LF- and FL-mapping takes $O(\frac{\lg n}{\lg\lg n})$ time.
	Consequently, we can answer locate queries in $O(m \frac{\lg \sigma \lg n}{\lg\lg n} + \mathit{occ} \frac{\lg^2 n}{\lg\lg n})$ time, where $m$ is the pattern length and $\mathit{occ}$ the number of occurrences.

	\section{Showcase Backward Search}\label{app:examplebackward}
	
	We give an example of the backward search for $P= \mathtt{375} \in \Sigma^+$ in our running example $\TT = \{\mathtt{512},\mathtt{5363},\mathtt{4478}\}$ leveraging the cBWT index of $\TT$ presented in Figure~\ref{fig:cbwt} and Figure~\ref{fig:cbwtfull}.
	
	Initially, $[\ell..r] = \crange{\TT}{P[4..]} = \crange{\TT}{\varepsilon} = [1..11]$, $h = \pi(P[3..]\cdot\$) = 0$, and $e = \rank{\Infty}{\mpdenc{P[3..]}}{3-3+1} = 1$, which means we enter Line~\ref{alg:crangeupd:2start} of Algorithm~\ref{alg:crangeupd} for the conjugate range update.
	We compute $c = \rangecount{\ltarr{\TT}}{1}{11}{0}{9} = 11$, $v = \rnv{\ltarr{\TT}}{1}{11}{-1} = 0$, $x = \select{0}{\ltarr{\TT}}{\rank{0}{\ltarr{\TT}}{11}} = 5$, $[\ell''..r''] = \maxi{\lcpinfty{\TT}}{5}{0+1} = [1..11]$, $y = \rangecount{\ltarr{\TT}}{1}{5-1}{0}{9} + \rangecount{\ltarr{\TT}}{5+1}{11}{0}{9} =  4 + 6 = 10$, $r' = \lfmap{\TT}{5} + 11 - (10 +1) = 11 + 11 - 11 = 11$, and $\ell' = 11 - 11 + 1 = 1$.
	Thus, we obtain $[\ell..r] = \crange{\TT}{P[3..]} = [1..11]$.
	
	Next, we compute $h = \pi(P[2..]\cdot\$) = 0$ and $e = \rank{\Infty}{\mpdenc{P[2..]}}{3-2+1} = 2$, which means we enter Line~\ref{alg:crangeupd:1start} of Algorithm~\ref{alg:crangeupd} for the conjugate range update.
	We compute $c = \rangecount{\ltarr{\TT}}{1}{11}{0}{0} = 4$, $r' = \lfmap{\TT}{\select{0}{\ltarr{\TT}}{11}} = \lfmap{\TT}{5} = 11$, and $\ell' = 11 - 4 + 1 = 8$, i.e., $[\ell..r] = \crange{\TT}{P[2..]} = [8..11]$.
	
	Last, we compute $h = \pi(P[1..]\cdot\$) = 2$ and $e = \rank{\Infty}{\mpdenc{P[1..]}}{3-3+1} = 1$, which means we enter Line~\ref{alg:crangeupd:2start} of Algorithm~\ref{alg:crangeupd} for the conjugate range update.
	We compute $c = \rangecount{\ltarr{\TT}}{8}{11}{2}{9} = 2$, $v = \rnv{\ltarr{\TT}}{8}{11}{1} = 2$, $x = \select{2}{\ltarr{\TT}}{\rank{2}{\ltarr{\TT}}{11}} = 11$, $[\ell''..r''] = \maxi{\lcpinfty{\TT}}{11}{3} = [11..11]$, $y = \rangecount{\ltarr{\TT}}{8}{10}{2}{9} + \rangecount{\ltarr{\TT}}{12}{11}{2}{9} = 1 + 0 = 1$, $r' = \lfmap{\TT}{11} + 2 - (1 + 1) = 5$, and $\ell' = 5 -2 + 1 = 4$, i.e., $\crange{\TT}{P} = \crange{\TT}{P[1..]} = [4..5]$.
	
	\section{Showcase cBWT Index Extension}\label{app:exampleextension}
	
	We extend the cBWT index of our running example $\TT$ by $S = \mathtt{73152}$ to index $\ST = \TT \cup \{S\}$.
	We have $z = \max\{\left|V\right| \mid V \in \ST\} = \left|S\right| = 5$ and $\crange{\TT}{S^3} = \emptyset$.
	For the extension, we follow the proof of Lemma~\ref{lem:cBWTextend}.
	The intermediate and final results are shown in Figure~\ref{figure:extension}.
	\begin{figure}
		\begin{subfigure}{\textwidth}
			\centering
			\begin{tabular}{|r||c|c|c|c|} 
				\hline
				$i$ & $\pi(\rta{V}{i})$ & $\cnt{\TT}{\rta{V}{i}}$ & $\plcp{\TT}{\rta{V}{i}}$ & $\slcp{\TT}{\rta{V}{i}}$\\
				\hline
				20 & $\mathtt{\$}$ & $\mathtt{0}$ & $\mathtt{-1}$ & $\mathtt{0}$ \\
				19 & $\mathtt{0}$ & $\mathtt{0}$ & $\mathtt{-1}$ & $\mathtt{1}$ \\
				18 & $\mathtt{0}$ & $\mathtt{7}$ & $\mathtt{1}$ & $\mathtt{2}$ \\
				17 & $\mathtt{2}$ & $\mathtt{3}$ & $\mathtt{1}$ & $\mathtt{1}$ \\
				16 & $\mathtt{0}$ & $\mathtt{9}$ & $\mathtt{2}$ & $\mathtt{2}$ \\
				15 & $\mathtt{0}$ & $\mathtt{11}$ & $\mathtt{2}$ & $\mathtt{-1}$ \\
				14 & $\mathtt{2}$ & $\mathtt{5}$ & $\mathtt{1}$ & $\mathtt{1}$ \\
				13 & $\mathtt{0}$ & $\mathtt{11}$ & $\mathtt{2}$ & $\mathtt{-1}$ \\
				12 & $\mathtt{3}$ & $\mathtt{5}$ & $\mathtt{1}$ & $\mathtt{1}$ \\
				11 & $\mathtt{0}$ & $\mathtt{11}$ & $\mathtt{2}$ & $\mathtt{-1}$ \\
				10 & $\mathtt{0}$ & $\mathtt{11}$ & $\mathtt{2}$ & $\mathtt{-1}$ \\
				9 & $\mathtt{2}$ & $\mathtt{5}$ & $\mathtt{1}$ & $\mathtt{1}$ \\
				8 & $\mathtt{0}$ & $\mathtt{11}$ & $\mathtt{2}$ & $\mathtt{-1}$ \\
				7 & $\mathtt{3}$ & $\mathtt{5}$ & $\mathtt{1}$ & $\mathtt{1}$ \\
				6 & $\mathtt{0}$ & $\mathtt{11}$ & $\mathtt{2}$ & $\mathtt{-1}$ \\
				\hline
				5 & $\mathtt{0}$ & $\mathtt{11}$ & $\mathtt{2}$ & $\mathtt{-1}$ \\
				4 & $\mathtt{2}$ & $\mathtt{5}$ & $\mathtt{1}$ & $\mathtt{1}$ \\
				3 & $\mathtt{0}$ & $\mathtt{11}$ & $\mathtt{2}$ & $\mathtt{-1}$ \\
				2 & $\mathtt{3}$ & $\mathtt{5}$ & $\mathtt{1}$ & $\mathtt{1}$ \\
				1 & $\mathtt{0}$ & $\mathtt{11}$ & $\mathtt{2}$ & $\mathtt{-1}$ \\
				\hline
			\end{tabular}
			\caption{Helper values for $V = S^4\$$.}
			\vspace*{2mm}
		\end{subfigure}
	
		\begin{subfigure}{\textwidth}
			\centering
			\begin{tabular}{|r||c|c|c|c|c|} 
				\hline
				$i$ & $\ftarr{\{S\}}[i]$ & $\ltarr{\{S\}}[i]$ &$\lcpinfty{\{S\}}[i]$  & $\plcp{\TT}{\conj{\{S\}}{\carr{\{S\}}[i]}}$ & $\slcp{\TT}{\conj{\{S\}}{\carr{\{S\}}[i]}}$\\
				\hline
				1 & $\mathtt{3}$ & $\mathtt{0}$ & $\mathtt{0}$ & $\mathtt{1}$ & $\mathtt{1}$\\
				2 & $\mathtt{2}$ & $\mathtt{0}$ & $\mathtt{1}$ & $\mathtt{1}$ & $\mathtt{1}$\\
				3 & $\mathtt{0}$ & $\mathtt{0}$ & $\mathtt{1}$ & $\mathtt{2}$ & $\mathtt{-1}$\\
				4 & $\mathtt{0}$ & $\mathtt{3}$ & $\mathtt{2}$ & $\mathtt{2}$ & $\mathtt{-1}$\\
				5 & $\mathtt{0}$ & $\mathtt{2}$ & $\mathtt{2}$ & $\mathtt{2}$ & $\mathtt{-1}$\\
				\hline
			\end{tabular}
			\label{test}
			\caption{The cBWT index of $\{S\}$ and helper values for $S$ in lex-order.}
			\vspace*{2mm}
		\end{subfigure}
		\begin{subfigure}{\textwidth}
			\centering
			\begin{tabular}{|r||c|c|c|c|} 
				\hline
				$i$ & $\ftarr{\ST}[i]$ & $\ltarr{\ST}[i]$ &$\lcpinfty{\ST}[i]$ & $E[i]$\\
				\hline
				1 & $\mathtt{1}$ & $\mathtt{0}$ & $\mathtt{0}$ & $\mathtt{0}$\\
				2 & $\mathtt{2}$ & $\mathtt{1}$ & $\mathtt{1}$ & $\mathtt{0}$\\
				3 & $\mathtt{2}$ & $\mathtt{0}$ & $\mathtt{1}$ & $\mathtt{0}$\\
				4 & $\mathtt{2}$ & $\mathtt{0}$ & $\mathtt{1}$ & $\mathtt{0}$\\
				5 & $\mathtt{2}$ & $\mathtt{0}$ & $\mathtt{1}$ & $\mathtt{0}$\\
				6 & $\mathtt{3}$ & $\mathtt{0}$ & $\mathtt{1}$ & $\mathtt{1}$\\
				7 & $\mathtt{2}$ & $\mathtt{0}$ & $\mathtt{1}$ & $\mathtt{1}$\\
				8 & $\mathtt{1}$ & $\mathtt{2}$ & $\mathtt{1}$ & $\mathtt{0}$\\
				\hline
			\end{tabular}
			\begin{tabular}{|r||c|c|c|c|} 
				\hline
				$i$ & $\ftarr{\ST}[i]$ & $\ltarr{\ST}[i]$ &$\lcpinfty{\ST}[i]$ & $E[i]$\\
				\hline
				9 & $\mathtt{1}$ & $\mathtt{2}$ & $\mathtt{1}$ & $\mathtt{0}$\\
				10 & $\mathtt{0}$ & $\mathtt{1}$ & $\mathtt{1}$ & $\mathtt{0}$\\
				11 & $\mathtt{0}$ & $\mathtt{1}$ & $\mathtt{2}$ & $\mathtt{0}$\\
				12 & $\mathtt{0}$ & $\mathtt{2}$ & $\mathtt{2}$ & $\mathtt{0}$\\
				13 & $\mathtt{0}$ & $\mathtt{2}$ & $\mathtt{2}$ & $\mathtt{0}$\\
				14 & $\mathtt{0}$ & $\mathtt{0}$ & $\mathtt{2}$ & $\mathtt{1}$\\
				15 & $\mathtt{0}$ & $\mathtt{3}$ & $\mathtt{2}$ & $\mathtt{1}$\\
				16 & $\mathtt{0}$ & $\mathtt{2}$ & $\mathtt{2}$ & $\mathtt{1}$\\
				\hline
			\end{tabular}
			\caption{The augmented cBWT index of $\ST$ before zeroing $E$.}
			\vspace*{2mm}
		\end{subfigure}
		\caption{Extending the cBWT index of the running example $\TT$ presented in Figure~\ref{fig:cbwt} and Figure~\ref{fig:cbwtfull} to index $\ST = \TT \cup \{S\}$, where $S = \mathtt{73152}$.}
		\label{figure:extension}
	\end{figure}

	\begin{landscape}
		\section{Full cBWT Example}\label{app:fullexample}
		\begin{figure}[h!]
		\centering
		\begin{tabular}{|c||l|l|c||c|l|c|c|l|c|c|c|l|} 
			\hline
			$i$ & $\conj{\TT}{i}$ & $\mpdenc{\conj{\TT}{i}^\omega[..12]}$ & $\icarr{\TT}[i]$ & $\carr{\TT}[i]$ & $\conj{\TT}{\carr{\TT}[i]}$ & $\flarr{\TT}[i]$ & $\ftarr{\TT}[i]$ & $\rtsenc{\conj{\TT}{\carr{\TT}[i]}}$ & $\ltarr{\TT}[i]$ & $\lfarr{\TT}[i]$ & $\lcpinfty{\TT}[i]$ & $\mpdenc{\conj{\TT}{\carr{\TT}[i]}^\omega[..12]}$\\
			\hline
			1 & $\mathtt{512}$ & $\mathtt{\Infty\Infty1131131131}$ & $\mathtt{9}$ & $\mathtt{8}$ & $\mathtt{4478}$ & $\mathtt{2}$ & $\mathtt{1}$ & $\mathtt{1210}$ & $\mathtt{0}$ & $\mathtt{8}$ & $\mathtt{0}$ & $\mathtt{\Infty11131113111}$ \\
			2 & $\mathtt{125}$ & $\mathtt{\Infty11311311311}$ & $\mathtt{3}$ & $\mathtt{9}$ & $\mathtt{4784}$ & $\mathtt{6}$ & $\mathtt{2}$ & $\mathtt{2101}$ & $\mathtt{1}$ & $\mathtt{1}$ & $\mathtt{1}$ & $\mathtt{\Infty11311131113}$ \\
			3 & $\mathtt{251}$ & $\mathtt{\Infty1\Infty113113113}$ & $\mathtt{7}$ & $\mathtt{2}$ & $\mathtt{125}$ & $\mathtt{7}$ & $\mathtt{2}$ & $\mathtt{210}$ & $\mathtt{0}$ & $\mathtt{9}$ & $\mathtt{1}$ & $\mathtt{\Infty11311311311}$ \\
			4 & $\mathtt{5363}$ & $\mathtt{\Infty\Infty1212121212}$ & $\mathtt{10}$ & $\mathtt{5}$ & $\mathtt{3635}$ & $\mathtt{10}$ & $\mathtt{2}$ & $\mathtt{2020}$ & $\mathtt{0}$ & $\mathtt{10}$ & $\mathtt{1}$ & $\mathtt{\Infty12121212121}$ \\
			5 & $\mathtt{3635}$ & $\mathtt{\Infty12121212121}$ & $\mathtt{4}$ & $\mathtt{7}$ & $\mathtt{3536}$ & $\mathtt{11}$ & $\mathtt{2}$ & $\mathtt{2020}$ & $\mathtt{0}$ & $\mathtt{11}$ & $\mathtt{1}$ & $\mathtt{\Infty12121212121}$ \\
			6 & $\mathtt{6353}$ & $\mathtt{\Infty\Infty1212121212}$ & $\mathtt{11}$ & $\mathtt{10}$ & $\mathtt{7844}$ & $\mathtt{8}$ & $\mathtt{1}$ & $\mathtt{1012}$ & $\mathtt{2}$ & $\mathtt{2}$ & $\mathtt{1}$ & $\mathtt{\Infty1\Infty111311131}$ \\
			7 & $\mathtt{3536}$ & $\mathtt{\Infty12121212121}$ & $\mathtt{5}$ & $\mathtt{3}$ & $\mathtt{251}$ & $\mathtt{9}$ & $\mathtt{1}$ & $\mathtt{102}$ & $\mathtt{2}$ & $\mathtt{3}$ & $\mathtt{1}$ & $\mathtt{\Infty1\Infty113113113}$ \\
			8 & $\mathtt{4478}$ & $\mathtt{\Infty11131113111}$ & $\mathtt{1}$ & $\mathtt{11}$ & $\mathtt{8447}$ & $\mathtt{1}$ & $\mathtt{0}$ & $\mathtt{0121}$ & $\mathtt{1}$ & $\mathtt{6}$ & $\mathtt{1}$ & $\mathtt{\Infty\Infty1113111311}$ \\
			9 & $\mathtt{4784}$ & $\mathtt{\Infty11311131113}$ & $\mathtt{2}$ & $\mathtt{1}$ & $\mathtt{512}$ & $\mathtt{3}$ & $\mathtt{0}$ & $\mathtt{021}$ & $\mathtt{1}$ & $\mathtt{7}$ & $\mathtt{2}$ & $\mathtt{\Infty\Infty1131131131}$ \\
			10 & $\mathtt{7844}$ & $\mathtt{\Infty1\Infty111311131}$ & $\mathtt{6}$ & $\mathtt{4}$ & $\mathtt{5363}$ & $\mathtt{4}$ & $\mathtt{0}$ & $\mathtt{0202}$ & $\mathtt{2}$ & $\mathtt{4}$ & $\mathtt{2}$ & $\mathtt{\Infty\Infty1212121212}$ \\
			11 & $\mathtt{8447}$ & $\mathtt{\Infty1\Infty111311131}$ & $\mathtt{8}$ & $\mathtt{6}$ & $\mathtt{6353}$ & $\mathtt{5}$ & $\mathtt{0}$ & $\mathtt{0202}$ & $\mathtt{2}$ & $\mathtt{5}$ & $\mathtt{2}$ & $\mathtt{\Infty\Infty1212121212}$ \\

			\hline
		\end{tabular}
		\caption{Data structures of our running example $\TT = \{\mathtt{512},\mathtt{5363},\mathtt{4478}\}$. 
		}
		\label{fig:cbwtfull}
	\end{figure}
	\end{landscape}
\end{document}